\newcommand{\Trule}{\rule{0pt}{3ex}}
\newcommand{\Brule}{\rule[-1.5ex]{0pt}{0pt}}
\newcommand{\optprobA}[3]{
\begin{center}
\begin{tabularx}{\textwidth}{|l X|}
	\hline
	\multicolumn{2}{|c|}{#1\Trule} \\
	\textbf{Input:\ }&{#2}\\
	\textbf{Solution:\ }&{#3\Brule}\\
	\hline
\end{tabularx}
\end{center}
}
\newcommand{\domSet}{\textsc{Dominating Set}\xspace}
\newcommand{\setCover}{\textsc{Set Cover}\xspace}
\newcommand{\unwDirST}{\textsc{Unweighted Directed Steiner Tree}\xspace}
\newcommand{\dirST}{\textsc{Directed Steiner Tree}\xspace}
\newcommand{\unwST}{\textsc{Unweighted Steiner Tree}\xspace}
\newcommand{\ST}{\textsc{Steiner Tree}\xspace}
\newcommand{\SF}{\textsc{Steiner Forest}\xspace}
\newcommand{\PCST}{\textsc{Prize Collecting Steiner Tree}\xspace}
\newcommand{\FPT}{\ensuremath{\mathsf{FPT}}\xspace}
\newcommand{\EPAS}{\ensuremath{\mathsf{EPAS}}\xspace}
\newcommand{\PSAKS}{\ensuremath{\mathsf{PSAKS}}\xspace}
\newcommand{\W}[1]{\ensuremath{\mathsf{W[#1]}}\xspace}
\newcommand{\NP}{\ensuremath{\mathsf{NP}}\xspace}
\newcommand{\coNPp}{\ensuremath{\mathsf{coNP/Poly}}}
\newcommand{\ZTIME}{\ensuremath{\mathsf{ZTIME}}\xspace}
\newcommand{\Px}{\ensuremath{\mathsf{P}}\xspace}
\newcommand{\APX}{\ensuremath{\mathsf{APX}}\xspace}
\newcommand{\eps}{\ensuremath{\varepsilon}}
\newcommand{\bigO}[1]{\ensuremath{\operatorname{O}\left(#1\right)}}
\newcommand{\polyn}{\ensuremath{\cdot n^{\bigO{1}}}}
\newenvironment{reductionRule}[1]
  {\innerReductionRule}
  {\endinnerReductionRule}
\newcommand{\ExtNei}[1]{N^{#1}_{\textrm{Ext}}}
\newcommand{\hy}{\hbox{-}\nobreak\hskip0pt}
\definecolor{darkblue}{rgb}{0,0,0.45}
\definecolor{darkred}{rgb}{0.6,0,0}
\definecolor{darkgreen}{rgb}{0.13,0.5,0}
\theoremstyle{plain}
\newtheorem{thm}{Theorem}%
\newaliascnt{obs}{thm}
\newaliascnt{lem}{thm}
\newaliascnt{crl}{thm}
\newaliascnt{dfn}{thm}
\newaliascnt{claim}{thm}
\newaliascnt{prop}{thm}
\newaliascnt{hyp}{thm}
\newtheorem{obs}[obs]{Observation}
\newtheorem{lem}[lem]{Lemma}
\newtheorem{crl}[crl]{Corollary}
\theoremstyle{definition}
\newtheorem{dfn}[dfn]{Definition}
\crefname{thm}{Theorem}{Theorems}
\crefname{lem}{Lemma}{Lemmas}
\newcommand{\R}{\mathbb{R}}
\title{Parameterized Approximation Schemes for Steiner Trees with Small Number of Steiner Vertices
\thanks{Submitted to the editors \today.
The research was done while the authors were at Charles University.
An extended abstract of this manuscript appeared at STACS 2018~\cite{STACS}.
\funding{This work was partially supported by the project SVV--2017--260452.
T.~Masařík was supported by project 17-09142S of GAČR.
A.E. Feldmann was supported by the Czech Science Foundation GA{\v C}R (grant \#17-10090Y), and by the Center for Foundations of Modern Computer Science (Charles Univ.\ project UNCE/SCI/004).
A.E.~Feldmann, T.~Masařík, T.~Toufar, and P.~Veselý were supported by the project GAUK 1514217.
D. Knop is supported by the OP VVV MEYS funded project CZ.02.1.01/0.0/0.0/16\_019/0000765 ``Research Center for Informatics''.
P.~Veselý was supported by European Research Council grant ERC-2014-CoG 647557.
The research leading to these results has received funding from the European Research Council under the European Union’s Seventh Framework Programme (FP/2007-2013) / ERC Grant Agreement n. 616787.
}}}
\author{
Pavel Dvo{\v{r}}{\'{a}}k\thanks{Computer Science Institute, Charles University, Prague
  (\email{koblich@iuuk.mff.cuni.cz}, \email{toufar@iuuk.mff.cuni.cz}).
  }
\and
Andreas E. Feldmann\thanks{Department of Applied Mathematics, Charles University, Prague, Czech Republic
  (\email{feldmann.a.e@gmail.com}).}
\and
Du{\v{s}}an Knop\thanks{Faculty of Information Technology, Czech Technical University, Prague, Czech Republic
  (\email{dusan.knop@gmail.com}).}
\and
Tom{\'{a}}{\v{s}} Masa{\v{r}}{\'{i}}k\thanks{Faculty of Mathematics, Informatics and Mechanics of University of Warsaw, Poland \& Department of Applied Mathematics, Charles University, Prague, Czech Republic (\email{masarik@kam.mff.cuni.cz}).}
\and
Tom{\'{a}}{\v{s}} Toufar\footnotemark[2]
\and
Pavel Vesel{\'{y}}\thanks{Department of Computer Science, University of Warwick, Coventry, UK
  (\email{pavel.vesely@warwick.ac.uk}).}
}
\begin{document}

\maketitle

\begin{abstract}
We study the~\ST problem, in which a set of \emph{terminal} vertices needs to be
connected in the cheapest possible way in an edge-weighted graph. This problem
has been extensively studied from the viewpoint of approximation and also
parameterization. In particular, on one hand \ST is known to be \APX-hard, and
\W{2}-hard on the other, if parameterized by the~number of non-terminals
(\emph{Steiner vertices}) in the optimum solution. In contrast to this, we give
an \emph{efficient parameterized approximation scheme} (\EPAS), which
circumvents both hardness results. Moreover, our methods imply the existence of
a \emph{polynomial size approximate kernelization scheme} (\PSAKS) for
the~considered parameter.

We further study the parameterized approximability of other variants of \ST,
such as \dirST and \SF. For neither of these an \EPAS is likely to exist for
the studied parameter: For \SF an easy observation shows that the problem is
\APX-hard, even if the input graph contains no Steiner vertices. For \dirST we
prove that approximating within any function of the studied parameter is
\W{1}-hard. Nevertheless, we show that an \EPAS exists for \unwDirST, but a
\PSAKS does not. We also prove that there is an \EPAS and a \PSAKS for \SF if in
addition to the~number of Steiner vertices, the~number of connected components
of an optimal solution is considered to be a~parameter.
\end{abstract}

\begin{keywords}
  Steiner Tree, Steiner Forest, Approximation Algorithms, Parameterized Algorithms
\end{keywords}

\begin{AMS}
Mathematics of computing $\to$ Combinatorics,
Mathematics of computing $\to$ Graph theory,
Theory of computation $\to$ Parameterized complexity and exact algorithms
\end{AMS}

\sloppy

\section{Introduction}
In this paper we study several variants of the \ST problem. In its most basic
form this optimization problem takes an undirected graph $G=(V,E)$ with edge
weights $w(e)\in\R^+_0$ for every $e\in E$, and a set $R\subseteq V$ of
\emph{terminals} as input. The non-terminals in $V\setminus R$ are called
\emph{Steiner vertices}. A \emph{Steiner tree} is a tree in the graph $G$, which
spans all terminals in $R$ and may contain some of the Steiner vertices. The
objective is to minimize the total weight $\sum_{e\in E(T)} w(e)$ of the
computed Steiner tree $T\subseteq G$. This fundamental optimization problem is
one of the 21 original \NP-hard problems listed by Karp \cite{MR51:14644} in his
seminal paper from 1972, and has been intensively studied since then. The \ST
problem and its variants have applications in network design, circuit layouts,
and phylogenetic tree reconstruction, among others (see
survey~\cite{hwang1992steiner}).

Two popular ways to handle the seeming intractability of \NP-hard problems are
to design \emph{approximation}~\cite{williamson2011design} and
\emph{parameterized}~\cite{pc-book} algorithms. For the former, an
\emph{$\alpha$\hy{}approximation} is computed in polynomial time for some
factor $\alpha$ specific to the algorithm, i.e., the solution is always at most
a multiplicative factor of $\alpha$ worse than the optimum of the input
instance. The \ST problem, even in its basic form as defined above, is
\APX-hard~\cite{chlebik2002approximation}, i.e., it is \NP-hard to obtain an
approximation factor of $\alpha=\frac{96}{95}\approx 1.01$. However, a factor of
$\alpha=\ln(4)+\eps\approx 1.39$ can be achieved in polynomial
time~\cite{DBLP:journals/jacm/ByrkaGRS13}, which is the currently best factor
known for this runtime.

For parameterized algorithms, an instance is given together with a
\emph{parameter} $p$ describing some property of the input.
The idea is to isolate the exponential runtime of an \NP-hard problem to the
parameter. That is,
the optimum solution is computed in time $f(p)\polyn$, where $f$ is a
computable
function independent of the input size~$n$. If such an algorithm exists, we call
the problem \emph{fixed-parameter tractable} (\FPT) for parameter $p$.
Here, the choice of the parameter is crucial, and a problem
may be \FPT for some parameters, but not for others.
A well-studied parameter for the \ST problem is the number of terminals $|R|$.
It is known since the classical result of
\citet{DBLP:journals/networks/DreyfusW71} that the \ST problem is \FPT for this
parameter, as the problem can be solved in time~$3^{|R|}\polyn$ if~$n=|V|$. A
more recent algorithm by \citet{fuchs2007dynamic} obtains runtime
$(2+\delta)^{|R|}\cdot n^{O_\delta(1)}$ for any constant $\delta>0$. This
can be improved to $2^{|R|}\polyn$ if the input graph is unweighted via the 
algorithm of \citet{DBLP:conf/icalp/Nederlof09} (using results of 
\mbox{\citet{BjorklundHKK07}}). A somewhat complementary and
less-studied parameter to the number of terminals is the number of Steiner
vertices in the optimum solution, i.e., $p=|V(T)\setminus R|$ if $T$ is an
optimum Steiner tree. It is known~\cite{downey2012parameterized} that \ST is
\mbox{\W{2}-hard} for parameter $p$ and therefore is unlikely to be \FPT, in
contrast to the parameter $|R|$. This parameter $p$ has been mainly studied in
the context of unweighted graphs before. The problem remains \W{2}-hard in this
special case and therefore the focus has been on designing parameterized
algorithms for restricted graph classes, such as planar or $d$-degenerate
graphs~\cite{jones2013parameterized,suchy2017extending}.

In contrast to this, our question is: What can be done in the most general case,
in which the class of input graphs is unrestricted and edges may have weights?
Our main result is that we can overcome the \APX-hardness of \ST on one hand,
and on the other hand also the \W{2}-hardness for our parameter of choice $p$,
by combining the two paradigms of approximation and
parameterization.\footnote{This area has recently received growing interest
(cf.~the \href{https://sites.google.com/site/aefeldmann/workshop}{Parameterized
Approximation Algorithms Workshop})}
We show that there is an \emph{efficient parameterized approximation scheme}
(\EPAS), which for any $\eps>0$ computes a $(1+\eps)$-approximation in time
$f(p,\eps)\polyn$ for a function~$f$ independent of~$n$. Note that
here we consider the approximation factor of the algorithm as a parameter as
well, which accounts for the ``efficiency'' of the approximation scheme
(analogous to an \emph{efficient polynomial time approximation scheme}
or~$\mathsf{EPTAS}$). In fact, as summarized in the following theorem, our
algorithm computes an approximation to the cheapest tree having at most $p$
Steiner vertices, even if better solutions with more Steiner vertices exist.

\begin{thm}\label{thm:ST}
There is an algorithm for \ST, which given an edge-weighted undirected graph
$G=(V,E)$, terminal set $R\subseteq V$, $\eps>0$, and an integer $p$, computes a
$(1+\eps)$\hy{}approx\-imation to the cheapest Steiner tree $T\subseteq G$ with
$p\geq|V(T)\setminus R|$ in time $2^{\bigO{p^2/\eps^4}}\polyn$. \footnote{If the
input to this optimization problem is malformed (e.g., if $p$ is smaller than
the number of Steiner vertices of any feasible solution) then the output of the
algorithm can be arbitrary (cf.~\cite{lokshtanov2017lossy})}
\end{thm}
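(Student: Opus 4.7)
The plan is to pass to the metric closure, establish a structural lemma about near-optimal solutions, and enumerate a bounded family of candidate ``skeletons''. First, I would replace $G$ by its metric closure $\bar G$, the complete graph on $V$ whose edge weights are the shortest-path distances in $G$. Any Steiner tree in $G$ with Steiner set $S\subseteq V\setminus R$ corresponds to a spanning tree of $R\cup S$ in $\bar G$ of the same total weight, and vice versa (re-expanding $\bar G$-edges via shortest paths can only help). So it suffices to search for $S$ of size at most $p$ minimizing the weight of an MST of $R\cup S$ in $\bar G$. Using the triangle inequality, one further normalizes: in any optimum tree $T^*$ every Steiner vertex may be assumed to have degree at least three, since degree-$1$ Steiner vertices can be deleted and degree-$2$ ones can be short-cut.

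The crux of the argument is a structural lemma. Define the \emph{skeleton} of $T^*$ as the tree obtained by contracting every maximal all-terminal subtree of $T^*$ into a single ``cluster'' super-node, keeping the $p$ Steiner vertices as branching nodes. I would prove that, by paying an additional $\eps\cdot\OPT$ in total weight, one can transform $T^*$ into a tree $T'$ with at most $p$ Steiner vertices whose skeleton has combinatorial complexity $\bigO{p^2/\eps^4}$ --- for example, at most $\bigO{p/\eps^2}$ cluster nodes, together with bounded branching information at the Steiner vertices. The intuition is that ``large'' clusters (internal cost at least $\eps\cdot\OPT/p$) are necessarily few, while each ``small'' cluster can be re-routed through its nearest large cluster via the triangle inequality, at a cost absorbable into the $\eps\cdot\OPT$ budget.

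Given this lemma, the algorithm enumerates all candidate skeletons --- at most $2^{\bigO{p^2/\eps^4}}$ of them --- and for each one solves the resulting structured optimisation in polynomial time: assign vertices of $V\setminus R$ to the Steiner-node slots (an assignment subproblem), partition the terminals among the cluster nodes (a dynamic program on the fixed skeleton), and compute an MST within each cluster in $\bar G$. The best tree over all skeletons yields the claimed $(1+\eps)$-approximation within the stated running time. The main obstacle is proving the structural lemma cleanly: naive re-routing of small clusters risks inflating the cost across many distant edges, so the argument must carefully set up a charging scheme where each detour is paid either by a triangle-inequality shortcut along an already-counted edge of $T^*$ or by a small share of a large cluster's weight budget, while simultaneously controlling both the number of residual clusters and the branching complexity around each Steiner vertex.
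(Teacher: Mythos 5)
Your key structural lemma is false, and there is a simple counterexample that sits exactly at the heart of the problem. Take $G$ to be a star with a single Steiner vertex $v$ joined by unit-weight edges to $n$ terminals (so $p=1$ and $\OPT=n$). In the optimum tree every terminal is its own maximal all-terminal subtree, so the skeleton has $n$ cluster nodes; all clusters are ``small'' and there is no large cluster to re-route through. Moreover, no cheap tree with few clusters exists: in the metric closure every terminal--terminal distance is $2$, so any tree with at most one Steiner vertex whose skeleton has $k$ cluster nodes has cost at least $2n-k$ (each cluster contributes at most one unit-cost edge to $v$, and every other edge costs $2$). Hence restricting to $\bigO{p/\eps^2}$ clusters forces cost $(2-o(1))\cdot\OPT$, not $(1+\eps)\OPT$, so no charging scheme can rescue the lemma; any enumeration limited to such skeletons can only certify a $2$-approximation on this instance. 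A secondary, independent concern is the claim that each fixed skeleton can be optimised in polynomial time: the choice of concrete Steiner vertices, the partition of terminals into clusters, and the cluster-to-Steiner attachment costs are coupled (once the Steiner set $S$ is fixed the optimum is just an MST of $R\cup S$, so all the hardness lives in choosing $S$), and neither the ``assignment subproblem'' nor the ``dynamic program on the fixed skeleton'' is specified in a way that avoids enumerating $S$, which would cost $n^{\Theta(p)}$.

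The paper takes a different route that is built precisely around the large-star obstruction above. It repeatedly finds, in polynomial time, a star of minimum ratio $w(C)/(|Q|-1)$ (where $Q$ is the set of terminals of the star), contracts it, and declares the contracted vertex a terminal; a charging argument against a fixed solution $F^\ast_0$ shows that ``bad'' contractions (those not paid for by deleted edges of $F^\ast_0$) have total weight at most $\eps\cdot w(F^\ast_0)$, and that once fewer than $\tau=\bigO{(p+c)^2/\eps^4}$ terminals remain one can stop. On the reduced instance an exact algorithm parameterized by the number of terminals (Dreyfus--Wagner/Fuchs et al.) is run, giving the $2^{\bigO{p^2/\eps^4}}\polyn$ bound; \cref{thm:ST} then follows from the \SF statement with $c=1$ via the reduction of \cref{sec:ST-to-SF}. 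If you want to salvage a skeleton/enumeration approach you would have to let a single Steiner slot absorb unboundedly many directly attached terminals (as the paper's star contractions implicitly do), but then bounding the enumeration and solving the per-skeleton problem in polynomial time are exactly the difficulties your sketch leaves open.
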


It is worth noting that here we treat the actual value of~$p$ as a parameter; not as a ``hard constraint''.
That is, the solution returned by our algorithm may contain more than~$p$ Steiner vertices and only its quality (cost) is compared to the cost of the cheapest solution that contains at most~$p$ Steiner vertices.
This is true for all our approximation algorithms.

Many variants of the \ST problem exist, and we explore the applicability of
our techniques to some common ones. For the \dirST problem the aim is to compute
an \emph{arborescence}, i.e., a directed graph obtained by orienting the edges
of a tree so that exactly one vertex, called the \emph{root}, has in-degree zero
(which means that all vertices are reachable from the root). More concretely,
the input consists of a directed graph $G=(V,A)$ with arc weights
$w(a)\in\R^+_0$ for every~$a\in A$, a terminal set~$R\subseteq V$, and a
specified terminal $r\in R$. A \emph{Steiner arborescence} is an arborescence in
$G$ with root $r$ containing all terminals $R$. The objective is to find a
Steiner arborescence $T\subseteq G$ minimizing the weight $\sum_{a\in A(T)}
w(a)$. This problem is notoriously hard to approximate: No
$\bigO{\log^{2-\eps}(n)}$\hy{}approximation exists unless
$\NP\subseteq\ZTIME(n^{\mathrm{polylog}(n)})$~\cite{approx-hardness}. But even
for the \unwDirST problem in which each arc has unit weight, a fairly simple
reduction from the \textsc{Set Cover} problem implies that no
$((1-\eps)\ln n)$-approximation algorithm is possible unless
$\Px=\NP$~\cite{approx-hardness,dinur2014setcover}. At the same time, even
\unwDirST is \W{2}-hard for our considered parameter
$p$~\cite{molle2008enumerate,jones2013parameterized}, just as for the undirected
case. For this reason, all previous results have focused on restricted inputs:
\citet{jones2013parameterized} prove that when combining the parameter $p$ with
the size of the largest excluded topological minor of the input graph, \unwDirST
is \FPT. They also show that if the input graph is acyclic and $d$-degenerate, where degeneracy is measured in the underlying undirected graph,
the problem is \FPT for the combined parameter $p$ and $d$.

Our focus again is on general unrestricted inputs. We are able to leverage our
techniques to the unweighted directed setting, and obtain an \EPAS, as
summarized in the following theorem.
Here, the cost of a Steiner arborescence is the number of contained arcs.

\begin{thm}\label{thm:unwDirST}
There is an algorithm for \unwDirST, which given an unweighted directed graph
$G=(V,A)$, terminal set $R\subseteq V$, root $r\in R$, $\eps>0$, and
integer~$p$, computes a $(1+\eps)$-approximation to the cheapest Steiner
arborescence $T\subseteq G$ with $p\geq |V(T)\setminus R|$ in time
$2^{p^2/\eps}\polyn$. \footnotemark[2] %
\end{thm}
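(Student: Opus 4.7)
\emph{Proof proposal.}
I would adapt the general framework of \Cref{thm:ST} to the unweighted directed setting. The key structural fact is that in the unweighted case the cost of any Steiner arborescence $T$ equals $|V(T)|-1=|R|+|S|-1$, where $S$ is its set of Steiner vertices. Hence minimising the cost is equivalent to minimising $|S|$, and a $(1+\eps)$\hy{}approximation admits an additive slack of $\eps(|R|+p^*-1)$ extra arcs beyond $|R|+p^*-1$, where $p^*\le p$ is the number of Steiner vertices in an optimum $T^*$. Whenever $|R|$ is large this slack is generous, which is what makes shortest-path substitutions affordable.

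The first step is to enumerate the ``skeleton'' of $T^*$, namely the arborescence $A^*$ on $\{r\}\cup S^*$ obtained from $T^*$ by contracting every maximal directed subpath whose internal vertices are terminals of in- and out-degree one. Since $A^*$ is an abstract arborescence on at most $p+1$ labelled nodes, there are only $(p+1)^{p-1}=2^{\bigO{p\log p}}$ possibilities. The second step is to embed the skeleton into~$G$: for each internal node of $A^*$ one must select a non-terminal of $V\setminus R$, and for each arc of $A^*$ one must realise it by a directed path in~$G$. The crucial observation is that replacing each actual subpath of $T^*$ by a shortest directed path between the corresponding endpoints in~$G$ can only decrease the arc count, so the resulting candidate is no more expensive than~$T^*$; however it may skip terminals of $R$ that were internal to the original subpaths.

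The third step is therefore to re-attach these ``dropped'' terminals without blowing up the cost by more than $\eps\cdot\OPT$. I would do this by guessing, for every arc of $A^*$, a small sequence of additional ``checkpoint'' vertices through which the realising directed path is forced to pass; once the skeleton, the embedding of its nodes, and the per-arc checkpoints are all fixed, a standard shortest-path subroutine and a minimum spanning arborescence in the directed metric closure on $\{r\}\cup R\cup S\cup(\text{checkpoints})$ produce the final solution. The bound $2^{p^2/\eps}$ in the running time is then accounted for by roughly $\bigO{p/\eps}$ checkpoints along each of the $\bigO{p}$ skeleton arcs, multiplied by the $2^{\bigO{p\log p}}$ skeleton enumeration.

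The main obstacle will be arguing that the two sources of error---the shortest-path substitution inside each skeleton arc and the re-attachment of stranded terminals---jointly sum to at most $\eps\cdot\OPT$. Since $\OPT=|R|+p^*-1$ is dominated by $|R|$ rather than $p^*$ whenever $|R|$ is large, the slack is naturally ample; nevertheless one needs a careful charging argument, showing that terminals lost by a shortest-path substitution on a skeleton arc are few on average and each can be re-attached via a short directed path through a nearby checkpoint. I expect this bookkeeping to mirror the amortised analysis used in the proof of \Cref{thm:ST}, but simplified by the unit-weight assumption and complicated only by the need to respect arc directions.
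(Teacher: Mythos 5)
There is a genuine gap --- in fact two. First, your ``skeleton'' is not an object of bounded size: contracting only the maximal subpaths whose internal vertices are terminals of in- and out-degree one still leaves every terminal that is a leaf or a branching vertex of $T^*$ in the skeleton, and there can be $\Omega(|R|)$ of those (e.g.\ an optimum that is a star of arcs from $r$ to all terminals has no Steiner vertices at all, yet its skeleton has $|R|$ leaves). So the enumeration bound $(p+1)^{p-1}$ is wrong, and the whole ``guess the skeleton'' step collapses. Second, even granting a small skeleton, your running-time accounting confuses the \emph{number} of objects to be guessed with the number of \emph{ways} to guess them: embedding $p$ skeleton nodes into $V\setminus R$ and choosing $\bigO{p/\eps}$ checkpoint vertices per skeleton arc from $V$ costs $n^{\Omega(p^2/\eps)}$ time, which is \XP, not the claimed $2^{p^2/\eps}\polyn$. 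Finally, the re-attachment step is exactly where the difficulty of the theorem lies and it is left unargued: after a shortest-path substitution the dropped terminals can number $\Omega(|R|)=\Omega(\OPT)$, a dropped terminal need not have any in-arc from the current tree (so ``one extra arc each'' is not available), and no charging argument is given that bounds the total re-attachment cost by $\eps\cdot\OPT$.

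For comparison, the paper's proof goes a different and much more local route: it applies two contraction-based reduction rules --- contract an arc from the root to an adjacent terminal (a $1$-safe rule), and, whenever some Steiner vertex $s$ reachable from $r$ by a path with at most $p$ Steiner vertices has at least $p/\eps$ vertices in its $0$-extended neighborhood, contract that neighborhood together with the $r\to s$ path (a $(1+\eps)$-safe rule, since connecting $x\geq p/\eps$ terminals costs the algorithm at most $x+p$ arcs while any solution must pay at least $x$). Exhaustive application leaves fewer than $p^2/\eps$ terminals (\cref{lem:smallInstanceAfterRR}), at which point the exact algorithm of \citet{BjorklundHKK07}, parameterized by the number of terminals, runs in $2^{p^2/\eps}\polyn$ time, and the solution-lifting algorithms undo the contractions. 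If you want to salvage your approach you would need both a way to avoid guessing vertices of the host graph (which forces \XP time) and a genuine amortized argument for the stranded terminals; the paper sidesteps both issues by never decomposing the optimum at all and instead charging each contraction locally against any feasible solution with at most $p$ Steiner vertices.
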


Can our techniques be utilized for the even more general case when arcs have
weights? Interestingly, in contrast to the above theorem we can show that in
general the \dirST problem most likely does not admit such approximation
schemes, even when allowing ``non-efficient'' runtimes of the form
$f(p,\eps)\cdot n^{g(\eps)}$ for any computable functions $f$ and $g$. This
follows from the next theorem, since setting $\eps$ to any constant, the
existence of such a $(1+\eps)$-approximation algorithm would imply $\W{1} =
\FPT$.

\begin{thm}\label{thm:dirST}
For any computable function $f$, it is \W{1}-hard to compute an
$f(p)$\hy{}approximation of the optimum Steiner arborescence $T$ for \dirST
parameterized by $p=|V(T)\setminus R|$, if the input graph is arc-weighted.
\end{thm}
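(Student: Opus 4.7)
The plan is to prove $\W{1}$-hardness via a parameterized reduction from a $\W{1}$-hard source problem into a gap-producing \dirST instance. A natural source is \textsc{Multi-Colored Clique}, parameterized by the number of color classes $k$, since its structure naturally maps to the ``pick one Steiner vertex per class'' pattern. The target parameter $p$ is set to a polynomial in $k$, namely the Steiner-vertex count of the YES optimum, and the goal is: given any computable $f$, force a weight gap of at least $f(p)$ between YES and NO instances.

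The construction uses a root $r$; vertex-selector Steiner vertices $x_v$ (one per vertex of the source graph) with unit-weight arcs $r \to x_v$; color terminals $t_i$ with zero-weight arcs from each $x_v$ with $v \in V_i$, forcing at least one selection per class $V_i$; and, for each color pair $(i,j)$, a pair-terminal $t_{ij}$ connected via an ``edge-verification'' gadget. The weighting trick is to add, for every pair $(i,j)$, a ``fallback'' arc $r \to t_{ij}$ of weight $M := f(p)\cdot W_{\mathrm{yes}}$, where $W_{\mathrm{yes}}$ is the weight of the YES optimum. In the YES case, a $k$-clique yields a Steiner arborescence of weight $W_{\mathrm{yes}}$ that avoids every fallback arc. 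In the NO case, any arborescence with at most $p$ Steiner vertices is forced to use at least one fallback arc, so its weight exceeds $f(p)\cdot W_{\mathrm{yes}}$, yielding the required inapproximability gap.

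The main obstacle is designing the edge-verification gadget so that in the NO case the fallback arcs are truly unavoidable. Since each non-root vertex in an arborescence has a unique parent, the conjunctive requirement ``both endpoints of a chosen edge are selected'' cannot be expressed directly; naive gadgets enforce only the weaker condition ``some chosen $x_v$ with $v \in V_i$ is adjacent to $V_j$,'' which does \emph{not} imply that the selected representatives form a clique. Overcoming this requires tightly coupling the selections across pair gadgets—for instance by duplicating the pair-terminal into ``left'' and ``right'' copies reachable only through the respective endpoint, plus auxiliary Steiner vertices that must be synchronized via the global Steiner budget—and calibrating $p$ so that no ``spare'' selectors can be afforded in the NO case. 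This kind of delicate budget tightening is a standard but nontrivial ingredient in parameterized inapproximability proofs.

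With such a gadget in place, a direct case analysis yields the correspondence: YES $\Leftrightarrow$ optimum weight $W_{\mathrm{yes}}$ attained with $p$ Steiner vertices, and NO $\Leftrightarrow$ every Steiner arborescence with at most $p$ Steiner vertices has weight at least $f(p)\cdot W_{\mathrm{yes}}$. An FPT-time $f(p)$-approximation for \dirST would thus distinguish the two cases and decide \textsc{Multi-Colored Clique} in FPT time, contradicting $\W{1}\neq\FPT$ and proving the theorem.
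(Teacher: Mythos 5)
Your reduction is not complete, and the missing piece is precisely the part that cannot be supplied along the route you chose. You concede yourself that the edge-verification gadget — the device that should make the fallback arcs unavoidable in the NO case — is not constructed; without it there is no proof, since, as you note, the naive gadget only enforces a disjunctive condition. More fundamentally, the ``budget tightening'' you invoke has nothing to grab onto: in the optimization problem nothing restricts a feasible arborescence (nor the optimum itself) to use at most $p$ Steiner vertices; $p$ merely describes the optimum, and the only currency visible to the approximation guarantee is weight. In the NO case of your construction a solution may simply buy many selector vertices $x_v$ (unit-weight arcs $r\to x_v$) and reach every pair-terminal through zero-weight arcs, so either the optimum stays cheap and the $f(p)$ gap evaporates, or, if you make selectors expensive to forbid this, the YES optimum pays the same penalty and the ratio collapses. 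Worse, if the cheap many-selector solution becomes the optimum in the NO case, then $p=|V(T)\setminus R|$ is no longer bounded by a function of $k$, so you do not even have a parameterized reduction. Manufacturing an $f(p)$ gap from scratch out of exact \textsc{Multi-Colored Clique} would essentially amount to re-proving parameterized inapproximability of a covering-type problem, which is a deep result in its own right, not a gadget exercise.

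The paper sidesteps all of this by transferring an already-known gap instead of building one: it reduces from \domSet parameterized by the optimum size $b$, for which ruling out $f(b)$-approximations (assuming $\W{1}\neq\FPT$) is the cited black-box result, via a transparently approximation-preserving construction — a root with weight-$1$ arcs to one Steiner vertex per vertex of the graph, and weight-$0$ arcs from the Steiner copy of $w$ to the terminal copy of $v$ whenever $v=w$ or $v$ is a neighbour of $w$. Dominating sets of size $b$ correspond exactly to arborescences of weight $b$ using $b$ Steiner vertices, so any $f(p)$-approximation for \dirST yields an $f(b)$-approximation for \domSet. The disjunctive structure of domination is exactly what makes this work with no gadget at all; the conjunctive clique constraint you were fighting never arises.
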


Another variant of \ST is the \textsc{Node Weighted Steiner Tree} problem, in
which the Steiner vertices have weights, instead of the edges. The aim is to
minimize the total weight of the Steiner vertices in the computed solution. A
similar reduction as the one used to prove \cref{thm:dirST} (from
\textsc{Dominating Set}) shows that also in this case computing any
$f(p)$-approximation is \W{1}-hard, even if all Steiner vertices have unit
weight.

Other common variants of \ST include the \PCST and \SF problems. The latter 
takes as input an edge-weighted undirected graph $G=(V,E)$ and a list 
$\{s_1,s'_1\},\ldots,\{s_k,s'_k\}$ of terminal pairs, i.e., the set of terminals 
is $R=\{s_i,s'_i\mid 1\leq i\leq k\}$. A \emph{Steiner forest} is a forest $F$ 
in $G$ for which each $\{s_i,s'_i\}$ pair is in the same connected component, 
and the objective is to minimize the total weight of the forest $F$. For this 
variant it is not hard to see that parameterizing by $p=|V(F)\setminus R|$ 
cannot yield any approximation scheme, as a simple reduction from \ST shows that 
the problem is \APX-hard even if the input has no Steiner vertices (see 
\cref{sec:ST-to-SF}). For the \PCST problem, the input is again a terminal set 
in an edge-weighted graph, but the terminals have additional costs. A solution 
tree is allowed to leave out a terminal but has to pay its cost in return 
(cf.~\cite{williamson2011design}). It is also not hard to see that this problem 
is \APX-hard, even if there are no Steiner vertices at all.

These simple results show that our techniques to obtain approximation schemes
reach their limit quite soon: With the exception of \unwDirST, most common
variants of \ST seem not to admit approximation schemes for our parameter $p$.
We are however able to generalize our \EPAS to \SF if we combine $p$ with the
number $c$ of connected components in the optimum solution. In fact, our main
result of \cref{thm:ST} is a corollary of the next theorem, using only the first
part of the above mentioned reduction from \ST (cf.~\cref{sec:ST-to-SF}). Due to
this, it is not possible to have a parameterized approximation scheme for the
parameter $c$ alone, as such an algorithm would imply a polynomial time
approximation scheme for the \APX-hard \ST problem. Hence the following result
necessarily needs to combine the parameters $p$ and $c$.

\begin{thm}\label{thm:SF}
There is an algorithm for \SF, which given an edge-weighted undirected graph 
$G=(V,E)$, a list $\{s_1,s'_1\},\ldots,\{s_k,s'_k\}\subseteq V$ of terminal 
pairs, $\eps>0$, and integers~$p,c$, computes a $(1+\eps)$-approximation to the 
cheapest Steiner forest $F\subseteq G$ with at most $c$ connected components and 
$p\geq |V(F)\setminus R|$ where $R=\{s_i,s'_i\mid 1\leq i\leq k\}$, in 
time~$(2c)^{\bigO{(p+c)^2/\eps^4}}\polyn$. \footnotemark[2]
\end{thm}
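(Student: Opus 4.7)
The plan is to extend the approach behind \cref{thm:ST} from \ST to \SF, treating the number $c$ of connected components of an optimum Steiner forest as an additional structural parameter alongside $p$. An optimum Steiner forest $F^*$ decomposes into at most $c$ tree components whose combined number of Steiner vertices is at most $p$. A naive strategy would run the \EPAS of \cref{thm:ST} separately on each component, but this would require first guessing the partition of terminal pairs into components, a space of size $c^k$ which is far too large to enumerate.

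Instead, I would interleave the enumeration of the component structure with the enumeration of the Steiner tree topology of each component. The proof of \cref{thm:ST} yields a near-optimal \emph{backbone} per tree component whose combinatorial complexity is controlled in terms of $p$ and is independent of the number of terminals; summing across the at most $c$ components and combining them into a joint enumeration should produce a backbone of total complexity $\bigO{p+c}$. Each enumeration step must now additionally decide the component-index of the structural object being placed, among $\bigO{c}$ options. This increases the base of the exponential from the $2$ of \cref{thm:ST} to $2c$, while the exponent $\bigO{(p+c)^2/\eps^4}$ is precisely that of \cref{thm:ST} with $p$ replaced by $p+c$.

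For each candidate joint backbone, I would realize every backbone edge by an approximate shortest path in $G$ using the embedding machinery of \cref{thm:ST}, verify that every pair $\{s_i,t_i\}$ has both endpoints assigned to the same backbone component, and output the best feasible solution. Because the $(1+\eps)$-approximation guarantee of \cref{thm:ST} is derived per tree and total weight is additive across edge-disjoint components, composing the per-component approximations yields a $(1+\eps)$-approximation of $F^*$ overall.

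The main obstacle will be showing that this joint enumeration captures a $(1+\eps)$-approximation without explicitly branching on each of the $k$ terminal pairs' component assignments — that is, that the partition of pairs is implicitly determined by the labelled backbone. The argument would lean on the observation that once the Steiner vertices and their component labels are fixed, the component of any terminal is forced by the location at which its approximate shortest path attaches to the backbone, so infeasible pair assignments manifest as pruneable backbone labellings rather than fresh enumeration branches. Carrying this coupling out formally, and checking that the error budget of \cref{thm:ST} composes additively across components without a cross-component loss of accuracy, is the technical heart of the proof.
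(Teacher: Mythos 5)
There is a genuine gap here, and it starts with the guessed internals of \cref{thm:ST}: the paper's proof of \cref{thm:ST} does \emph{not} produce an enumerable ``backbone'' of combinatorial complexity $\bigO{p}$ that is independent of the number of terminals, and in fact the logical dependence is the reverse of what you assume --- \cref{thm:ST} is derived as the $c=1$ special case of \cref{thm:SF} via the reduction of \cref{sec:ST-to-SF}, so there is no black-box machinery from the tree case to lift. The actual engine is a contraction-based preprocessing: after rescaling weights, one repeatedly contracts a minimum-ratio star (\cref{lem:min-star,lem:star-exists}) until fewer than $\tau=\bigO{(p+c)^2/\eps^4}$ terminals remain; the accounting of good versus bad contractions (\cref{lem:largeContraction,lem:UB-numOfNotChargeable,lem:UB-numOfMultCompBadContractions,lem:weightOfNotChargeable}) shows this is a $(1+2\eps)$-approximate preprocessing with respect to $F^\ast_0$ (\cref{lem:weightedPreprocessing}). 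Only then does one solve the reduced instance exactly, by guessing the partition of the $\le\tau$ remaining terminals into at most $c$ components ($c^{|R|}$ choices) and running the $(2+\delta)^{|R|}$-time algorithm of Fuchs et al.\ on each part; this is where the base $2c$ in the running time comes from, not from labelling steps of a backbone enumeration. None of this terminal-reduction machinery, which is the technical heart of the theorem, appears in your proposal.

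Moreover, the concrete step you do propose --- enumerate a labelled Steiner-vertex backbone and attach every terminal by an (approximate) shortest path to it --- does not yield a $(1+\eps)$-approximation even for a single tree. The optimum may connect terminals to one another through cheap terminal--terminal edges: take a single Steiner vertex $s$, a chain of terminals $t_1,\dots,t_n$ with unit edges, an edge $st_1$ of weight $1$, and direct edges $st_i$ of weight slightly below $i$. The optimum costs about $n$, while the union of per-terminal shortest paths to the backbone $\{s\}$ costs $\Theta(n^2)$; no pruning of backbone labellings repairs this, because the loss happens in the attachment step, not in the choice of backbone. This is exactly the difficulty (unboundedly many terminals interacting among themselves when only $p$ Steiner vertices are allowed) that the paper's best-ratio star contractions are designed to handle, and it is also why the ``partition of pairs is forced by the backbone'' observation cannot substitute for explicitly guessing the component partition on a terminal set whose size has first been reduced to a function of $p$, $c$, and $\eps$.
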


As mentioned for \cref{thm:ST}, our algorithm might compute an approximate 
solution with more than $p$ Steiner vertices. Analogously, it may also compute 
a forest with more than $c$ components, even if its cost is compared to the 
best one containing at most $p$ Steiner vertices and $c$ components only.

A topic tightly connected to parameterized algorithms is kernelization. We here
use the framework of \citet{lokshtanov2017lossy}, who also give a thorough
introduction to the topic (see~\cref{sec:kernel-prelims} for formal
definitions). Loosely speaking, a \emph{kernelization algorithm} runs in
polynomial time, and, given an instance of a parameterized problem, computes
another instance of the same problem, such that the size of the latter instance
is at most $f(p)$ for some computable function $f$ in the parameter $p$ of the
input instance. The computed instance is called the \emph{kernel}, and for an
optimization problem it must be possible to efficiently convert an optimum
solution to the kernel into an optimum solution to the input instance.

A fundamental result of parameterized complexity says that a problem is \FPT if
and only if it has a kernelization algorithm~\cite{pc-book}. This means that for
our parameter $p$, most likely \ST does not have a kernelization algorithm, as
it is \W{2}-hard. For this reason, the focus of kernelization results have
previously shifted to special cases again. By a folklore result, \ST is \FPT for
our parameter $p$ if the input graph is planar
(cf.~\cite{jones2013parameterized}). Of particular interest are \emph{polynomial
kernels}, which have size polynomial in the input parameter. The idea is that
computing the kernel in this case is an efficient preprocessing procedure for
the problem, such that exhaustive search algorithms can be used on the kernel.
\citet{suchy2017extending} proved that \unwST parameterized by $p$ admits a
polynomial kernel if the input graph is planar.

Our aspirations again are to obtain results for inputs that are as general as
possible, i.e., on unrestricted edge-weighted input graphs. We prove that \ST
has a polynomial \emph{lossy} (approximate) kernel, despite the fact that the problem is
\W{2}-hard: An \emph{$\alpha$-approximate kernelization algorithm} is a
kernelization algorithm that computes a new instance for which a given
$\beta$-approximation can be converted into an $\alpha\beta$-approximation for
the input instance in polynomial time. The new instance is now called a
\emph{(polynomial) approximate kernel}, and its size is again bounded as a
function (a polynomial) of the parameter of the input instance.

Just as for our parameterized approximation schemes in \cref{thm:ST,thm:SF}, we
prove the existence of a lossy kernel for \ST by a generalization to \SF where
we combine the parameter $p$ with the number $c$ of connected components in the
optimum solution. Also, our lossy kernel can approximate the optimum arbitrarily
well: We prove that for our parameter the \SF problem admits a \emph{polynomial
size approximate kernelization scheme} (\PSAKS), i.e., for every $\eps>0$ there
is a $(1+\eps)$-approximate kernelization algorithm that computes a polynomial
approximate kernel. An easy corollary then is that \ST parameterized only by~$p$
also has a \PSAKS, by setting $c=1$ in \cref{thm:PSAKS-SF} and using the above
mentioned reduction from \ST to \SF (cf.~\cref{sec:ST-to-SF}).

\begin{thm}\label{thm:PSAKS-SF}
There is a $(1+\eps)$-approximate kernelization algorithm for \SF, which given
an edge-weighted undirected graph $G=(V,E)$, a list of terminal pairs
\mbox{$\{s_1,s'_1\},\ldots,\{s_k,s'_k\}\subseteq V$}, and
integers~$p,c$, computes an approximate kernel of size
$\left((p+c)/\eps\right)^{2^{\bigO{1/\eps}}}$, if for the optimum Steiner
forest $F\subseteq G$, we have $p\geq |V(F)\setminus R|$ where 
$R=\{s_i,s'_i\mid 1\leq i\leq k\}$, the number of connected components of $F$ is 
at most $c$, and $\eps>0$. \footnotemark[2]
\end{thm}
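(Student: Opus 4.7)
My plan is to design a sequence of polynomial-time, $(1+\eps')$-safe reduction rules (with $\eps' = \Theta(\eps)$) that together produce a kernel of size $((p+c)/\eps)^{2^{\bigO{1/\eps}}}$, leveraging the structural insights underpinning the \EPAS of \cref{thm:SF}.

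First, I would compute a constant-factor approximation $\ALG$ with $\OPT \leq \ALG \leq c_0\cdot\OPT$ using a classical polynomial-time \SF approximation algorithm. Every edge of weight exceeding $\ALG$ can then be safely deleted, since no such edge can appear in any near-optimal solution; this bounds the weight of every surviving edge by $\ALG$, and hence also the ``scale'' of distances that matter for the rest of the argument.

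Second, I would exploit the fact that any near-optimal Steiner forest $F$ admits a ``backbone'' of $\bigO{p+c}$ \emph{branch vertices} (terminals and Steiner vertices of degree at least $3$ in $F$, plus a representative per component) joined by near-shortest paths in $G$. Using a geometric-bucketing enumeration analogous to the one behind \cref{thm:SF}, I would construct a polynomial-size set $S\subseteq V\setminus R$ of candidate Steiner vertices such that restricting to $S$ costs only a $(1+\eps')$-factor. A parallel \emph{terminal reduction} would shrink the terminal set to a representative subset $R'\subseteq R$ by merging pairs whose connectivity demand is already $(1+\eps')$-approximately enforced by other surviving pairs in the metric closure. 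The kernel output is then the metric completion of $G$ restricted to $R'\cup S$.

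The main obstacle, in my view, is the terminal reduction: unlike Steiner vertices, terminals encode the connectivity demands of \SF, and the merging or removal of pairs must provably preserve those demands under all near-optimal solutions. The recursive bucketing needed here --- applied at $\bigO{1/\eps}$ nested distance scales to control the error at each level --- is what produces the $2^{\bigO{1/\eps}}$ exponent in the kernel size bound. By contrast, the lifting procedure that converts a $\beta$-approximate kernel solution into a $(1+\eps)\beta$-approximate solution on the input should be routine once the structural reductions are in place: each metric edge of the kernel is replaced by a corresponding shortest path in $G$, and each absorbed terminal is reconnected via the tracked shortest-path trees from the reduction.
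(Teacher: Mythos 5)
There is a genuine gap. The whole content of \cref{thm:PSAKS-SF} is that the output instance has size bounded by a function of $p$, $c$ and $\eps$ \emph{alone}, and your proposal never establishes such a bound. Your candidate set $S$ of Steiner vertices is only ``polynomial-size'' (in $n$), which is useless for a kernel, and the step that would have to bound the number of terminals --- your ``terminal reduction'' --- is exactly the part you flag as the main obstacle and leave unresolved. It cannot be resolved in the form you describe: deleting or merging pairs whose demand is ``already $(1+\eps')$-approximately enforced by other surviving pairs'' does not bound $|R'|$ in terms of $(p,c,\eps)$. Consider a long path all of whose vertices are terminals, with consecutive pairs as demands ($p=0$, $c=1$): no pair is enforced by the others, yet the kernel must have size $f(p,c,\eps)$. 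Any correct reduction must \emph{commit to buying edges} and add them back in the solution lifting, and it must be argued that these committed edges cost at most an $\eps$-fraction of (or are chargeable against) the target solution. This is precisely the paper's core machinery: the best-ratio star contractions of \cref{sec:newalgo}, with the good/bad contraction charging argument (\cref{lem:largeContraction,lem:UB-numOfNotChargeable,lem:UB-numOfMultCompBadContractions,lem:weightOfNotChargeable}), yielding via \cref{lem:weightedPreprocessing} an instance with only $\bigO{(p+c)^2/\eps^4}$ terminals at a $(1+\bigO{\eps})$ loss. Your proposal contains no substitute for this argument.

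The second half of the paper's proof is also different from what you sketch: once the terminal count is parameter-bounded, the paper invokes the known \PSAKS for parameter ``number of terminals'' due to \citet{lokshtanov2017lossy}, i.e.\ the \citet{borchers-du} Theorem (extended pairwise to the trees of the forest, with the $2$-approximation of \citet{DBLP:journals/siamcomp/AgrawalKR95} used for the distance-pruning step), taking the metric closure, computing optimum Steiner trees for all terminal subsets of size $2^{\bigO{1/\eps}}$, and rounding edge weights with $\bigO{\log|R|+\log(1/\eps)}$ bits. The $2^{\bigO{1/\eps}}$ in the exponent of the kernel size comes from the Borchers--Du bound on the number of terminals per full component, not from ``recursive bucketing at $\bigO{1/\eps}$ nested distance scales'' as you claim; and the weight-rounding step, which you omit, is needed to bound the encoding size of the kernel, not just its vertex count. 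So both pillars of the actual proof --- the parameter-bounded terminal reduction and the Borchers--Du-based kernel for parameter $|R|$ --- are missing or replaced by unsubstantiated sketches in your proposal.
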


Analogous to approximation schemes, it is possible to distinguish between
efficient and non-efficient kernelization schemes: A \PSAKS is \emph{size efficient}
if the size of the approximate kernel is bounded by $f(\eps)\cdot p^{\bigO{1}}$,
where $p$ is the parameter and $f$ is a computable function independent of~$p$.
Our bound on the approximate kernel size in \cref{thm:PSAKS-SF}
implies that we do \emph{not} obtain a size efficient \PSAKS for either \SF or \ST.
This is in contrast to the existence of efficient approximation schemes for the
same parameters in \cref{thm:ST,thm:SF}. We leave open whether or not a size efficient
\PSAKS can be found in either case. Interestingly, we also do not obtain any
\PSAKS for the \unwDirST problem, even though by \cref{thm:unwDirST} an \EPAS
exists. In fact, we prove the following theorem.

\begin{thm}\label{thm:no-PSAKS-unwDirST}
No $(2-\eps)$-approximate kernelization algorithm exists for the \unwDirST problem
parameterized by the number $p=|V(T)\setminus R|$ of Steiner vertices in the
optimum Steiner arborescence $T$ for any $\eps>0$, unless $\NP\subseteq\coNPp$.
\end{thm}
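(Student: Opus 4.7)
The plan is to rule out polynomial-size $(2-\eps)$-approximate kernels through the approximate cross-composition framework of Lokshtanov, Panolan, Ramanujan and Saurabh. It reduces our task to exhibiting, for each $\eps>0$, a $(2-\eps)$-approximate OR-cross-composition from an NP-hard problem into \unwDirST with parameter $p=|V(T)\setminus R|$; the framework then immediately excludes a polynomial $(2-\eps)$-approximate kernelization unless $\NP\subseteq\coNPp$.

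The natural source is a \emph{gap} version of \setCover: by the inapproximability results cited in the introduction, for every constant $c>1$ it is NP-hard to distinguish \setCover instances of cover size at most $k$ from those with cover size at least $ck$. Starting from the classical reduction from \setCover to \unwDirST used in the introduction (three layers with root $r$, a set-vertex $v_S$ for each $S\in\mathcal{F}$, an element-vertex $u_e$ for each $e\in U$, and arcs $r\to v_S$ and $v_S\to u_e$ whenever $e\in S$), we amplify the per-set cost by subdividing the arcs $r\to v_S$ into long directed paths of length~$L$; the optimum Steiner arborescence then has cost $Lk^\ast+|U|$ and uses roughly $Lk^\ast$ Steiner vertices, where $k^\ast$ is the minimum cover size. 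To OR-compose $t$ such gap instances $(U_i,\mathcal{F}_i,k)$ of identical size $s$, we glue the $t$ subdivided reductions at a common root via a small selector gadget that forces the arborescence to commit all its set-layer Steiner vertices to a single instance. The parameter $p$ of the composed instance is polynomial in $s$ and independent of $t$; a YES-composed instance admits an arborescence of cost about $Lk+|U|$ by using a good cover, and every NO-composed instance is forced to pay at least $L\cdot ck+|U|$ by the source gap, so choosing $L$ large and $c$ slightly above $2$ yields a ratio $\geq 2-\eps$.

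The main obstacle is the gadget design: the selector must (i) keep the parameter $p$ polynomial in $s$ and not in $t$, (ii) prevent the optimum arborescence from cherry-picking set-vertices across different instances to circumvent the gap, and (iii) not itself dilute the $(2-\eps)$ multiplicative gap between YES- and NO-composed optima. Requirement (ii) can be enforced by decorating each set-vertex with instance-specific private terminals or pendant paths, so that mixing across instances strictly exceeds the cost of staying within one; requirement (iii) is handled by scaling $L$ large enough that the additive contribution of the selector and of the $|U|$ element-terminals is negligible compared to the $Lk$ set-layer cost. Once the composition is verified, \cref{thm:no-PSAKS-unwDirST} follows directly from the framework: a polynomial-size $(2-\eps)$-approximate kernelization would compress the OR of $t$ NP-hard \setCover instances down to size polynomial in $s$, contradicting the standard OR-distillation lower bound unless $\NP\subseteq\coNPp$.
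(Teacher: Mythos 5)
Your high-level plan (an $\alpha$-gap cross composition in the sense of Lokshtanov et al.\ from the gap version of \setCover, with each instance reduced via the root/set-vertex/element-vertex construction and the root-to-set arcs subdivided into long paths) is indeed the route the paper takes. However, the proposal leaves exactly the hard part --- the composition gadget --- unspecified, and the concrete mechanisms you sketch for it do not work. First, you never say what happens to the element terminals of the $t$ instances. Since a Steiner arborescence must reach \emph{every} terminal, keeping them private per instance forces any feasible solution to solve all $t$ instances, destroying the OR behaviour; the composition only works because the element terminals of all instances are \emph{identified} into one shared set of $n$ terminals, so that solving a single yes-instance already covers everything. Second, your enforcement of requirement (ii) by ``decorating each set-vertex with instance-specific private terminals'' is self-defeating for the same reason: private terminals hanging off set-vertices force every set-vertex of every instance into every solution, which ruins both the cost gap and the independence of the parameter from $t$ (and pendant \emph{Steiner} paths force nothing at all, since Steiner vertices need not be included).

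Third, the quantitative claim that the no-case cost is at least $L\cdot ck+|U|$ while the selector's contribution stays ``negligible'' cannot both hold, and this is precisely where the $2-\eps$ bound comes from. With shared terminals, a solution may cheat by buying access to several instances and covering the universe with a \emph{mixture} of sets from different (individually bad) set systems; nothing in the source gap rules out that such a mixture is small. The paper blocks this by making each instance reachable only through a private path of length $d$ from the global root and choosing $d=n(\gamma b-2)$, i.e.\ comparable to the honest solution cost, so that touching two instances already costs $2d+3n$ (\cref{lem:cheating}); balancing this against the honest no-case bound (\cref{lem:honest}) and the yes-case cost $d+(b+1)n$ (\cref{lem:yesInst}) yields a ratio $\frac{n(2\gamma b-1)}{n((\gamma+1)b-1)}\to 2$ as $\gamma\to\infty$, never more. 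Your sketch, by contrast, suggests the source gap $c$ transfers directly (``choose $c$ slightly above $2$''), which would let one rule out any constant-factor approximate kernel --- something this construction provably cannot do and which the paper explicitly leaves open. So the selector must be expensive, its cost must be charged in both the yes- and no-cases, and the calibration of $d$ against $\gamma$ is the missing core of the argument rather than a detail handled ``by scaling $L$''.
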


\subsection{Used techniques}
Our algorithms are based on the intuition that a Steiner tree containing only
few Steiner vertices but many terminals must either contain a large component
induced by terminals, or a Steiner vertex with many terminal neighbors
forming a large star. A~high\hy{}level description of our algorithms for
\unwDirST and \SF therefore is as follows. In each step a tree is found in the
graph in polynomial time, which connects some terminals using few Steiner
vertices. We save this tree as part of the approximate solution and then
contract it in the graph. The vertex resulting from the contraction is declared
a terminal and the process repeats for the new graph. Previous
results~\cite{jones2013parameterized,suchy2017extending} have also built on this
straightforward procedure in order to obtain \FPT algorithms and polynomial
kernels for special cases of \unwDirST and \unwST. In particular, in the
unweighted undirected setting it is a well-known fact
(cf.~\cite{suchy2017extending}) that contracting an adjacent pair of terminals
is always a safe option, as there always exists an optimum Steiner tree
containing this edge. However, this immediately breaks down if the input graph is
edge-weighted, as an edge between terminals might be very costly and should
therefore not be contained in any (approximate) solution.

Instead, we employ more subtle contraction rules, which use the following
intuition. Every time we contract a tree with $\ell$ terminals we decrease the
number of terminals by $\ell-1$ (as the vertex arising from a contraction is a
terminal). Our ultimate goal would be to reduce the number of terminals to
one---at this point, the edges that we contracted during the whole run connect
all the terminals. Decreasing the number of terminals by one can therefore be
seen as a ``unit of work''. We will pick a tree with the lowest cost per unit of
work done, and prove that as long as there are sufficiently many terminals left
in the graph, these contractions only lose an $\eps$-factor compared to the
optimum. As soon as the number of terminals falls below a~certain threshold
depending on the given parameter, we can use an \FPT algorithm computing the
optimum solution in the remaining graph. This algorithm is parameterized by the
number of terminals, which now is bounded by our parameter. For the variants of
\ST considered in our positive results, such \FPT algorithms can easily be
obtained from the ones for
\ST~\cite{DBLP:journals/networks/DreyfusW71,BjorklundHKK07,fuchs2007dynamic}.
Adding this exact solution to the previously contracted trees gives a feasible
solution that is a $(1+\eps)$-approximation.

Each step in which a tree is contracted in the graph can be seen as a
\emph{reduction rule} as used for kernelization algorithms. Typically, a proof
for a kernelization algorithm will define a set of reduction rules and then show
that the instance resulting from applying the rules exhaustively has size
bounded as a function in the parameter. To obtain an $\alpha$-approximate
kernelization algorithm, additionally it is shown that each reduction rule is
\emph{$\alpha$-safe}. Roughly speaking, this means that at most a factor of
$\alpha$ is lost when applying any number of $\alpha$-safe reduction rules (see
\cref{sec:kernel-prelims} for formal definitions).

Contracting edges in a directed graph may introduce new paths, which did not
exist before. Therefore, for the \unwDirST problem, we need to carefully choose
the arborescence to contract. In order to prove \cref{thm:unwDirST}, we show that
each contraction is a $(1+\eps)$-safe reduction rule. However, the total size
of the graph resulting from exhaustively applying the contractions is not
necessarily bounded as a function of our parameter. Thus, we do not obtain an
approximate kernel.

For \SF the situation is in a sense the opposite. Choosing a tree to contract
follows a fairly simple rule. On the downside however, the contractions we
perform are not necessarily $(1+\eps)$-safe reduction rules. In fact there are
examples in which a single contraction will lose a large factor compared to the
optimum cost. We are still able to show that after performing all
contractions exhaustively, any $\beta$-approximation to the resulting instance
can be converted into a $(1+\eps)\beta$-approximation to the original input
instance. Even though the total size of the resulting instance again cannot be
bounded in terms of our parameter, for \SF we can go on to obtain a \PSAKS. For
this we utilize a result of~\citet{lokshtanov2017lossy}, which shows how to
obtain a \PSAKS for \ST if the parameter is the number of terminals. This result
can be extended to \SF, and since our instance has a number of terminals bounded
in our parameter after applying all contractions, we obtain \cref{thm:PSAKS-SF}.

To obtain our inapproximability result of \cref{thm:dirST}, we use a reduction
from the \domSet problem. It was recently shown by \citet{dom-set} that this
problem does not admit parameterized $f(k)$-approximation algorithms for any
function $f$, if the parameter $k$ is the solution size, unless $\W{1} = \FPT$.
We are able to exploit this to also show that no such algorithm exists for
\dirST with edge weights, under the same assumption. To prove
\cref{thm:no-PSAKS-unwDirST} we use a cross composition from the \setCover
problem, for which \citet{dinur2014setcover} proved that it is \NP-hard to
compute a $(1-\eps)\ln(n)$-approximation.
We are able to preserve only a constant gap; thus, we leave open whether stronger non-constant lower
bounds are obtainable, or whether \unwDirST has a polynomial size
$\alpha$-approximate kernel for some constant $\alpha\geq 2$.

\subsection{Related work}\label{sec:related}
As the \ST problem and its variants have been studied since decades, the
literature on this topic is huge. We only present a selection of related work
here, that was not yet mentioned above.

For general input graphs, \citet{zelikovsky_1993_11_over_6_apx} gave the first
polynomial time approximation algorithm for \ST with a better ratio than $2$
(which can easily be obtained by computing an MST on the terminal set). His
algorithm is based on repeatedly contracting stars with three terminals each, in
the metric closure of the graph, which yields a $11/6$-approximation. This line
of work led to the \citet{borchers-du} Theorem, which states that for every \ST
instance with terminal set $R$ and every $\eps>0$ there exists a set of
sub-trees (so-called \emph{full components}) on at most $2^{O(1/\eps)}$
terminals from $R$ each and with all leaves being terminals, such that their union forms a Steiner tree for $R$ of
cost at most $1+\eps$ times the optimum. As a consequence, it is possible to
compute all full components with at most $2^{O(1/\eps)}$ terminals (using an
\FPT algorithm parameterized by the number of
terminals~\cite{DBLP:journals/networks/DreyfusW71,fuchs2007dynamic}), and then
find a subset of the precomputed solutions, in order to approximate the optimum.
This method is the basis of most modern \ST approximation algorithms, and is for
instance leveraged in the currently best $(\ln(4)+\eps)$-approximation algorithm
of \citet{DBLP:journals/jacm/ByrkaGRS13}. The \mbox{\citet{borchers-du}}
Theorem can also be interpreted in terms of approximate kernelization schemes,
as \citet{lokshtanov2017lossy} point out (cf.~proof of \cref{thm:PSAKS-SF}). It
is interesting to note that our algorithms are also based on finding good
sub-trees. However, while computing optimum full components is NP-hard, the
sub-trees we compute in each step can be found in polynomial time, regardless of
how many terminals they contain.

For planar graphs~\cite{DBLP:journals/talg/BorradaileKM09} it was shown that an
$\mathsf{EPTAS}$ exists for \ST. For \SF a $2$-approximation can be computed in
polynomial time on general inputs~\cite{DBLP:journals/siamcomp/AgrawalKR95}, but
an $\mathsf{EPTAS}$ also exists if the input is
planar~\cite{eisenstat2012efficient}. If the \unwST problem is parameterized by
the solution size, it is known~\cite{dom} that no polynomial (exact) kernel
exists, unless $\mathsf{NP}\subseteq\mathsf{coNP}/\mathsf{Poly}$. If the input is
restricted to planar or bounded-genus graphs it was shown that polynomial
kernels do exist for this parameterization~\cite{planar-7}. It was later
shown~\cite{suchy2017extending} that for planar graphs this is even true for our
smaller parameter $p$.

For the \dirST problem it is a long standing open problem whether a
polylogarithmic approximation can be computed in polynomial time. It is known
that an \mbox{$\bigO{|R|^\eps}$-approximation} can be computed in polynomial
time~\cite{DBLP:journals/jal/CharikarCCDGGL99}, and an $\bigO{\log^2
n}$-approximation in quasi-polynomial
time~\cite{DBLP:journals/jal/CharikarCCDGGL99}.
\citet{DBLP:conf/icalp/FeldmannM16} consider the \textsc{Directed Steiner
Network} problem, which is the directed variant of \SF (i.e.,~a generalization of
\dirST). They give a dichotomy result, proving that the problem parameterized
by $|R|$ is \FPT whenever the terminal pairs induce a graph that is a
caterpillar with a constant number of additional edges, and otherwise the
problem is \W{1}-hard. Among the \W{1}-hard cases is the \textsc{Strongly
Connected Steiner Subgraph} problem (for which the hardness was originally
established by \citet{DBLP:journals/siamdm/GuoNS11}), in which all terminals
need to be strongly connected. For this problem a $2$-approximation is
obtainable~\cite{DBLP:conf/iwpec/ChitnisHK13} when parameterizing by~$|R|$, and a
recent result shows that this is the best possible~\cite{chitnis2017parameterized}
under the Gap Exponential Time Hypothesis.

In the same paper, \citet{chitnis2017parameterized} also consider the
\textsc{Bidirected Steiner Network} problem, which is the directed variant of
\SF on \emph{bidirected} input graphs, i.e., directed graphs in which for every
edge $uv$ the reverse edge $vu$ exists as well and has the same cost. These
graphs model inputs that lie between the undirected and directed settings. From
\cref{thm:ST,thm:PSAKS-SF}, it is not hard to see that the \textsc{Bidirected
Steiner Tree} problem (i.e.,~\dirST on bidirected inputs) has both an \EPAS and
a \PSAKS for our parameter $p$, by reducing the problem to the undirected
setting. Since the \PSAKS for parameter $p$ follows from the \PSAKS for
parameter~$|R|$ given by \citet{lokshtanov2017lossy}, it is interesting to note
that for parameter~$|R|$, \citet{chitnis2017parameterized} provide both a
\PSAKS and a parameterized approximation scheme for the \textsc{Bidirected
Steiner Network} problem whenever the optimum solution is planar. This is
achieved by generalizing the \mbox{\citet{borchers-du}} Theorem to this setting. 
As this is a generalization of \textsc{Bidirected Steiner Tree}, it is natural 
to ask whether corresponding algorithms also exist for our parameter $p$ in the 
more general setting considered in~\cite{chitnis2017parameterized}.

\section{Preliminaries}
\subsection{Reducing Steiner tree to Steiner forest}
\label{sec:ST-to-SF}

By a folklore result, we may reduce the \ST problem to \SF. For this we pick an
arbitrary terminal~$r$ of the \ST instance, and for every other terminal~$v$ of
this instance, introduce a terminal pair $\{v,r\}$ for \SF.

If we want to construct an instance without Steiner vertices, we can add a new
vertex $w'$ for every Steiner vertex $w$ of \ST and add an edge $ww'$ of cost
$0$. Additionally, we introduce a terminal pair $\{w,w'\}$ to our \SF instance.
Hence, $R=V$ in the constructed \SF instance (i.e., there are no Steiner 
vertices), but an optimum Steiner forest in the constructed
graph costs exactly as much as an optimum Steiner tree in the original graph. As
\ST is \APX-hard, the same is true for \SF, even if all vertices are terminals.

\subsection{Lossy kernels}\label{sec:kernel-prelims}

We give a brief introduction to the lossy kernel framework as introduced
by~\citet{lokshtanov2017lossy}. See the latter reference for a thorough
introduction to the topic.

For an optimization problem, a \emph{polynomial time preprocessing algorithm}
is a pair of polynomial time algorithms: the \emph{reduction algorithm}
$\mathcal{R}$ and the \emph{solution lifting algorithm} $\mathcal{L}$. The
former takes an instance~$I$ with parameter $p$ of a given problem as input, and
outputs another instance $I'$ with parameter~$p'$. The solution lifting
algorithm $\mathcal{L}$ converts a solution for the instance $I'$ to a solution
of the input instance $I$: Given a solution $s'$ to $I'$, $\mathcal{L}$ computes
a solution $s$ for $I$ such that $s$ is optimal for $I$ if $s'$ is optimal for
$I'$. If additionally the output of $\mathcal{R}$ is bounded as a function of
$p$, i.e., when $|I'|+p'\leq f(p)$ for some computable function $f$ independent
of~$|I|$, then the pair given by $\mathcal{R}$ and $\mathcal{L}$ is called a
\emph{kernelization algorithm}, and $I'$ together with parameter $p'$ is the
\emph{kernel}. If the reduction and solution lifting algorithms get an input
that is not an instance of the problem (for example if the parameter does not
correctly describe some property of the optimum solution), then the outputs of
the algorithms are undefined and can be arbitrary.

An \emph{$\alpha$-approximate polynomial time preprocessing algorithm} is again
a pair of a reduction algorithm~$\mathcal{R}$ and a solution lifting algorithm
$\mathcal{L}$, both running in time polynomial in the input size. The reduction
and solution lifting algorithms are as before, but there is a different property
on the output of the latter: If the given solution $s'$ to the instance $I'$
computed by $\mathcal{R}$ is a $\beta$-approximation, then the output of
$\mathcal{L}$ is a solution $s$ that is an $\alpha\beta$-approximation for the
original instance~$I$. Analogous to before, an \emph{$\alpha$-approximate
kernelization algorithm} is an $\alpha$-approximate polynomial time
preprocessing algorithm for which the size of the output of the reduction
algorithm is bounded in terms of $p$ only. The output of $\mathcal{R}$ is in
this case called an \emph{approximate kernel}, and it is \emph{polynomial} if
its size is bounded by a polynomial in $p$.

In the context of lossy kernels, a \emph{reduction rule} is a reduction algorithm
$\mathcal{R}$. It is called \emph{$\alpha$\hy{}safe} if a solution lifting
algorithm $\mathcal{L}$ exists, which together with $\mathcal{R}$ form a
\emph{strict} $\alpha$-approximate polynomial time preprocessing algorithm.
This means that if $s'$ is a $\beta$-approximation for the instance computed by
$\mathcal{R}$, then $\mathcal{L}$ computes a
$(\max\{\alpha;\beta\})$-approximation $s$ for the input instance. As shown
in~\cite{lokshtanov2017lossy}, the advantage of considering this stricter
definition is that, as usual, reduction rules can be applied exhaustively, until
a stable point is reached in which none of the rules would change the instance
any longer. The algorithm resulting from applying these rules, together with
their corresponding solution lifting algorithms, forms a strict
$\alpha$-approximate polynomial time preprocessing algorithm (which is not
necessarily the case when using the non-strict definition;
see~\cite{lokshtanov2017lossy}).

\section{The weighted undirected Steiner forest and Steiner tree problems}
\label{sec:newalgo}

In this section we describe an approximate polynomial time preprocessing
algorithm that returns an instance of \SF containing at most
$\bigO{(p+c)^2/\eps^4}$ terminals if there is a Steiner forest with at most $p$ Steiner
vertices and at most $c$ connected components.
We can use this algorithm in two ways.
Either we can proceed with a kernelization derived from
\citet{lokshtanov2017lossy} and obtain a polynomial size lossy kernel
(\cref{thm:PSAKS-SF}), or we can run an exact \FPT algorithm derived
from \citet{fuchs2007dynamic} on the reduced instance,
obtaining an \EPAS running in single exponential time with respect to the
parameters (\cref{thm:SF,thm:ST}).
In both cases we use the combined parameter $(p,c)$.

\optprobA
{\SF}
{A graph $G = (V,E)$, with edge weights $w(e)\in \R^+$ for each $e\in E$, and a
list $\{s_1,s'_1\},\ldots,\{s_k, s'_k\}$ of pairs of terminals.}
{A Steiner forest $F \subseteq G$ containing an $s_i$-$s'_i$ path for
every $i \in [k]$}

\subsection{Algorithm description}
We first rescale all weights so that every edge has weight strictly greater than 
$1$. Using a standard preprocessing procedure, we also take the metric closure 
of the input graph, i.e., every edge of the graph is present and its weight is 
equal to the shortest path distance of the endpoints in the original input 
graph. It is easy to see (and folklore) that solving \SF in the metric closure 
is equivalent to solving it for the original input graph. Moreover, every 
solution still exists as a subgraph in the metric closure, so that our 
parameters remain unchanged.

Then, in each step of our algorithm we pick a star, add it to the solution, and 
contract the star in the current graph. After the contraction, the edge weights 
may not obey the triangle inequality anymore. However, this is not needed for 
our algorithm. Instead, we only need that the graph is always complete, so that 
a star to contract can always be found.
We repeat this procedure until the number of terminals falls below a specified 
bound depending on $\eps$, $p$, and~$c$. To describe how we pick the star to be 
contracted in each step, we need to introduce the \emph{ratio} of a star. 

\begin{dfn}
Let $C$ be a set of edges of a star, i.e., all edges of $C$ are incident to a 
common vertex which is the \emph{center} of the star, and denote by $Q$ the set 
of terminals in~$V(C)$, where~$V(C)$ is the set of vertices in~$C$. Provided 
$|Q| \geq 2$, we define the \emph{ratio} of $C$ as $w(C) / (|Q|-1)$, where 
$w(C)=\sum_{e\in C} w(e)$. 
\end{dfn}

Note that we allow $C$ to contain only a single edge if it joins two terminals, 
and that due to rescaling of edge weights each star has ratio strictly greater 
than $1$. Observe also that the ratio of a star is similar to the average weight
of an edge in the star. However the ratio is skewed due to the subtraction of~$1$ in the 
denominator. In particular, for two stars of the same average weight, the one 
with more terminals will have the smaller ratio. Thus, in this sense, picking a 
star with small ratio favors large stars.

In every step, our algorithm contracts a star with the best available ratio
(i.e., the lowest ratio among all stars connecting at least two terminals). 
Since we assume that our input is a complete graph, a star containing two 
terminals always exists (except in the trivial case when there is only one 
terminal). Moreover, due to the following lemma, a star with the best ratio has 
a simple form: It consists of the cheapest $\ell$ edges incident to its center 
vertex such that all leaves are terminals. As there are $n$ possible center 
vertices and at most $n$ incident edges to each center which can be sorted in 
time $\bigO{n \log n}$, the best ratio star can be found in time $\bigO{n^2 \log 
n}$.

\begin{lem}\label{lem:min-star}
Let $v$ be a vertex and denote by $q_1,q_2,\ldots$ the terminals adjacent
to~$v$, where $w(vq_1)\leq w(vq_2)\leq\cdots$, i.e., the terminals are ordered
non-decreasingly by the weight of the corresponding edge~$vq_i$. The star with
the best ratio having $v$ as its center has edge set $\{vq_1, vq_2, \ldots,
vq_\ell\}$ for some $\ell$.
\end{lem}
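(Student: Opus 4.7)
The plan is to prove the lemma by a standard exchange argument. First I would observe that in a best-ratio star centered at $v$, we may assume every edge in $C$ has a terminal as its other endpoint: an edge $vu$ with $u \notin R$ increases $w(C)$ strictly (recall every edge has weight strictly greater than $1$) without changing the set $Q$ of incident terminals, so removing such an edge can only decrease the ratio $w(C)/(|Q|-1)$. Hence the best-ratio star at $v$ corresponds to a nonempty subset $S \subseteq \{q_1, q_2, \ldots\}$ of terminal neighbours, with edge set $\{vq : q \in S\}$, and the claim reduces to showing that some optimal $S$ is a prefix $\{q_1, \ldots, q_\ell\}$.

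Next I would run the exchange argument on $S$. Let $k = |S|$ and consider the weight $w(C) = \sum_{q_i \in S} w(vq_i)$; since the denominator $|Q|-1$ depends only on $|S|$ (and possibly on whether $v$ itself is a terminal, which is fixed), minimising the ratio over all $S$ of size $k$ is equivalent to minimising $w(C)$ over such $S$. Suppose $S$ is not the prefix $\{q_1, \ldots, q_k\}$. Then there exist indices $i < j$ with $q_i \notin S$ and $q_j \in S$, and by the ordering $w(vq_i) \leq w(vq_j)$. Replacing $q_j$ by $q_i$ in $S$ yields a set of the same cardinality with weight no larger, hence ratio no larger. Iterating this swap terminates in the prefix $\{q_1, \ldots, q_k\}$, which therefore attains the minimum weight (and ratio) among all $k$-subsets.

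Finally, letting $\ell$ be the value of $k \in \{1, 2, \ldots\}$ that minimises the ratio of the prefix star $\{vq_1, \ldots, vq_k\}$ (over those $k$ for which $|Q| \geq 2$ holds, i.e.\ $k \geq 2$ if $v \notin R$ and $k \geq 1$ if $v \in R$), the resulting star has the best ratio among all stars centered at $v$. The main (and only) obstacle is purely bookkeeping around whether $v$ itself is a terminal, which merely shifts $|Q|$ by one and does not affect the exchange argument, so the proof is a clean application of the greedy/matroid-style exchange principle.
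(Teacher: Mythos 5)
Your proof is correct and follows essentially the same route as the paper's: first discard Steiner-vertex leaves (which only hurt the numerator), then use the exchange/swap argument to show a prefix of the weight-sorted terminal edges is optimal for each cardinality. The extra bookkeeping about whether $v$ itself is a terminal is harmless and only affects the denominator by a fixed shift, exactly as you note.
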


\begin{proof}
Let $C$ be an edge set of a star with center vertex $v$. First note that if
this star contains a Steiner vertex $w$ as a leaf, $vw$ can be removed from $C$
in order to decrease the ratio $w(C)/(|Q|-1)$, since only the terminals $Q$ of
the star are counted in the denominator. Also if $C$ does not contain some edge
$vq_i$ but an edge $vq_j$ with~$j>i$, then we may switch the edge $vq_j$ for
$vq_i$ in $C$ in order to optimize the ratio: The denominator stays the same,
but the numerator cannot increase, as the terminals $q_1,q_2,\ldots$ are
ordered non-decreasingly according to the weights of $vq_i$.
\end{proof}

We now formally describe different graphs resulting from each contraction step 
$t$, together with their terminal pairs. Initially, we set $G_0$ to the input 
graph, and in each step $t\geq 0$ we obtain a new graph $G_{t+1}$ from $G_t$ by 
contracting a set of edges $C_t$ in $G_t$, such that $C_t$ forms a star of 
minimum ratio in $G_t$. That is, we obtain~$G_{t+1}$ from $G_t$ by identifying 
all vertices in~$V(C_t)$, removing all resulting loops, and among the resulting 
parallel edges we delete all but the lightest one with respect to their weights. 
We also adjust the terminal pairs in a straightforward way: Let $v$ be the 
vertex of $G_{t+1}$ resulting from contracting~$C_t$. If $G_t$ had a terminal 
pair $\{s,s'\}$ such that $s$ is incident to some edge of $C_t$ while $s'$ is 
not (i.e., $s \in V(C_t)$ and $s' \notin V(C_t)$), then we introduce the 
terminal pair $\{v,s'\}$ for $G_{t+1}$. Also every terminal pair $\{s,s'\}$ of 
$G_t$ for which neither $s$ nor~$s'$ is incident to any edge of $C_t$ is 
introduced as a terminal pair of~$G_{t+1}$. Somewhat counter-intuitively, we 
also introduce the (trivial) terminal pair $\{v,v\}$ for $G_{t+1}$ if there was 
a terminal pair in $G_t$ for which both $s$ and $s'$ were incident to edges of 
$C_t$. In particular, this means that $v$ can be a leaf of a contracted star
in a subsequent step, even though the solution might not require any 
connection from $v$ to some other terminal. The reason we need to keep $v$ as a 
terminal is that otherwise the number of Steiner vertices of the considered 
solution, i.e., our parameter $p$, might increase. Still, our analysis below 
shows that contracting such a trivial terminal $v$ in a best-ratio star will not 
cause any problems.

The number of terminals in any given instance with terminal pairs 
$\{s_1,s'_1\},\ldots,\{s_k,s'_k\}$ is the size of the set $R=\{s_i,s'_i\mid 
1\leq i\leq k\}$. This in particular means that if a terminal appears in several pairs or is in a 
trivial terminal pair, it is only counted once. The algorithm stops 
contracting best-ratio stars when there are less than~${\tau}$ terminals left in $R$;
we have ${\tau} = \bigO{(p+c)^2/\eps^4}$, but we specify the precise value of 
$\tau$ in the analysis below. If the algorithm stops in step $\tilde t$, the solution lifting 
algorithm takes a feasible solution $F$ of $G_{\tilde t}$ and returns the union 
of $F$ and $\bigcup_{t=0}^{\tilde t}C_t$. Such a solution is clearly feasible, 
since we adapted the terminal pairs accordingly after each contraction. 
\cref{alg:weightedUndirSF} gives a pseudo-code of the resulting algorithm.

\begin{algorithm}[h!]
\SetKwInOut{Input}{input}\SetKwInOut{Output}{output}
\Input{undirected graph~$G = (V,E)$, list of terminal pairs
$\{s_1,s'_1\},\ldots,\{s_k,s'_k\}$, edge weights $w(e)\in\R^+_0$}
\Output{a forest $F \subseteq G$ that contains an $s_i$-$s'_i$ 
path for any $i\in\{1,\dots,k\}$}

\SetKwFunction{BestStar}{BestStar}
\SetKwProg{Fn}{Function}{}{}
\Fn{\BestStar{v}}{
\lIf{$v$ is a terminal}{$z \gets 1$}\lElse{$z \gets 0$}

$q_1, \ldots, q_k \gets$ terminals adjacent to $v$ sorted by the weight of edge
$vq_i$\;
	\tcc{$k\ge |R| - 1$ in the metric closure}

\For{$i$ in $2-z,\dots,k$}{ %
	$r_i \gets \sum_{j=1}^i w(vq_j)/(i + z - 1)$\;
}

\Return{edges $\{vq_1,\ldots,vq_i\}$ of a star with the smallest $r_i$}
}

\While{$|R|\geq\tau$}{
	$C \gets \arg\min \{ w(C_v) \mid C_v \gets \mathtt{BestStar}(v), v\in V\}$\;
	\tcc{a star exists in the metric closure}
	
	Contract $C$, then remove loops and among parallel edges, keep only the lightest.
	Adjust the terminal pairs accordingly.
}
Run \FPT algorithm parameterized by the number of terminals and connected components\label{step:fpt_alg}\;

\caption{An algorithm for solving \SF. If we stop before line~\ref{step:fpt_alg} we
obtain the reduced instance.}\label{alg:weightedUndirSF}
\end{algorithm}

\subsection{Analysis}
For the purpose of analysis, we consider a solution in the current graph $G_t$
that originates from a solution of the original instance $G_0$, but may
contain edges that are heavier than those in $G_t$.
More concretely, denote by $F^\ast_0$ a solution in $G_0$ with
at most~$p$ Steiner vertices and at most $c$ components, i.e.,
$F^\ast_0$ is a Steiner forest containing an $s_i$-$s'_i$ path for any $i$.
We remark that $F^\ast_0$ may or may not be an optimum solution of $G_0$,
and that we think of $F^\ast_0$ as a subgraph of $G_0$, isomorphic to a forest,
without isolated vertices.

Given $F^\ast_t$ for $t\geq 0$, we modify this solution to obtain a new feasible solution
$F^\ast_{t+1}$ on the terminal pairs of~$G_{t+1}$ (as defined above).
$F^\ast_{t+1}$ will again be a subgraph of~$G_{t+1}$ without isolated vertices. Note that the edges of the
contracted star $C_t$ might not be part of $F^\ast_t$. We still mimic the
contraction of the star in $F^\ast_t$: To obtain $F^\ast_{t+1}$ from~$F^\ast_t$,
we identify all leaves of $C_t$ (which are terminals by \cref{lem:min-star} and
thus part of the solution $F^\ast_t$) and possibly also the center $v$ of $C_t$
if it is in $F^\ast_t$. (Note that if $v$ is not a terminal, it may not be a part
of the solution $F^\ast_t$, which does not contain isolated vertices.)
This results in a vertex~$v'$ and a solution $F'_{t+1}$ for $G_{t+1}$, which 
however may well contain
some cycles or loops.

We now want to delete edges incident to $v'$ in such a way that we are left with 
an acyclic feasible solution for $G_{t+1}$. Let $Q_t$ denote the set of 
terminals in~$V(C_t)$. We repeat the following simple step to find an 
inclusion-wise minimal feedback edge set $D_t$ of $F'_{t+1}$ that is incident to 
$v'$: As long as there is a cycle $K$ in $F'_{t+1}$ (possibly, $K$ is a loop), 
remove from $F'_{t+1}$ an edge $e$ of $K$ such that in solution $F^\ast_t$,
edge $e$ is incident to $Q_t$ (thus, in particular, $e$ is incident to $v'$ in 
$F'_{t+1}$). We claim that $K$ must contain an edge $e$ that is incident to a 
terminal in~$Q_t$ in solution $F^\ast_t$. Indeed, observe first that $K$ must 
contain $v'$, since otherwise $K$ appears in $F^\ast_t$, which contradicts the 
acyclicity of $F^\ast_t$. Recall that the only vertex of $V(C_t)$ that may be a 
Steiner vertex is the center $v$ of the star $C_t$. If $K$ is a loop, then the 
only edge $e$ of $K$ connects two vertices in $V(C_t)$, so $e$ is incident to 
$Q_t$. Otherwise, $K$ contains two edges $e'$ and $e''$ incident to $v'$ that do 
not connect two vertices in $V(C_t)$, because edges connecting vertices in
$V(C_t)$ become loops after the contraction. Since both $e'$ and $e''$ cannot be incident to $v$ in $F^\ast_t$
(otherwise, $K$ would be a cycle in $F^\ast_t$), one of $e'$ or $e''$ must be 
incident to $Q_t$, which shows the claim. It follows that the above procedure is 
well-defined.

Once there is no cycle in $F'_{t+1}$, we set $F^\ast_{t+1} := F'_{t+1}$,
which now forms a forest connecting all
terminal pairs of~$G_{t+1}$. Note that for each edge in $F^\ast_{t+1}$ there is
a corresponding edge in~$G_{t+1}$, which however may be lighter in~$G_{t+1}$, as
from each bundle of parallel edges in $G_t$ we keep the lightest one, but this
edge may not exist in $F^\ast_t$.
Let $D_t:=E(F^\ast_t)\setminus E(F^\ast_{t+1})$ be the set of edges that were deleted from the solution.
(We remark that we do not optimize the total length of edges in $D_t$.)

\begin{figure}[bt]
\centering
\usetikzlibrary{positioning,calc}

\begin{tikzpicture}
\tikzstyle{terminal}=[draw,  black, fill, text=white]
\tikzstyle{steiner}=[draw, black!30, circle , fill, text=black, inner sep=2pt]
\tikzstyle{edgeOpt}=[draw, thick]
\tikzstyle{edgeForbidden}=[draw, ultra thick]
\tikzstyle{edgeContracted}=[draw, dashed]

\def\vdist{40}
\def\udist{20}
\def\dist{10}
\tikzset{
    between/.style args={#1 and #2}{
         at = ($(#1)!0.5!(#2)$)
    }
}

%

\begin{scope}[local bounding box=Gt]
\node[terminal](t1){};
\node[right=\dist pt of t1](dots1){$\cdots$};
\node[right=\dist pt of dots1, terminal](t2){};
\node[below=\udist pt of dots1,steiner](u_1){};
\node[draw=none, left=\dist pt of t1](z){};
\draw[edgeOpt] (t1) -- (u_1);
\draw[edgeOpt] (t2) -- (u_1);

\node[right=\dist pt of t2, terminal](t3){};
\node[right=\dist pt of t3](dots2){$\cdots$};
\node[right=\dist pt of dots2, terminal](t4){};
\node[below=\udist pt of dots2,steiner](u_l){};
\node[right=\udist pt of u_l,terminal](t7){};
\draw[edgeOpt] (t3) -- (u_l);
\draw[edgeOpt] (t4) -- (u_l);
\draw[edgeOpt] (t7) -- (u_l);

\node[between=u_1 and u_l](centerF1){};
\node[below=10pt of centerF1,steiner](u_2){};
\draw[edgeOpt] (u_1) -- (u_2);
\draw[edgeOpt] (u_2) -- (u_l);

\node[right=20 pt of t4,terminal](t5){};

\draw[edgeOpt] (t5) -- (t4);

\node[between=t1 and t5](center){};
\node[above=\vdist pt of center,steiner](v){$v$};
\draw [edgeContracted] (t1) -- (v);
\draw [edgeContracted] (t2) -- (v);
\draw [edgeContracted] (t3) -- (v);
\draw [edgeContracted] (t4) -- (v);
\draw [edgeContracted] (t5) -- (v);
\draw [edgeForbidden] (u_1) to[out=180, in=270] (z) to[out=90, in=180] (v);

\node[left=20pt of t1](label1){};
\node[above=\vdist pt of label1](labelTop){};
\node[between=label1 and labelTop](label2){$C_t$};
\node[below=\udist pt of label1](labelBottom){$F_t^*$};
\end{scope}

\begin{scope}[local bounding box=contr, xshift=200pt, yshift=20pt]
\node[terminal](t6){$v'$};
\node[below=\udist pt of t6](belowt6){};
\node[below=\udist pt of belowt6,steiner](u_3){};
\node[left=\dist pt of belowt6,steiner](u_4){};
\node[right=\dist pt of belowt6,steiner](u_5){};
\draw[edgeOpt] (u_3) -- (u_4);
\draw[edgeOpt] (u_3) -- (u_5);

\node[left=\dist pt of t6](c1){};
\draw[edgeOpt] (u_4) to[out=90, in=180] (t6);
\draw[edgeOpt] (u_4) -- (t6);

\node[right=\dist pt of t6](c2){};
\draw[edgeOpt] (u_5) to[out=90, in=0] (t6);
\draw[edgeOpt] (u_5) -- (t6);

\node[above=\udist pt of c1](c3){};
\node[above=\udist pt of c2](c4){};
\draw[edgeOpt] (t6) .. controls (c3) and (c4) .. (t6);

\node[right=\udist pt of u_5,terminal](t8){};
\draw[edgeOpt] (u_5) -- (t8);

\draw[edgeForbidden] (u_4) to[out=135,in=150] (t6);
\end{scope}

\begin{scope}[local bounding box=Gtt, xshift=290pt, yshift=20]
\node[terminal](t9){$v'$};
\node[below=\udist pt of t9](belowt9){};
\node[below=\udist pt of belowt9,steiner](u_6){};
\node[left=\dist pt of belowt9,steiner](u_7){};
\node[right=\dist pt of belowt9,steiner](u_8){};
\draw[edgeOpt] (u_6) -- (u_7);
\draw[edgeOpt] (u_6) -- (u_8);

\node[right=\udist pt of u_8,terminal](t10){};
\draw[edgeOpt] (u_8) -- (t10);

\draw[edgeForbidden] (u_7) to[out=135,in=150] (t9);

\node[right=\dist pt of u_6] (label4) {$F^*_{t+1}$};
\end{scope}


\node[left=15pt of t6](a2){};
\node[left=of a2] (a1){};
\draw[->,thick,>=stealth] (a1) -- (a2);

\node[right=10pt of t6](a3){};
\node[right=of a3](a4){};
\draw[->,thick,>=stealth] (a3) -- (a4);

\end{tikzpicture}
  \caption{An example of creating $F^\ast_{t+1}$ from $F^\ast_t$ after contracting $C_t$.
  Solid edges (including the thick one) belong to solutions $F^\ast_{t+1}$ and $F^\ast_t$,
  while edges in $C_t$ are dashed. Note that in this example,
  no edge in $C_t$ belongs to $F^\ast_{t}$, although this is not true in general.
  Set $D_t$ consists of all edges deleted in the second step, i.e., all edges 
  incident to $v'$, except for the thick edge,
  which cannot be in $D_t$ because it is not incident to any terminal.}
  \label{fig:contraction}
\end{figure}

To show that our algorithm only loses an $\eps$-factor compared to the cost of
the solution $F^\ast_0$, we will compare the cost of the edges $C_t$
contracted by our algorithm to the set $D_t$ o deleted edges of $F^\ast_t$. 
Note that for any two time steps $t\neq t'$, the sets $D_t$ and~$D_{t'}$, but 
also the sets $C_t$ and~$C_{t'}$, are disjoint. Thus if $w(C_t) \leq 
(1+\eps)w(D_t)$ for every~$t$, then our algorithm computes a 
$(1+\eps)$\hy{}approximation. Unfortunately, this is not always the case: there 
are contractions for which this condition does not hold (see 
\cref{fig:badContraction}) and we have to account for them differently.

\begin{figure}[hbt]
\centering
\usetikzlibrary{positioning,calc}

\begin{tikzpicture}[node distance=.8cm]
\tikzstyle{terminal}=[draw, fill, black, text=white]
\tikzstyle{steiner}=[draw, black!30, circle , fill, text=black, inner sep=2pt]
\tikzstyle{edgeOpt}=[draw, ultra thick]

\def\dist{1.2}

\node[steiner] at (2,4) (s0) {};
\node[steiner] at (2,6) (s1) {};
\node[steiner] at (2,8) (s2) {};
\node[steiner] at (2,10) (s3) {};
\node[steiner] at (2,12) (s4) {};
\node[steiner] at (-2,12) (s6) {};
\node[steiner] at (-2,10) (s7) {};
\node[steiner] at (-2,8) (s8) {};
\node[steiner] at (-2,6) (s9) {};
\node[steiner] at (-2,4) (s10) {};
\node[terminal] at (-2,2) (t2) {$t_2$} ;
\node[terminal] at (2,2) (t1) {$t_1$} ;
\node[terminal] at (-4,4) (t3) {} ;
\node[terminal] at (4,4) (t4) {} ;
\node[terminal] at (-4,6) (t5) {} ;
\node[terminal] at (4,6) (t6) {} ;
\node[terminal] at (-4,8) (t7) {} ;
\node[terminal] at (4,8) (t8) {} ;
\node[terminal] at (-4,10) (t9) {} ;
\node[terminal] at (4,10) (t10) {} ;
\node[terminal] at (-4,12) (t11) {} ;
\node[terminal] at (4,12) (t12) {} ;

\node[] at (0,12) (dots) {$\cdots$};

\draw[edgeOpt] (s0) -- (t4) node[right=of s0,below] {$M$};
\draw[edgeOpt] (s1) -- (t6) node[right=of s1,below] {$M$};
\draw[edgeOpt] (s2) -- (t8) node[right=of s2,below] {$M$};
\draw[edgeOpt] (s3) -- (t10) node[right=of s3,below] {$M$};
\draw[edgeOpt] (s4) -- (t12) node[right=of s4,below] {$M$};
\draw[edgeOpt] (s6) -- (t11) node[left=of s6,below] {$M$};
\draw[edgeOpt] (s7) -- (t9) node[left=of s7,below] {$M$};
\draw[edgeOpt] (s8) -- (t7) node[left=of s8,below] {$M$};
\draw[edgeOpt] (s9) -- (t5) node[left=of s9,below] {$M$};
\draw[edgeOpt] (s10) -- (t3) node[left=of s10,below] {$M$};
\draw[edgeOpt] (s10) -- (t3) node[left=of s10,below] {$M$};

\draw[edgeOpt] (s0) -- (s1) node[midway, left=7pt] {$1$};
\draw[edgeOpt] (s2) -- (s1) node[midway, left=7pt] {$1$};
\draw[edgeOpt] (s2) -- (s3) node[midway, left=7pt] {$1$};
\draw[edgeOpt] (s3) -- (s4) node[midway, left=7pt] {$1$};
\draw[edgeOpt] (s6) -- (s7) node[midway, left=7pt] {$1$};
\draw[edgeOpt] (s7) -- (s8) node[midway, left=7pt] {$1$};
\draw[edgeOpt] (s8) -- (s9) node[midway, left=7pt] {$1$};
\draw[edgeOpt] (s10) -- (s9) node[midway, left=7pt] {$1$};
\draw[edgeOpt] (dots) -- (s6) node[midway, below] {$1$};
\draw[edgeOpt] (dots) -- (s4) node[midway, below] {$1$};

\draw (t1) -- (t2) node[midway, below] {$M$};
\draw[edgeOpt] (t1) -- (s0) node[midway, left=7pt] {$1$};
\draw[edgeOpt] (t2) -- (s10) node[midway, left=7pt] {$1$};

\end{tikzpicture}
  \caption{An example of a bad contraction. The optimum solution consists of 
the thick edges. The numbers of terminals and the weight $M$ can be arbitrarily 
large. Note that any edge in the metric closure between any two terminals has
length of at least $M$ if there are at least $M+1$ terminals.
The star centered at $t_1$ and containing the incident terminal $t_2$
has ratio $M$, while every other star in the metric closure of the graph 
has ratio slightly more than $M$. By contracting the star $t_1,t_2$ we 
create a cycle in the optimum solution containing edges of weight $1$ only. 
Thus, for a sufficiently large value of $M$ the contraction cannot be charged.}
  \label{fig:badContraction}
\end{figure}

\begin{dfn}
If $w(C_t) \leq (1+\eps)w(D_t)$ we say that the contracted edge set $C_t$ in
step~$t$ is \emph{good}; otherwise $C_t$ is \emph{bad}. Moreover, if $F^\ast_t$
has strictly more components than $F^\ast_{t+1}$, we say that $C_t$ is
\emph{multiple-component}, otherwise it is \emph{single-component}.
\end{dfn}

Our goal is to show that the total weight of bad contractions is bounded by an
$\eps$-fraction of the weight of $F^\ast_0$. 
We start by proving that if the set $Q_t$ of terminals in a contracted edge set
$C_t$ is sufficiently large, then the contraction is good. 
Intuitively, this means that skewing the ratio such that large stars are 
favored (compared to just picking the star with the smallest average weight) tends 
to result in good contractions. We define
\[
{\lambda} := \frac{(1+\eps)(p+c)}{\eps}.
\]

\begin{lem}\label{lem:largeContraction}
If $|Q_t|\ge {\lambda}$, then the contracted edge set $C_t$ is good.
\end{lem}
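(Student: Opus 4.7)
The plan is to set $\rho := w(C_t)/(|Q_t|-1)$, which by the algorithm's choice of $C_t$ is the minimum ratio over all stars of $G_t$ with at least two terminals; consequently every star $S$ in $G_t$ with $|Q(S)| \geq 2$ satisfies $w(S) \geq \rho(|Q(S)|-1)$. The goal then reduces to the key inequality $w(D_t) \geq \rho(|Q_t|-p-c)$. Once this is proven, substituting $w(C_t) = \rho(|Q_t|-1)$ shows that $w(C_t) \leq (1+\eps) w(D_t)$ follows from $(1+\eps)(|Q_t|-p-c) \geq |Q_t|-1$, which rearranges to $\eps |Q_t| \geq (1+\eps)(p+c) - 1$; this is implied by the hypothesis $|Q_t| \geq \lambda = (1+\eps)(p+c)/\eps$.

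To prove the key inequality, I would partition $D_t$ into stars and apply the minimum-ratio bound to each. By the defining property of $D_t$, every one of its edges has at least one endpoint in $Q_t$ and at most one Steiner endpoint. Hence I would partition $D_t$ into: (i) for each Steiner vertex $v$ incident to some $D_t$-edges, the star $S_v$ of those edges (centered at $v$, all leaves in $Q_t$); (ii) for each non-$Q_t$ terminal $v$ incident to some $D_t$-edges, the star $S_v$ of those edges (centered at $v$, leaves in $Q_t$); and (iii) each remaining edge, having both endpoints in $Q_t$, taken as a singleton star. Each star in $F^\ast_t$ corresponds to a star in $G_t$ of the same vertex structure, with edge weights that can only be smaller, so the ratio bound transfers.

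Every star produced by this decomposition has $|Q(S)|\geq 2$ except for type-(i) stars consisting of a single edge (a Steiner center with just one terminal leaf in $Q_t$). Since $F^\ast_t$ contains at most $p$ Steiner vertices, there are at most $p$ such problematic stars. Applying the ratio bound to all non-problematic stars and discarding the problematic ones, a direct count (letting $P_1, P_2$ denote the numbers of type-(i) stars with $|S_v|=1$ and $|S_v|\geq 2$, so that $P_1+P_2 \leq p$) yields $w(D_t) \geq \rho(|D_t| - P_1 - P_2) \geq \rho(|D_t|-p)$, and combining with $|D_t| \geq |Q_t| - c$ concludes the argument.

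The main obstacle is the handful of edges for which the ratio bound does not directly apply, namely those forming single-edge type-(i) stars. The resolution is that each such bad star occupies a distinct Steiner vertex of $F^\ast_t$, so at most $p$ of them can exist; this is precisely where the $p$ term in the lower bound (and hence in $\lambda$) comes from. One minor subtlety to verify carefully is that the stars of $F^\ast_t$ used in the decomposition truly correspond to valid stars in $G_t$ on which $C_t$ is minimum-ratio, which follows from the fact that each edge of $F^\ast_t$ has a counterpart in $G_t$ of weight at most $w_{F^\ast_t}$.
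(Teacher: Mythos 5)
Your proof is correct and follows essentially the same route as the paper: you lower-bound $w(D_t)$ by decomposing $D_t$ into stars (losing the ratio bound only on single-edge stars at Steiner centers, of which there are at most $p$), combine this with $|D_t|\ge|Q_t|-c$ to get $w(D_t)\ge \rho(|Q_t|-p-c)$, and finish with the same arithmetic using $|Q_t|\ge\lambda$. The only cosmetic difference is that you group terminal--terminal edges of $D_t$ by a common non-$Q_t$ terminal center, whereas the paper counts them individually; the resulting bound is the same.
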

\begin{proof}
Let $r = w(C_t) / (|Q_t| - 1)$ be the ratio of the contracted star, and let 
$\ell'$ be the number of deleted edges in $D_t$ that connect two terminals. 
Note that any such edge has weight at least $r$, since it spans a star with two 
terminals, which has ratio equal to its weight, and since each edge in 
$F^\ast_t$ (of which $D_t$ is a subset) can only be heavier than the 
corresponding edge in the current graph $G_t$.

Let $u_1, \dots, u_q$ be the Steiner vertices adjacent to edges in $D_t$, and 
let $\ell_i$ be the number of edges in $D_t$ incident to one such Steiner 
vertex $u_i$ (see \cref{fig:bigStar}). Since $D_t$ is a feedback edge set in 
which any edge was incident to a terminal in $Q_t$ before the contraction, 
there is no edge in $D_t$ which connects two Steiner vertices. Consider the 
star spanned by the $\ell_i$ edges of $D_t$ incident to $u_i$. If $\ell_i\geq 
2$, the ratio of this star is at least $r$, since its edges are at least as 
heavy as the corresponding edges in $G_t$ and the algorithm chose a star with 
the minimum ratio in $G_t$. Thus, the weight of edges in $D_t$ incident to 
$u_i$ is at least $r(\ell_i - 1)$. In the case where $\ell_i = 1$, the lower 
bound $r(\ell_i - 1)=0$ on the weight holds trivially.

\begin{figure}[bt]
\centering
\usetikzlibrary{positioning,calc}

\begin{tikzpicture}
\tikzstyle{terminal}=[draw, fill, black]
\tikzstyle{steiner}=[draw, black!30, circle , fill, text=black, inner sep=2pt]
\tikzstyle{edgeOpt}=[draw, thick]

\def\vdist{60}
\def\udist{40}
\def\dist{10}
\tikzset{
    between/.style args={#1 and #2}{
         at = ($(#1)!0.5!(#2)$)
    }
}

\def\centerarc(#1)(#2:#3:#4)
    {  ($(#1)+({#4*cos(#2)},{#4*sin(#2)})$) arc (#2:#3:#4) }


\node[terminal](t1){};
\node[right=\dist pt of t1](dots1){$\cdots$};
\node[right=\dist pt of dots1, terminal](t2){};
\node[below=\udist pt of dots1,steiner](u_1){$u_1$};
\draw (t1) -- (u_1);
\draw (t2) -- (u_1);

\node[right=\dist pt of t2](dots3){$\cdots \cdots$};

\node[right=\dist pt of dots3, terminal](t3){};
\node[right=\dist pt of t3](dots2){$\cdots$};
\node[right=\dist pt of dots2, terminal](t4){};
\node[below=\udist pt of dots2,steiner](u_l){$u_q$};
\draw (t3) -- (u_l);
\draw (t4) -- (u_l);

\node[right=of t4, terminal](t5){};
\node[right=of t5, terminal](t6){};
\node[right=\dist pt of t6](dots4){$\cdots$};
\node[right=\dist pt of dots4, terminal](t7){};

\node[between=t5 and t7](middle_5_7){};
\node[below=of middle_5_7](middle_down_5_7){};

\draw (t5) -- (t6);
\draw (t5) .. controls (middle_down_5_7) .. (t7);
\draw(t4) -- (t5);

\node[between=t1 and t7](center){};
\node[above=\vdist pt of center,steiner](v){$v$};
\draw (t1) -- (v);
\draw (t2) -- (v);
\draw (t3) -- (v);
\draw (t4) -- (v);
\draw (t5) -- (v);
\draw (t6) -- (v);
\draw (t7) -- (v);

\draw \centerarc(u_1)(50:130:25pt);
\draw \centerarc(u_l)(50:130:25pt);

\node[right=40pt of t7](label1){$Q$};
\node[above=\vdist pt of label1](labelTop){};
\path (label1) -- (labelTop) node[midway] {$C$};
\node[below=\udist pt of label1](labelBottom){};
\path (label1) -- (labelBottom) node[midway] {$D$};
\node[below=5pt of middle_down_5_7](label2){$\ell'$};
\node[above right = 0pt and 10pt of u_1](label3){$\ell_1$};
\node[above right = 0pt and 10pt of u_l](label4){$\ell_q$};

\end{tikzpicture}
 \caption{The contracted star $C_t$ and a part of the optimal solution spanned 
by the terminals $Q_t$ of the star $C_t$.}
 \label{fig:bigStar}
\end{figure}

Any edge in $D_t$ not incident to any Steiner vertex $u_i$ connects two 
terminals. Therefore, we have $\ell' + \sum_{i=1}^q \ell_i = |D_t|$ as any edge 
in $D_t$ is incident to a terminal in $Q_t$ and we thus do not count any edge 
twice.

We observe next that from the construction of~$F^\ast_t$ we get that there are 
at least $|Q_t| - c$ edges in~$D_t$. Recall that we contracted terminals 
in~$Q_t$ in the forest $F^\ast_t$ which has at most $c$ connected components in 
order to obtain $F^\ast_{t+1}$. Indeed, a forest on $n$ vertices and $c$ 
components has $n-c$ edges. We decrease the number of vertices of $F^\ast_t$ by 
at least $|Q_t| - 1$ (one more if the center of the star with edge set $C_t$ was 
a Steiner vertex present in $F^\ast_t$), and we decrease the number of 
components by at most~$c-1$. Let $z$ be the number of vertices in $F^\ast_t$. We 
conclude that the forest $F^\ast_{t+1}$ has at most~$z - |Q_t| + 1$ vertices 
and at least~$1$ connected component. Thus, there are at most $z - |Q_t|$ edges 
in~$F^\ast_{t+1}$ and we get that $|D_t| \ge z - c - (z - |Q_t|) = |Q_t| - c$.

Since $F^\ast_t$ contains at most $p$ Steiner vertices we have $q\le p$, and we 
obtain
\[
	w(D_t) \geq r\ell' + \sum_{i=1}^q r(\ell_i - 1) = r \left( \ell' +
\sum_{i=1}^q  \ell_i  - q \right) = r(|D_t|-q)
	\geq r\bigl(|Q_t| - p - c\bigr)\,.
\]
Finally, using $|Q_t|\ge {\lambda}$ we bound $w(C_t)$ by $(1+\eps)w(D_t)$ as 
follows:
\begin{multline*}
	(1+\eps)w(D_t) \geq
	(1+\eps)r\bigl(|Q_t| - p - c\bigr) =
	r|Q_t| + r\big(\eps |Q_t| - (1+\eps)(p+c) \big)\\ \geq
	w(C_t) + r\left(\eps\frac{(1+\eps)(p+c)}{\eps}  - (1+\eps)(p+c) \right)
=
	w(C_t)\,.
\end{multline*}
\end{proof}

Note that there may be a lot of contractions with $|Q| < {\lambda}$. However, we
show
that only a bounded number of them is actually bad. The key idea is to consider
contractions with ratio in an interval $\bigl((1+\delta)^i; (1+\delta)^{i+1}\bigr]$ for
some $\delta > 0$ and integer~$i$.
Due to the rescaling of weights every star belongs to an interval with $i \geq 0$.
The following crucial lemma of our
analysis shows that the number of bad single-component contractions in each
such interval is bounded in terms of $p$ and $\varepsilon$, if $\delta$ is a
function of~$\eps$. In particular, let $\delta := \sqrt{1+\eps} - 1$, so that
$(1+\delta)^2 = 1+\eps$. We call an edge set $C$ with ratio $r$ in the $i$-th
interval, i.e., with $r\in \bigl((1+\delta)^i; (1+\delta)^{i+1}\bigr]$, an
\emph{$i$-contraction}, and define
\[
{\kappa} := \frac{(1+\delta) p}{\delta} + p\,.
\]

\begin{lem}\label{lem:UB-numOfNotChargeable}
For any integer $i$ the number of bad single-component $i$-contractions
is at most ${\kappa}$.
\end{lem}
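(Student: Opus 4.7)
The plan is to derive a quantitative inequality from the hypotheses of a bad single-component $i$-contraction, and then to turn it into a bound on $|B_i|$ by splitting the contractions into two groups and applying a charging argument. For each such contraction at step $t$, I re-use the counting from the proof of \cref{lem:largeContraction}: edges of $D_t$ between two terminals are single-edge stars of ratio $\geq r_t$, and edges incident to the $q_t\leq p$ Steiner vertices of $F^*_t$ in $D_t$ form sub-stars of ratio $\geq r_t$, giving $w(D_t)\geq r_t(|D_t|-q_t)$. Combining this with the interval bounds $r_t>(1+\delta)^i$, $r_t\leq (1+\delta)^{i+1}$, $w(C_t)=r_t(|Q_t|-1)$, and the bad condition $w(C_t)>(1+\eps)w(D_t)=(1+\delta)^2 w(D_t)$ yields
\[
|D_t|-q_t \;<\; \frac{|Q_t|-1}{1+\delta}.
\]

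Next, I use the single-component assumption: since $Q_t$ (together with the center $v$ if $v\in F^*_t$) lies in a single tree-component of $F^*_t$, an elementary count of vertices, edges, and components through the identification, the removal of loops, and the deletion of a minimal feedback set yields $|D_t|+\ell_t\in\{|Q_t|-1,\,|Q_t|\}$, where $\ell_t$ denotes the number of loop-edges of $F^*_t$ discarded. Substituting into the previous inequality and rearranging produces the core relation
\[
\ell_t + q_t \;>\; \frac{\delta(|Q_t|-1)}{1+\delta},
\]
so in particular $\ell_t+q_t\geq 1$ for every bad single-component $i$-contraction. I then split these contractions according to whether $v$ is a Steiner vertex of $F^*_t$ (call these \emph{type B}) or not (\emph{type A}). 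A type B contraction absorbs a Steiner vertex of $F^*_t$ into a terminal, and since the count of Steiner vertices of $F^*$ starts at $\leq p$ and is monotone non-increasing, at most $p$ type B contractions occur overall.

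For type A, summing the core relation over all such contractions and using $|Q_t|-1\geq 1$ gives $|B^A_i| < \frac{1+\delta}{\delta}\sum_t(\ell_t+q_t)$, reducing the task to showing $\sum_t(\ell_t+q_t)\leq p$. This final charging is the main technical obstacle: each Steiner-vertex incidence counted by $q_t$ and each loop edge counted by $\ell_t$ in a type A bad contraction must be assigned to a distinct Steiner vertex of $F^*_0$ with bounded total charge. The argument would exploit the global disjointness of the sets $D_t$ and of the loop-edge sets as subsets of $E(F^*_0)$, together with the minimum-ratio constraint within interval $i$, which prevents the same Steiner vertex of $F^*_0$ from accumulating too much charge across multiple type A bad contractions without itself being absorbed into a terminal. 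Once this accounting gives at most $(1+\delta)p/\delta$ type A contractions, combining with the type B bound yields $|B_i|\leq p+(1+\delta)p/\delta=\kappa$.
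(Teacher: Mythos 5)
Your argument is incomplete at exactly the point you flag, and the quantity you reduce the lemma to cannot be bounded as you hope. The claim $\sum_t(\ell_t+q_t)\leq p$ over type~A bad single-component $i$-contractions is not just unproven: the disjointness of the sets $D_t$ only bounds the number of steps $t$ in which a fixed Steiner vertex $u$ of $F^\ast_0$ contributes to $q_t$ by the degree of $u$ in $F^\ast_0$, which is not a function of $p$. A Steiner vertex of degree $d$ can lose its incident $F^\ast$-edges one at a time across $d$ different bad contractions without ever being identified into a terminal, so type~A contractions cannot be charged to distinct Steiner vertices with charge $1$. And even if one could show that each Steiner vertex is charged at most $(1+\delta)/\delta$ times, your bookkeeping $|B^A_i|<\frac{1+\delta}{\delta}\sum_t(\ell_t+q_t)$ would then only give roughly $\bigl((1+\delta)/\delta\bigr)^2 p$, which overshoots $\kappa$. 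So the missing "charging" is not a technical detail to be filled in along your lines; it is the heart of the lemma, and your core relation (which is a correct, if slightly weaker, consequence of badness and the single-component property) does not by itself lead there.

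What actually closes the argument (and what the paper does) is a contradiction anchored at a single time step: let $\tilde t$ be the \emph{first} bad single-component $i$-contraction. From every bad single-component $i$-contraction at step $t$ one extracts an explicit light edge $e_t\in D_t$ of weight at most $(1+\delta)^{i-1}$, using $w(C_t)>(1+\eps)w(D_t)$, the interval upper bound $w(C_t)\le(1+\delta)^{i+1}(|Q_t|-1)$, and $|D_t|\ge|Q_t|-1$. These edges are pairwise distinct and all of them already exist in $F^\ast_{\tilde t}$. At step $\tilde t$ none of them can join two terminals (it would be a star of ratio at most $(1+\delta)^{i-1}$, beating the ratio $>(1+\delta)^i$ of the star the algorithm picked), and at most $p-1$ of them join two Steiner vertices since $F^\ast_{\tilde t}$ is a forest with at most $p$ Steiner vertices. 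If there were more than $\kappa$ bad single-component $i$-contractions, pigeonholing the remaining more than $\kappa-p$ terminal--Steiner light edges onto at most $p$ Steiner vertices yields one vertex carrying more than $(1+\delta)/\delta$ of them, i.e.\ a star of ratio at most $(1+\delta)^i$ available to the algorithm already at step $\tilde t$ --- contradicting the minimum-ratio choice there. Your closing remark about "the minimum-ratio constraint within interval $i$" gestures at this mechanism, but without fixing the common anchor step $\tilde t$, extracting light edges rather than counting inequalities, handling where those edges live at that step, and using the per-vertex threshold $(1+\delta)/\delta$ instead of charge $1$, the proof does not go through.
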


\begin{proof}
Let us focus on bad single-component $i$\hy{}contractions only, which we here 
just call bad $i$\hy{}contractions for brevity. Suppose for a contradiction that 
the number of bad $i$-contractions is larger than ${\kappa}$. The plan is to 
show that at each of the $\kappa$ steps $t$ in which a bad $i$-contraction 
happens, there must be a cheap edge $e_t$ in the corresponding set $D_t$. Since 
the deleted sets $D_t$ are disjoint, all of these edges are also present in 
$G_{\tilde{t}}$ of the first step $\tilde{t}$ with a bad $i$-contraction, i.e., 
${\tilde{t}}$ is the minimum among all $t$ for which $w(C_t)>(1+\eps)w(D_t)$ and 
$w(C_t)/\bigl(|Q_t|-1\bigr)\in \bigl((1+\delta)^i; (1+\delta)^{i+1}\bigr]$ and 
the contraction is single-component. We then show that among all the cheap edges 
in $G_{\tilde{t}}$ there is a ``light'' star with ratio at most $(1+\delta)^i$, 
and consequently the algorithm would do a $j$-contraction for some $j<i$. This 
leads to a contradiction, since we assumed that in step ${\tilde{t}}$ the 
contraction has ratio in interval~$i$. Note that it is sufficient to find such a 
light star in $F^\ast_{\tilde{t}}$ as for each edge in $F^\ast_{\tilde{t}}$ 
there is an edge in the graph $G_{\tilde{t}}$ between the same vertices of the 
same weight or even lighter.

We claim that for each step $t$ in which the algorithm does a bad
$i$-contraction there is an edge $e_t\in D_t$ with weight at most
$(1+\delta)^{i-1}$. We have $w(C_t) > (1+\eps)w(D_t)$ as $C_t$ is bad
and $w(C_t) \le (1+\delta)^{i+1}\bigl(|Q_t| - 1\bigr)$ as the ratio of $C_t$ is in
interval $i$. Putting it together and using the definition of $\delta$ we
obtain
\[
w(D_t) < \frac{(1+\delta)^{i+1}}{1+\eps}\bigl(|Q_t| - 1\bigr) = (1+\delta)^{i-1}\bigl(|Q_t| - 1\bigr)\,.
\]
Because $C_t$ is single-component, we have $|D_t| \geq |Q_t| - 1$ and
therefore there is an edge $e_t\in D_t$ with
weight at most $(1+\delta)^{i-1}$, which proves the claim.

Note that the edge $e_t$ also exists at time step $\tilde{t}$, as $\tilde{t}\leq 
t$ and $F^\ast_{t}$ is obtained from $F^\ast_{\tilde{t}}$ by a sequence of edge 
contractions and deletions. At time $\tilde{t}$ it cannot be that $e_t$ connects 
two terminals, since we assume that the algorithm picked a star of ratio more 
than $(1+\delta)^i$ in step~$\tilde{t}$ (recall that each edge connecting two 
terminals is a star with ratio equal to its weight). It may happen though that 
$e_t$ connects two Steiner vertices in step~$\tilde{t}$. We discard any such 
edge $e_t$ that connects two Steiner vertices in step $\tilde{t}$. That is, let 
$S$ be the set of light edges~$e_t$ that connect a Steiner vertex and a terminal 
in step $\tilde{t}$. Note that edges $e_t$ and $e_{t'}$ for steps $t < t'$ with 
bad $i$-contractions are distinct, because $D_t\cap D_{t'} = \emptyset$ as all 
edges in $D_t$ are deleted from $F^\ast_t$. There are at most $p-1$ edges 
$e_t\notin S$ connecting two Steiner vertices in $F^\ast_{\tilde{t}}$, since 
$F^\ast_{\tilde{t}}$ is a forest and the solution from which 
$F^\ast_{\tilde{t}}$ is derived contained at most $p$ Steiner vertices. 
Summarizing, we assume that there are more than ${\kappa}$ bad single-component 
$i$-contractions, each of which contributes one edge $e_t$ that is incident to a 
Steiner vertex, and we remove less than $p$ edges $e_t$ that connect two 
Steiner vertices, which implies that set $S$ of the remaining edges $e_t$ 
satisfies $|S| > {\kappa} - p$.

At step ${\tilde{t}}$ there must be a Steiner vertex $v$ in $F^\ast_{\tilde{t}}$
incident to at least $|S| / p > ({\kappa} - p) / p \geq (1+\delta) / \delta$
edges in $S$. Consider a star $C$ with $v$ as the center and with edges from $S$
that are incident to~$v$; we have $|C| \ge (1+\delta) / \delta$. The ratio of
this star is at most $|C|(1+\delta)^{i-1} / \bigl(|C| - 1\bigr)$. Since $|C|  / \bigl(|C| - 1\bigr)\le (1+\delta)$ (by a routine calculation) we get that the ratio of $C$ is at
most $(1+\delta)^i$ which is a contradiction to the assumption that the
algorithm does an $i$-contraction in step~${\tilde{t}}$.
\end{proof}

We remark that the proof of \cref{lem:UB-numOfNotChargeable} does not use 
that the number of terminals in a bad $i$-contraction is bounded by ${\lambda}$, 
as shown in \cref{lem:largeContraction}. Instead we will bound the total weight 
of bad contractions in terms of ${\lambda}$. For this let $j$ be the largest 
interval of any contraction during the whole run of the algorithm, i.e., the 
ratio of every contracted star is at most $(1+\delta)^{j+1}$. As there are at 
most $\kappa$ bad single-component contractions in each interval and at most 
$c-1$ (bad) multiple-component contractions, and as the interval size grows 
exponentially, we can upper bound the total weight of bad contractions in terms 
of $\kappa, c, \lambda$, and $(1 + \delta)^j$. We can also lower bound the 
weight of $w(F^\ast_0)$ in terms of $(1 + \delta)^j$ and the lower bound $\tau$ 
on the number of terminals in the graph. If $\tau$ is large enough, then the 
total weight of edge sets $C_t$ of bad contractions is at most $\varepsilon\cdot 
w(F^\ast_0)$. These ideas are summarized in the next lemma.

\begin{lem}\label{lem:weightOfNotChargeable}
 Let $j$ be the largest interval of any contraction during the whole run of the algorithm
 and let $W_B$ be the total weight of the union edge sets~$C_t$ of bad contractions.
 Then, the following holds.
 \begin{enumerate}[(1)]
  \item $W_B \le ({\kappa} + c)  \cdot\lambda \cdot (1+\delta)^{j+2} / 
\delta\,.$
\item $w(F^\ast_0) \ge (1+\delta)^j\cdot ({\tau} - c -p)$.
\item Let
\[
{\tau} :=  ({\kappa} + c)  \cdot\lambda \cdot \frac{(1+\delta)^2}{\eps\delta}
+ c + p\,,
\]
then $W_B\le \eps\cdot w(F^\ast_0)$.
 \end{enumerate}
\end{lem}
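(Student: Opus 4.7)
The three claims are proved in order, each building on the previous.

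For (1), the contrapositive of \cref{lem:largeContraction} gives $|Q_t|<\lambda$ for every bad contraction, so a bad $i$-contraction has weight at most $(1+\delta)^{i+1}\lambda$. Multiplying this by the per-interval count bound of $\kappa$ from \cref{lem:UB-numOfNotChargeable} and summing the geometric series over $i=0,\ldots,j$ handles the bad single-component contractions, while \cref{lem:UB-numOfMultCompBadContractions} lets me bound the few bad multiple-component contractions by using the maximum possible interval index $j$. Combining the two sums and simplifying via the trivial inequality $(1+\delta)^{j+1}\leq (1+\delta)^{j+2}/\delta$ yields (1).

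For (2), I would pick an arbitrary step $t^*$ at which the algorithm performs a $j$-contraction. The algorithm is still running at $t^*$, so $G_{t^*}$ has at least $\tau$ terminals, all belonging to $F^\ast_{t^*}$; thus $F^\ast_{t^*}$ is a forest with $T\geq\tau$ terminals, $S\leq p$ Steiner vertices, and $C\leq c$ components, giving $T+S-C$ edges in total. The crux is an edge-weight accounting: partition the edges of $F^\ast_{t^*}$ by endpoint types into $E_{TT}$, $E_{ST}$, and $E_{SS}$. Since the algorithm picks a minimum-ratio star in $G_{t^*}$ whose ratio lies in interval $j$, every star in $G_{t^*}$ has ratio strictly greater than $(1+\delta)^j$, and this bound carries over to $F^\ast_{t^*}$ because corresponding edges in $F^\ast_{t^*}$ are no lighter than those in $G_{t^*}$. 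Hence every terminal--terminal edge contributes at least $(1+\delta)^j$, and for every Steiner vertex $v$ with $\ell_v\geq 2$ terminal neighbors the associated star contributes at least $(1+\delta)^j(\ell_v-1)$. Summing these contributions and using $|E_{SS}|\leq S-1\leq p-1$ together with the fact that at most $p$ Steiner vertices have any terminal neighbor, a routine calculation gives $w(F^\ast_{t^*})\geq (1+\delta)^j(T-p-c)\geq (1+\delta)^j(\tau-2p-c)$. Finally, $w(F^\ast_0)\geq w(F^\ast_{t^*})$ because each contraction step only deletes the edges of $D_t$ and may replace remaining edges by lighter parallel ones, establishing (2).

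Part (3) is then a one-line substitution: plugging the stated value of $\tau$ into (2) makes the right-hand side of the lower bound exactly $(\kappa+c)\lambda(1+\delta)^{j+2}/(\eps\delta)$, which is $W_B/\eps$ by (1). The main technical obstacle I anticipate is the edge-counting step in part~(2), specifically avoiding double-counting across stars centered at different vertices and handling Steiner--Steiner edges, to which the minimum-ratio lower bound does not directly apply; partitioning the edges by endpoint type and discarding the $E_{SS}$ contribution resolves both issues at the mild cost of losing an additive $p$ term.
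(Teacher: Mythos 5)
Your proposal is correct and follows essentially the same argument as the paper: part (1) via per-interval counts times the $\lambda(1+\delta)^{i+1}$ weight bound and a geometric series, part (2) via the minimum-ratio property at a step performing a $j$-contraction applied to terminal–terminal edges and Steiner-centered stars of $F^\ast$ (your $E_{TT}/E_{ST}/E_{SS}$ partition is just a rephrasing of the paper's $\ell'$ and $\ell_i$ accounting, and in fact yields a marginally tighter bound than needed), and part (3) by direct substitution. No gaps.
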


\begin{proof}
To prove (1), observe first that if $C_t$ is a multiple-component edge set, 
$F^\ast_{t+1}$ must have at least one component fewer than $F^\ast_{t}$.
Since $F^\ast_0$ has at most~$c$ components, there are less than $c$ bad 
multiple-component contractions.
Each of them has at most $\lambda$ terminals by \cref{lem:largeContraction} and 
has ratio at most $(1+\delta)^{j+1}$ by the choice of~$j$. Thus, the total 
weight of all bad multiple-component contractions can be bounded by 
$(1+\delta)^{j+1}\cdot c \cdot \lambda$.

Note that it follows from \cref{lem:largeContraction,lem:UB-numOfNotChargeable}
that the total weight of bad single-component $i$-contractions is at most
${\kappa} \cdot {\lambda} \cdot (1+\delta)^{i+1}$.
The bound on the total weight of bad contractions follows by summing over all
intervals in which the algorithm possibly does a contraction:
\begin{align*}
{\kappa} &\cdot {\lambda} \cdot \sum_{1\le i\le j} (1 + \delta)^{i+1} + c \cdot \lambda \cdot (1+\delta)^{j+1}
\\
&= {\kappa} \cdot {\lambda} \cdot \frac{(1+\delta)^{j + 2} - (1+\delta)}{(1 + \delta) - 1} + c \cdot \lambda \cdot (1+\delta)^{j+1}
\\
&\le ({\kappa} + c)  \cdot\lambda\cdot \frac{(1+\delta)^{j + 2}}{\delta}\,.
\end{align*}
This proves (1).

For (2), when our algorithm contracted a star having ratio $r \ge (1+\delta)^j$ 
in the largest interval $j$ in some step $t$, all stars in $G_t$ with at least 
two terminals had ratio at least $r$. Let $v_1, \ldots, v_q$ be the Steiner 
vertices of $F^\ast_t$ and $u_1,\dots,u_{q'}$ be Steiner vertices of $F^\ast_t$ which are connected to at least one terminal. 
Thus, if $\ell_i$ is the number of terminals adjacent to 
$u_i$ in~$F^\ast_t$, then these terminals together with $u_i$ form a star of 
weight at least $r\cdot (\ell_i - 1)$ if $\ell_i\geq 2$, since no edge in 
$F^\ast_t$ is lighter than the corresponding edge of $G_t$. If $\ell_i=1$ then 
lower bound $r\cdot (\ell_i - 1)=0$ on the weight trivially holds. Similarly, 
all edges between terminals in $F^\ast_t$ have weight at least $r$; let $\ell'$ 
be the number of such edges.

Since $F^\ast_t$ has at least ${\tau}$ terminals in step $t$ (otherwise the
algorithm would have terminated), it contains $q$ Steiner vertices, and has at 
most $c$ components, the total number of edges of $F^\ast_t$ is $\tau+q-c$. 
Those of its edges that connect two Steiner vertices form a forest on at most 
$q$ vertices, and there can therefore be at most $q-1$ such edges. Hence the 
number of edges in $F^\ast_t$ that are incident to a terminal is $\ell' + 
\sum_{i=1}^{q'} \ell_i \ge \tau +q - c - (q - 1) \ge {\tau} - c$. Using $p\geq q'$, 
the total  weight of edges in $F^\ast_t$ is at least
\[
\ell' r + \sum_{i=1}^{q'} r\cdot (\ell_i - 1) \ge r \cdot ({\tau} - c -p)\ge
(1+\delta)^j\cdot ({\tau} - c -p)\,.
\]
This shows (2) as $w(F^\ast_t)\leq w(F^\ast_0)$.

To get (3), by (2) and using the value of ${\tau}$ we have
\[
\eps \cdot w(F^\ast_0) \ge \eps (1+\delta)^j\cdot ({\tau} - c -p) =
\eps(1+\delta)^j\cdot ({\kappa} + c)  \cdot\lambda \cdot
\frac{(1+\delta)^2}{\eps\delta} = ({\kappa} + c)  \cdot\lambda \cdot 
\frac{(1+\delta)^{j+2}}{\delta},
\]
which is the upper bound on $W_B$ by (1).
\end{proof}

The above lemma can now be used to prove that all the contractions put
together (with $\eps$ scaled appropriately) form a $(1+\eps)$-approximate preprocessing
procedure with respect to $F^\ast_0$ (cf.~\cref{sec:kernel-prelims}).

\begin{lem}\label{lem:weightedPreprocessing}
The algorithm outputs an instance with $\tau \in \bigO{(p+c)^2/\eps^4}$
terminals and (together with the solution lifting algorithm) it is a $(1 +
2\eps)$-approximate polynomial time preprocessing algorithm with respect to
$F^\ast_0$.
\end{lem}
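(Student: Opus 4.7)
My plan is to verify two separate claims: the bound on the number of terminals in the reduced instance, and the $(1+2\eps)$-approximation guarantee of the lifting procedure. The algorithm terminates as soon as the terminal count drops below the threshold $\tau$ specified in \cref{lem:weightOfNotChargeable}(3), so the first claim reduces to a routine calculation: substituting $\delta=\sqrt{1+\eps}-1=\Theta(\eps)$, $\lambda=\bigO{(p+c)/\eps}$, and $\kappa=(1+\delta)p/\delta+p=\bigO{p/\eps}$ into the formula for $\tau$ yields $\tau=\bigO{(p+c)^2/\eps^4}$. Polynomial running time is immediate from \cref{lem:min-star}: there are at most $n$ contraction steps (each one strictly decreases the vertex count), and finding a best-ratio star, together with performing the contraction and updating terminal pairs, can be done in polynomial time.

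The heart of the proof is the approximation guarantee. Let $F$ be a $\beta$-approximate solution to the reduced instance $G_{\tilde t}$; the lifted solution is $\tilde F = F \cup \bigcup_{t=0}^{\tilde t-1} C_t$, which is feasible by construction of the terminal-update rule. The analysis combines two inequalities. First, because $F^\ast_{t+1}$ is obtained from $F^\ast_t$ by deleting the edges $D_t$ (and parallel-edge removal can only decrease weights), a telescoping induction gives $w(F^\ast_{\tilde t}) \le w(F^\ast_0) - \sum_t w(D_t)$; since $F^\ast_{\tilde t}$ is feasible in $G_{\tilde t}$, this yields $w(F) \le \beta\bigl(w(F^\ast_0)-\sum_t w(D_t)\bigr)$. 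Second, splitting $\sum_t w(C_t)$ into good and bad contractions, the defining inequality $w(C_t)\le(1+\eps)w(D_t)$ for good $t$ together with \cref{lem:weightOfNotChargeable}(3) gives $\sum_t w(C_t)\le (1+\eps)\sum_t w(D_t) + \eps\cdot w(F^\ast_0)$. Adding these bounds,
\[
w(\tilde F) \le (\beta+\eps)\,w(F^\ast_0) + (1+\eps-\beta)\sum_t w(D_t).
\]

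The main obstacle, which I expect to be the subtlest step, is the case distinction on $\beta$ used to conclude. When $\beta\ge 1+\eps$, the coefficient $(1+\eps-\beta)$ is non-positive, so the second term can be dropped, leaving $w(\tilde F)\le(\beta+\eps)\,w(F^\ast_0)\le(1+\eps)\beta\,w(F^\ast_0)$ since $\beta\ge 1$. When $1\le\beta<1+\eps$, I would instead use $\sum_t w(D_t)\le w(F^\ast_0)$ (the same telescoping argument at step $\tilde t$, using $w(F^\ast_{\tilde t})\ge 0$) to obtain $w(\tilde F)\le(1+2\eps)\,w(F^\ast_0)\le(1+2\eps)\beta\,w(F^\ast_0)$. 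In either case the lifted solution weighs at most $(1+2\eps)\beta\cdot w(F^\ast_0)$, so together with the reduction algorithm the solution lifting procedure forms a $(1+2\eps)$-approximate polynomial time pre-processing algorithm with respect to $F^\ast_0$, completing the proof.
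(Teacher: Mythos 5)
Your proof is correct and takes essentially the same route as the paper: the same split into good and bad contractions, the bound $\sum_t w(D_t)\le w(F^\ast_0)-w(F^\ast_{\tilde t})$ via telescoping, the feasibility of $F^\ast_{\tilde t}$ in the reduced graph, and \cref{lem:weightOfNotChargeable}(3); the paper merely avoids your case distinction on $\beta$ by keeping $w(F^\ast_{\tilde t})$ explicit and using $\beta\ge 1$ to absorb both terms into $(1+\eps)\beta\cdot w(F^\ast_0)$ in one chain. The only small omission is that the terminal bound also needs \cref{lem:star-exists} (using $\tau>p$) to guarantee that a star with at least two terminals exists at every step, so the contraction loop indeed drives the terminal count below $\tau$.
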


\begin{proof}
The upper bound on the number of terminals follows directly from
the description of the algorithm. To bound the running time, we already noted
that finding a minimum ratio star to contract can be done in $\bigO{n^2 \log n}$
time. Since such a star with at least two vertices is contracted in each step
$t$ to form the next graph $G_{t+1}$, the total time used for contractions until
only ${\tau}$ terminals are left is polynomial in $n$.

Let us focus on the $(1+2\eps)$-approximate part. Let $H=G_{\tilde t}$ be the
graph left after the last contraction step $\tilde t$, and let $F_H$ be a
Steiner forest for the remaining terminal pairs. The solution lifting algorithm
simply adds all contracted edge sets $C_0,C_1,\ldots$ to $F_H$ in order to
compute a Steiner forest $F_G$ in the input graph $G_0$. We need to show that, if
$F_H$ is a $\beta$-approximation to the optimum solution $F^\ast_H$ in $H$, the resulting
forest $F_G$ is a $\bigl((1+2\eps)\beta\bigr)$-approximation to the solution
$F^\ast_0$ of $G_0$.

Let us call a step $t$ of the algorithm \emph{good} (\emph{bad}) if the
corresponding contracted edge set $C_t$ is good (bad). As all sets $C_t$ are
disjoint, we use \cref{lem:weightOfNotChargeable} to bound the weight of $F_G$ by
\[
w(F_G)= \sum_{\text{good }t} w(C_t) + \sum_{\text{bad } t} w(C_t) + w(F_H)
\leq \sum_{\text{good }t}(1+\eps) w(D_t) + \eps\cdot w(F^\ast_0) + \beta\cdot
w(F^\ast_H).
\]

The forest $F^\ast_{\tilde t}$ left after the last contraction corresponds to a
feasible solution in $H$. As the edge weights might be less expensive in $H$
than in $F^\ast_{\tilde t}$, we have $w(F^\ast_H)\leq w(F^\ast_{\tilde t})$. At
the same time, the deleted sets $D_t$ and the edges of $F^\ast_{\tilde t}$ are
disjoint, so that $\sum_{\text{good }t} w(D_t)\leq \sum_{t} w(D_t)\leq
w(F^\ast_0)-w(F^\ast_{\tilde t})$. Therefore, the above bound becomes
\begin{align*}
w(F_G)
&\leq (1+\eps) \left(w(F^\ast_0)-w(F^\ast_{\tilde t})\right)+ \eps\cdot
w(F^\ast_0) + \beta\cdot w(F^\ast_{\tilde t})
\\
&\leq (1+\eps)\beta\left(w(F^\ast_0)-w(F^\ast_{\tilde t})+w(F^\ast_{\tilde
t})\right) +\eps\cdot w(F^\ast_0) \leq
(1+2\eps)\beta\cdot w(F^\ast_0)\,,
\end{align*}
which proves the claim.
\end{proof}

Note that in case the given $p$ is smaller than the number of Steiner vertices in $F^\ast_0$,
or $c$ is smaller than the number of connected components in $F^\ast_0$, the
algorithm still outputs a Steiner forest, but the approximation factor may be
arbitrary. Finally, we provide proofs of \cref{thm:SF,thm:PSAKS-SF}.

\begin{proof}[Proof of \cref{thm:SF}]
Obtaining an \FPT algorithm for \SF parameterized by the number of terminals and 
connected components is not hard given an \FPT algorithm as the one given 
in~\cite{fuchs2007dynamic} for \ST: We only need to guess the sets of terminals 
that form connected components in the optimum Steiner forest. We can then invoke 
the algorithm of~\cite{fuchs2007dynamic} on each subset to compute an optimum 
Steiner tree connecting it. To bound the number of partitions of the terminal 
set, recall that the input to our algorithm has an integer $c$ upper-bounding 
the number of components in a solution with which we compare our solution. Thus, 
each terminal can be in one of at most $c$ components, so there are at most 
$c^{|R|}$ partitions of the terminal set $R$ that need to be considered. The 
algorithm of~\cite{fuchs2007dynamic} runs in time $(2 + \delta)^{|R|}\polyn$ for 
any constant $\delta > 0$, and this results in an algorithm with runtime $((2 + 
\delta)c)^{|R|}\polyn$ to solve \SF. We run this algorithm on the \SF instance 
that our preprocessing algorithm of \cref{lem:weightedPreprocessing} computes, 
in order to obtain \cref{thm:SF}.
\end{proof}

To obtain \cref{thm:PSAKS-SF} on lossy kernels, we rely on the fact that a 
\PSAKS exists for \ST parameterized by the number of terminals. It is known that 
despite being \FPT~\cite{DBLP:journals/networks/DreyfusW71,fuchs2007dynamic}, 
this problem does not admit polynomial (exact) kernels~\cite{dom}, unless 
$\mathsf{NP}\subseteq\mathsf{coNP}/\mathsf{Poly}$. However, as shown 
by~\citet{lokshtanov2017lossy}, the \mbox{\citet{borchers-du}} Theorem can be 
reinterpreted to show that a \PSAKS exists. Obtaining a \PSAKS for \SF can be 
done in essentially the same way as described in~\cite{lokshtanov2017lossy}, 
and together with \cref{lem:weightedPreprocessing} this gives a \PSAKS for our 
choice of parameters.

\begin{proof}[Proof of \cref{thm:PSAKS-SF}]
The \citet{borchers-du} Theorem states that for any optimum Steiner tree $T$ on 
terminal set $R$ there exists a collection of trees $T_1,\ldots,T_k$, such that 
all leaves of each tree belong to $R$, each $T_i$ contains 
$2^{\bigO{1/\eps}}$ terminals of~$R$, and the union $\bigcup_{i=1}^k T_i$ 
is a $(1+\eps)$-approximation of~$T$. This theorem can also be applied to each 
tree in the optimum \SF solution, since each such tree must be an optimum 
Steiner tree for its contained terminal set.

In particular, to compute a kernel, first we take the metric closure of the 
graph with $\tau$ terminals computed by our algorithm, so that any minimum cost 
tree connecting~$2^{\bigO{1/\eps}}$ terminals can be assumed to only contain 
$2^{\bigO{1/\eps}}$ Steiner vertices as well. We then compute an optimum Steiner 
tree for each subset of~$R$ of size at most~$2^{\bigO{1/\eps}}$. This is done 
using an FPT algorithm parameterized by the number of terminals, which takes 
polynomial time if $\eps$ is a constant. Within the union of all computed 
Steiner trees exists a $(1+\eps)$\hy{}approximate Steiner forest due to the 
\mbox{\citet{borchers-du}} Theorem, and the total number of vertices in this 
union is~$|R|^{2^{\bigO{1/\eps}}}$. However, the union is not of polynomial size 
in $|R|$ yet, due to the edge lengths. \citet{lokshtanov2017lossy} show that it 
is possible to round the edge lengths in such a way that the cost of every 
Steiner tree grows by at most a factor of $(1+\eps)$, and the edge lengths can 
be encoded using at most $\bigO{\log(|R|)+\log(1/\eps)}$ bits. For this an 
estimate on the cost of the optimum solution is needed, which can be obtained 
using the polynomial time $2$-approximation algorithm for \SF by 
\citet{DBLP:journals/siamcomp/AgrawalKR95}.

The number of terminals in the instance that we obtain after exhaustively 
applying our contractions is bounded in terms of our parameters $p$, $c$, and 
$\eps$ by \cref{lem:weightedPreprocessing}. Hence, the union of all computed 
solutions for terminal sets of size at most $2^{\bigO{1/\eps}}$ with rounded 
edge lengths is a polynomial-sized $(1+\eps)$-approximate kernel for \SF.
\end{proof}

\section{The unweighted directed Steiner tree problem}
\label{sec:unwDirST}

In this section we provide an \EPAS for the \unwDirST problem, in which each arc
has unit weight.

\optprobA
{\unwDirST}
{A directed graph $G = (V,A)$, and a set $R$ of terminals with a root terminal $r$.}
{A Steiner arborescence $T \subseteq G$ containing a directed path from $r$ to each terminal $v\in R$.}

\smallskip

The idea behind our algorithm given in this section is to
reduce the number of terminals of the input instance via a set of reduction
rules. That is, we would like to reduce the input graph $G$ to a graph $H$, and
prove that the number of terminals in $H$ is bounded by a function of our
parameter $p$ and the approximation ratio~$(1+\varepsilon)$. On $H$ we use the 
algorithm of \citet{DBLP:conf/icalp/Nederlof09} to obtain an optimum solution.

Our first reduction rule represents the idea that a terminal in the immediate
neighborhood of the root can be contracted to the root.
Observe that in this case our algorithm has to pay~$1$ for connecting such a
terminal to the root, however, any feasible solution must connect this terminal
as well using at least one arc---this argument is formalized in
\cref{lem:safenessForRootNeighbors} (cf.~\cref{sec:kernel-prelims}).

\begin{reductionRule}{R1}\label{rr:rootNeighbors}
If there is an arc from the root $r$ to a terminal $v\in R$, we contract the
arc~$(r,v)$, and declare the resulting vertex the new root.
\end{reductionRule}
\begin{lem}\label{lem:safenessForRootNeighbors}
Reduction Rule~\ref{rr:rootNeighbors} is $1$-safe and can be implemented in polynomial time.
Furthermore, there is a solution lifting algorithm running in polynomial time
and returning a Steiner arborescence if it gets a Steiner arborescence of the reduced graph as input.
\end{lem}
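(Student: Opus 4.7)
The plan is to verify the three claims (polynomial-time implementation, valid lifting, and $1$-safeness) by directly analyzing the contraction and pairing it with a natural solution lifting algorithm. The contraction itself is routine: identify $r$ and $v$ into a single vertex $r'$, delete the resulting self-loop arising from the arc $(r,v)$, and remove any duplicate arcs caused by predecessors or successors shared between $r$ and $v$, recording for each surviving arc incident to $r'$ whether it came from $r$ or from $v$ (or both). This runs in linear time. The solution lifting algorithm takes a Steiner arborescence $T'$ of the reduced graph $H$ rooted at $r'$, adds the arc $(r,v)$ to the output, and for each outgoing arc $(r',x)$ of $T'$ picks any original arc $(r,x)$ or $(v,x)$ of $G$ whose contraction produced it; every other arc of $T'$ already exists in $G$ unchanged and is kept as is. The resulting subgraph $T\subseteq G$ has exactly $|T'|+1$ arcs and is computed in polynomial time.

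Next I would check that $T$ is a Steiner arborescence of $G$ rooted at $r$. Every non-root vertex has in-degree exactly one: the vertex $v$ receives its unique in-arc from $(r,v)$, while for any other vertex $x$ the unique in-arc of $x$ in $T'$ lifts to a unique in-arc of $x$ in $T$. Reachability from $r$ in $T$ follows by induction along $T'$: $v$ is reached via $(r,v)$, and any tree arc $(r',x)$ of $T'$ lifts to an arc in $T$ whose tail ($r$ or $v$) is itself reachable from $r$. Since the terminal set of $H$ is precisely the terminal set of $G$ with $v$ merged into $r'$, every terminal of $G$ is reached by $T$.

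The heart of the argument is the inequality $\OPT(H)\le\OPT(G)-1$. Let $T^*$ be an optimum Steiner arborescence of $G$. Because $v\in R$, exactly one arc $e$ of $T^*$ enters $v$. Deleting $e$ splits $T^*$ into two sub-arborescences, one rooted at $r$ and one rooted at $v$; their images under the contraction share only the vertex $r'$ and together form an arborescence in $H$ rooted at $r'$ that still spans every terminal of $H$, using $|T^*|-1$ arcs. Combining with $|T|=|T'|+1$, if $T'$ is a $\beta$-approximation in $H$ we obtain
\[
|T| \;\le\; \beta\cdot\OPT(H)+1 \;\le\; \beta\bigl(\OPT(G)-1\bigr)+1 \;\le\; \beta\cdot\OPT(G)
\]
for any $\beta\ge 1$, which is exactly $1$-safeness in the strict framework of \cref{sec:kernel-prelims}.

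The only delicate point, and the thing I would write out most carefully, is the verification that the image of $T^*\setminus\{e\}$ really is an arborescence in $H$: one must check that identifying $r$ with $v$ does not create a vertex of in-degree two or a cycle. This holds because the only arc of $T^*$ entering $v$ was $e$ and the only arc of $T^*$ that could have become a self-loop under the contraction is $(r,v)$ itself, which if present equals $e$ and is removed. The rest is straightforward bookkeeping, so I expect no significant obstacle.
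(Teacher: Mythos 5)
Your proposal is correct and takes essentially the same route as the paper: lift a solution by uncontracting and adding the arc $(r,v)$, so that $w(T_G)=w(T_H)+1$, and combine this with $\OPT(G)\ge \OPT(H)+1$ to conclude $1$-safeness (your chain $\beta(\OPT(G)-1)+1\le\beta\cdot\OPT(G)$ is equivalent to the paper's $\frac{w(T_H)+1}{w(T^*_H)+1}\le\max\{w(T_H)/w(T^*_H),1\}$ bound). The only difference is that you spell out the inequality $\OPT(H)\le\OPT(G)-1$ via deleting the unique in-arc of $v$ and contracting, which the paper asserts in one line; this is a welcome elaboration, not a different argument.
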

\begin{proof}
The implementation of the reduction rule is straightforward.
Let $H$ be a graph resulting from $G$ after the contraction of the arc $(r,v)$ to the new root $r'$, let
$T^*_H$ and $T^*_G$ denote optimal Steiner arborescences for $H$ and $G$,
respectively, and let $T_H$ be a Steiner arborescence in $H$.

Our solution lifting algorithm constructs a Steiner arborescence $T_G$ in G by
simply taking~$T_H$ and uncontracting $(r,v)$ in it. Note that $T_G$ spans all
terminals, as $T_H$ does in $H$ and we added $(r,v)$. Also, $T_G$ is an
arborescence, since $r$ has in-degree zero (as $r'$ has), $v$ has in-degree one,
and $T_G$ is clearly a tree. Thus $T_G$ is a Steiner arborescence in $G$.

The solution lifting algorithm adds 1 to the solution value, so that $w(T_G) = w(T_H) + 1$.
Note that $w(T^*_G) \ge w(T^*_H) + 1$ as the optimal solution in $G$ must
additionally connect $v$ to~$r$, i.e., it has to add some arc of cost 1.
Finally we have
\[
\frac{w(T_G)}{w(T^*_G)} \leq \frac{w(T_H) + 1}{w(T^*_H) + 1} \leq \max \left\{
\frac{w(T_H)}{w(T^*_H)} ; \frac{1}{1}\right\},
\]
so that if $T_H$ is a $\beta$-approximation of $T^*_H$, then $T_G$ is a
$(\max\{1;\beta\})$-approximation of $T^*_G$. Hence, the rule is $1$-safe.
\end{proof}

The idea behind our next reduction rule is the following.
Assume there is a Steiner vertex~$s$ in the optimum arborescence $T$ connected
to many terminals with paths not containing any other Steiner vertices. We can
then afford to buy all these paths emanating from $s$ together with a path
connecting the root to $s$. Formally, we say that a vertex $u$ is a
\emph{$k$-extended neighbor} of some vertex $v$, if there exists a directed path
$P$ starting in $v$ and ending in $u$, such that $V(P)\setminus\{v\}$ contains
at most $k$ Steiner vertices. Note that a vertex is always a $k$-extended
neighbor of itself for any $k$, and that each of the above terminals connected
to $s$ in $T$ is a $0$-extended neighbor of $s$.
We denote by $\ExtNei{k}(v)$ the set of all $k$-extended neighbors of $v$, and call it
the \emph{$k$-extended neighborhood} of $v$ (see \cref{fig:ExtNei}). By the
following observation, the Steiner vertex $s$ of $T$ lies in the $p$-extended
neighborhood of the root $r$. Therefore, there is a path containing at most $p$
Steiner vertices connecting $r$ to~$s$.

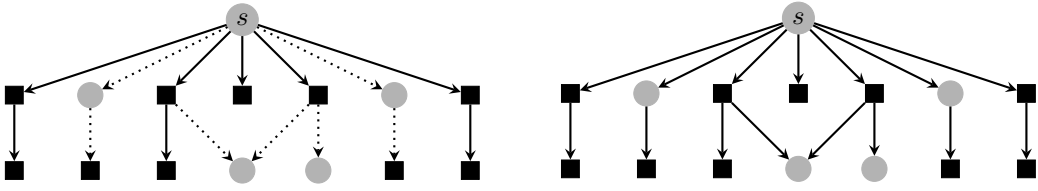
\begin{figure}[bt]
	\begin{minipage}{.6\textwidth}
		\begin{center}
			\usetikzlibrary{positioning,calc}

\begin{tikzpicture}
\tikzstyle{terminal}=[draw, , fill, black]
\tikzstyle{steiner}=[draw, black!30, circle , fill, text=black, inner sep=2pt]
\tikzstyle{steinerFull}=[draw, black!30, circle , fill, text=black, minimum width=.3cm]
\tikzstyle{steinerSel}=[draw, black, circle , fill, text=white, inner sep=3pt]
\tikzstyle{arrow}=[->,thick,>=stealth]
\tikzstyle{noArrow}=[->,thick,>=stealth, dotted]
\tikzstyle{vertex}=[draw, thick, circle, black!30, text = black, fill, inner sep=2pt]
\tikzstyle{vertexSmall}=[draw, thick, circle, black!30, text = black, fill, inner sep=3pt]
\tikzstyle{vertexSel}=[draw, thick, circle, inner sep=3pt, black, fill, text = white]

\begin{scope}[local bounding box=ext0]
\node[steiner] at (3,2) (s) {$s$};

\node[terminal] at (0, 1) (t11) {};
\node[terminal] at (2, 1) (t12) {};
\node[terminal] at (3, 1) (t13) {};
\node[terminal] at (4, 1) (t14) {};
\node[terminal] at (6, 1) (t15) {};
\node[terminal] at (0, 0) (t02) {};
\node[terminal] at (2, 0) (t03) {};
\node[terminal] at (6, 0) (t04) {};

\node[terminal] at (1, 0) (t011) {};
\node[terminal] at (5, 0) (t012) {};

\node[steinerFull] at (1,1) (s1) {};
\node[steinerFull] at (5,1) (s2) {};
\node[steinerFull] at (3,0) (s3) {};
\node[steinerFull] at (4,0) (s4) {};

\foreach \i in {1,...,5} {
  \draw[arrow] (s) -- (t1\i);
}

\draw[arrow] (t11) -- (t02);
\draw[arrow] (t12) -- (t03);
\draw[arrow] (t15) -- (t04);

\draw[noArrow] (s) -- (s1);
\draw[noArrow] (s) -- (s2);
\draw[noArrow] (t12) -- (s3);
\draw[noArrow] (t14) -- (s3);
\draw[noArrow] (t14) -- (s4);

\draw[noArrow] (s1) -- (t011);
\draw[noArrow] (s2) -- (t012);
\end{scope}

\begin{scope}[local bounding box=ext1, shift={($(ext0) - (3,4)$)}]
\node[steiner] at (3,2) (s) {$s$};

\node[terminal] at (0, 1) (t11) {};
\node[terminal] at (2, 1) (t12) {};
\node[terminal] at (3, 1) (t13) {};
\node[terminal] at (4, 1) (t14) {};
\node[terminal] at (6, 1) (t15) {};
\node[terminal] at (0, 0) (t02) {};
\node[terminal] at (2, 0) (t03) {};
\node[terminal] at (6, 0) (t04) {};

\node[terminal] at (1, 0) (t011) {};
\node[terminal] at (5, 0) (t012) {};

\node[steinerFull] at (1,1) (s1) {};
\node[steinerFull] at (5,1) (s2) {};
\node[steinerFull] at (3,0) (s3) {};
\node[steinerFull] at (4,0) (s4) {};

\foreach \i in {1,...,5} {
  \draw[arrow] (s) -- (t1\i);
}

\draw[arrow] (t11) -- (t02);
\draw[arrow] (t12) -- (t03);
\draw[arrow] (t15) -- (t04);

\draw[arrow] (s) -- (s1);
\draw[arrow] (s) -- (s2);
\draw[arrow] (t12) -- (s3);
\draw[arrow] (t14) -- (s3);
\draw[arrow] (t14) -- (s4);

\draw[arrow] (s1) -- (t011);
\draw[arrow] (s2) -- (t012);
\end{scope}
\end{tikzpicture}
		\end{center}
	\end{minipage}
  \begin{minipage}{.35\textwidth}
		\caption{
	An example of extended neighborhood of Steiner vertex $s$.
	The set $\ExtNei{0}(s)$ is depicted on the top using full arcs, while the
	vertices connected by dotted arcs are not a part of this set.
	The set $\ExtNei{1}(s)$ is depicted on the bottom using full arcs.
	}\label{fig:ExtNei}
  \end{minipage}
\end{figure}

\begin{obs}\label{obs:shortDipath}
Let $G = (V, A)$ be a directed graph with root $r\in R$. Suppose there
exists a Steiner arborescence $T\subseteq G$ with at most $p$ Steiner vertices.
It follows that $V(T) \subseteq \ExtNei{p}(r)$.
\end{obs}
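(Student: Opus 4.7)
The plan is to observe that this is almost immediate from the definitions, so the ``proof'' is really just unpacking notation and using that an arborescence contains a unique root-to-vertex path.

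First I would fix an arbitrary $v \in V(T)$. Since $T$ is a Steiner arborescence rooted at $r$, by definition every vertex of $T$ is reachable from $r$ via a directed path inside $T$; in fact, since $T$ is an arborescence, there is a unique such directed path $P_v \subseteq T$ from $r$ to $v$.

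Next I would bound the number of Steiner vertices on this path. Since $V(P_v) \subseteq V(T)$, the number of Steiner vertices appearing on $P_v$ is at most the number of Steiner vertices in the whole arborescence $T$, which by hypothesis is at most $p$. In particular, $V(P_v) \setminus \{r\}$ contains at most $p$ Steiner vertices, which is exactly the definition of $v$ being a $p$-extended neighbor of $r$. Hence $v \in \ExtNei{p}(r)$, and since $v$ was arbitrary, $V(T) \subseteq \ExtNei{p}(r)$.

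There is no real obstacle here; the only subtlety is making sure we use the ``$V(P)\setminus\{v\}$'' version of the definition (so that the root itself, even if it were a Steiner vertex, would not be counted), and noting that this works regardless of whether $r$ happens to be a terminal or not. The statement is purely structural and needs no additional machinery beyond the definition of a $k$-extended neighborhood and the fact that an arborescence contains a unique directed path from its root to each of its vertices.
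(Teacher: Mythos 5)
Your argument is correct and is exactly the (implicit) reasoning behind the paper's Observation, which is stated without proof: take the root-to-$v$ path inside the arborescence, note its Steiner vertices are among the at most $p$ Steiner vertices of $T$, and apply the definition of $\ExtNei{p}(r)$. Nothing is missing, and your remark about excluding the start vertex in the count is the right reading of the definition (here moot anyway, since $r\in R$).
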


In what follows we fix $\varepsilon > 0$.
The second reduction rule contracts a path from $r$ to a Steiner vertex $s$ in the $p$-extended neighborhood of $r$
together with the $0$-extended neighborhood of $s$ if this neighborhood is sufficiently large.

\begin{reductionRule}{R2}\label{rr:largeStarPathContraction}
If there exists a Steiner vertex $s$ with $\left|\ExtNei{0}(s)\right| \ge
p / \varepsilon$ and \mbox{$s\in\ExtNei{p}(r)$,} so that there is an $r\to s$
path $P$ containing at most $p$ Steiner vertices, then we contract the subgraph
of $G$ induced by $\ExtNei{0}(s)$ and $P$ in $G$, and declare the resulting
vertex the new root.
\end{reductionRule}

\begin{lem}\label{lem:safenessForLargeStarPathContraction}
Reduction Rule~\ref{rr:largeStarPathContraction} is $(1 + \varepsilon)$-safe and can be implemented in polynomial time.
Furthermore, there is a solution lifting algorithm running in polynomial time
and returning a Steiner arborescence if it gets a Steiner arborescence of the reduced graph as input.
\end{lem}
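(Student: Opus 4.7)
The plan is to implement the reduction in polynomial time, describe a solution lifting procedure, and verify that together they form a strict $(1+\eps)$-approximate pre-processing algorithm.

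For the implementation I would enumerate the Steiner vertices $s$ of $G$. For each candidate, $\ExtNei{0}(s)$ can be computed in polynomial time by a BFS in the subgraph of $G$ induced by $\{s\}\cup R$. To decide whether $s\in\ExtNei{p}(r)$ and extract a witness path $P$, I would run a labelled BFS from $r$ that tracks, for every reached vertex, the minimum number of Steiner vertices used on an $r$\hy{}path to it. If this minimum at $s$ is at most $p$, the reconstructed path, with ties broken to minimise the arc count, is the path $P$ used by the rule. Contracting $G[W]$ with $W := V(P)\cup \ExtNei{0}(s)$ to a new root~$r'$ is standard.

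For the solution lifting, given a Steiner arborescence $T_H$ in $H$ I would produce $T_G$ in $G$ as follows. Each arc $(r',v)$ of $T_H$ is replaced by a corresponding arc $(u,v)\in A(G)$ with $u\in W$ (it exists since $(r',v)$ arose from a contraction); the remaining arcs of $T_H$ are kept verbatim. To reach those tails $u$ from $r$ and to span every terminal of $W$, I then add the arcs of $P$ together with a spanning arborescence of $G[\ExtNei{0}(s)]$ rooted at $s$, which exists by the definition of $\ExtNei{0}(s)$. Since $W = V(P)\cup \ExtNei{0}(s)$, every vertex of $W$ is reached from $r$ through the newly added arcs; each non-root vertex has a unique in-arc, so $T_G$ is a valid Steiner arborescence in $G$.

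For the safeness analysis, let $K := (|V(P)|-1) + (|\ExtNei{0}(s)|-1)$ denote the number of arcs added by the lifting. Then $w(T_G)\le w(T_H) + K$, and symmetrically extending $T^\ast_H$ by these same arcs yields a feasible solution in $G$, giving $w(T^\ast_G)\le w(T^\ast_H)+K$; combined with the trivial $w(T^\ast_H)\le w(T^\ast_G)$ these relations control the two optima. To force the $(1+\eps)$\hy{}ratio I would exploit that each terminal in $\ExtNei{0}(s)\setminus\{r\}$ requires a distinct in-arc in $T^\ast_G$, so $w(T^\ast_G)\ge|\ExtNei{0}(s)|-1\ge p/\eps-1$, and case\hy{}analyse whether $s\in V(T^\ast_G)$. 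If so, the unique $r$-to-$s$ path in $T^\ast_G$ has at most $p$ Steiner vertices, so by the shortest-path choice its length upper-bounds $|V(P)|-1$; moreover the sub-arborescence of $T^\ast_G$ rooted at $s$ contributes at least $|\ExtNei{0}(s)|-1$ further arcs, giving $w(T^\ast_G)\ge K$ up to a small additive term, which together with $w(T_H)\le\beta\,w(T^\ast_H)\le\beta\,w(T^\ast_G)$ yields $w(T_G)\le\max\{1+\eps,\beta\}\cdot w(T^\ast_G)$.

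The hard part will be the complementary case $s\notin V(T^\ast_G)$: there is no $r$-to-$s$ path in $T^\ast_G$ against which $P$'s arcs can be charged directly. Here I plan to exploit the threshold $|\ExtNei{0}(s)|\ge p/\eps$ essentially: since $T^\ast_G$ must reach all $\ge p/\eps$ terminals of $\ExtNei{0}(s)$ using at most $p$ Steiner vertices, a pigeonhole on the Steiner vertices of $T^\ast_G$ forces one of them to fan out to many terminals of $\ExtNei{0}(s)$, which in turn forces $w(T^\ast_G)$ to be large enough that $|V(P)|+|\ExtNei{0}(s)|$ is absorbed by $\eps\cdot w(T^\ast_G)$, completing the bound $w(T_G)\le\max\{1+\eps,\beta\}\cdot w(T^\ast_G)$.
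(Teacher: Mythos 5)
Your implementation and your solution lifting are essentially the paper's (the paper takes a single arborescence of $Q=P\cup G[\ExtNei{0}(s)]$ rooted at $r$, which also sidesteps the minor double-in-arc issue when $P$ meets $\ExtNei{0}(s)$). The gap is in the safeness analysis. The paper's key step, which your proposal lacks, is a lower bound on the \emph{difference} of the two optima: starting from $T^*_G$, delete, for each of the $x$ non-root terminals of $\ExtNei{0}(s)\cup V(P)$, the arc of $T^*_G$ whose head is that terminal, then identify $\ExtNei{0}(s)\cup V(P)$ with the new root; this yields a feasible solution of $H$, hence $w(T^*_H)\le w(T^*_G)-x$. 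Combined with $w(T_G)\le w(T_H)+x+p$ and the mediant inequality this gives $w(T_G)/w(T^*_G)\le\max\{\beta,(x+p)/x\}\le\max\{\beta,1+\eps\}$. Your proposal only records $w(T^*_H)\le w(T^*_G)$ and $w(T^*_G)\le w(T^*_H)+K$, which cannot yield the strict bound: charging the added cost $K$ against $\eps\, w(T^*_G)$ instead of against $w(T^*_G)-w(T^*_H)$ gives at best a $(\beta+\eps)$-type guarantee rather than $\max\{\beta,1+\eps\}$, and in fact the plan fails outright because $K$ need not be small relative to $w(T^*_G)$: if $G$ consists of $r$, one Steiner vertex $s$, and $p/\eps$ terminals below $s$, then $K\approx w(T^*_G)$, so no pigeonhole argument can force $K\le\eps\, w(T^*_G)$.

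Two further steps in your first case are unjustified. First, $T^*_G$ is the optimum of $G$ and need not contain at most $p$ Steiner vertices (only \emph{some} feasible solution does), so its $r$-to-$s$ path need not either. Second, even with your tie-breaking, $P$ minimizes the number of Steiner vertices, not arcs; the $r$-to-$s$ path inside $T^*_G$ may use more Steiner vertices but far fewer terminal hops and thus be much shorter than $P$, so its length does not upper-bound $|V(P)|-1$. The paper's charging argument needs neither claim: the terminals on $P$ are themselves among the $x$ contracted terminals, so the (possibly long) path $P$ is paid for by the in-arcs that $T^*_G$ must spend on exactly those terminals, plus at most $p$ additional arcs for the Steiner vertices of $P$ — which is precisely where the factor $(x+p)/x\le 1+\eps$ comes from.
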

\begin{proof}
Checking the applicability of Rule~\ref{rr:largeStarPathContraction} and finding $s$ together with $\ExtNei{0}(s)$ can
be done in polynomial time as follows. We set arc lengths so that
each arc ending at a terminal has length zero, while arcs ending at Steiner vertices
have length one. Now a length of a directed path $P$ from the root
corresponds to the number of Steiner vertices in~$P$.
Then, we run an algorithm for finding a shortest path from $r$ to each vertex
which allows us to find the set $\ExtNei{p}(r)$. Finally, for each $s\in \ExtNei{p}(r)$
we compute $\ExtNei{0}(s)$ by a simple breadth-first search.

We now specify the solution lifting algorithm.
Denote by $H$ the reduced graph obtained from $G$ by applying
\ref{rr:largeStarPathContraction}.
Let $T_H$ be a solution of the reduced instance $H$ and let $T^*_H$ be an optimal solution in $H$.
Consider the graph $Q$, which is the union of $P$ and the subgraph of $G$
induced by $\ExtNei{0}(s)$. The solution lifting algorithm first computes an
arborescence $A$ of $Q$ rooted in $r$ (e.g., by a depth-first search).
Define $T_G$ as the union of $T_H$ and $A$. We show that $T_G$ is a 
Steiner arborescence.

First, observe that $T_G$ spans all terminals as $T_H$ contains all terminals in $H$
and $A$ is an arborescence containing all vertices in $Q$. Note that $T_G$ is a tree as $A$ is an arborescence of $Q$,
$T_H$ is a tree, and $T_H$ contains at most one arc from the root in $H$ to each vertex
(recall that the root in $H$ was created by contracting $\ExtNei{0}(s)\cup V(P)$).
The root in $T_G$ has clearly in-degree zero, while all other vertices have in-degree one,
since this holds for $H$ as $T_H$ is an arborescence, and $A$ is an arborescence of $Q$ rooted in $r$.
Thus $T_G$ is a Steiner arborescence in $G$.

It remains to show the safeness of the rule. Let $x$ be the total number of
terminals in $\ExtNei{0}(s)\cup V(P)$ (not counting the root) and let~$T^*_G$ be an optimal solution in~$G$. Note that $w(T_G)
\le w(T_H) + x + p$. We obtain a solution for $H$ of weight at most $w(T^*_G) -
x$ by starting with $T^*_G$, removing $x$ arcs each having one of the $x$
non\hy{}root terminals in $\ExtNei{0}(s)\cup V(P)$ (and thus not in~$H$) as
their head, identifying all vertices in $\ExtNei{0}(s)\cup V(P)$ with the new
root, and removing loops and parallel arcs. Thus $w(T^*_G) \ge w(T^*_H) + x$ and
we get
\[
\frac{w(T_G)}{w(T^*_G)} \leq \frac{w(T_H) + x + p}{w(T^*_H) + x} \leq
\max \left\{\frac{w(T_H)}{w(T^*_H)}; \frac{x+p}{x} \right\} \leq \max
\left\{\frac{w(T_H)}{w(T^*_H)}; 1 + \varepsilon \right\}.
\]
The last inequality is valid because $x \ge p / \varepsilon$. Thus if
$T_H$ is a $\beta$-approximation of $T^*_H$, then $T_G$ is a
$(\max\{1+\eps;\beta\})$-approximation of $T^*_G$, and so the reduction rule is
$(1+\eps)$-safe.
\end{proof}

Now we prove that if none of the above reduction rules is applicable and our
algorithm was provided with a correct value for parameter $p$, then the number
of terminals in the reduced graph can be bounded by $p^2/\varepsilon$.

\begin{lem}\label{lem:smallInstanceAfterRR}
Let $G$ be an instance of \dirST, and denote by $H$ the graph obtained from $G$
by exhaustive application of Reduction Rules~\ref{rr:rootNeighbors} and
\ref{rr:largeStarPathContraction}.
Suppose that there exists a Steiner arborescence in~$G$ containing at most $p$
Steiner vertices.
It follows that the remaining terminal set $R$ of $H$ has size less than
$p^2/\varepsilon$.
\end{lem}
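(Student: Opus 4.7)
The plan is to argue that in $H$ every non\hy{}root terminal can be ``charged'' to a distinct slot in the $0$-extended neighborhood of one of at most $p$ Steiner vertices, each of which has size less than $p/\varepsilon$ by inapplicability of Rule~\ref{rr:largeStarPathContraction}. This gives the desired bound. I will proceed in three steps.

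First, I would verify that the hypothesis carries over: $H$ still admits a Steiner arborescence with at most $p$ Steiner vertices. For each application of Rule~\ref{rr:rootNeighbors}, starting from a Steiner arborescence $T$ in the current graph, contracting $(r,v)$ in $T$ (removing the unique in-arc of $v$ if $T$ does not contain $(r,v)$) yields a Steiner arborescence in the reduced graph with no new Steiner vertices, since $v$ is a terminal. For each application of Rule~\ref{rr:largeStarPathContraction}, identifying $\ExtNei{0}(s)\cup V(P)$ into the new root again does not increase the Steiner count, because every Steiner vertex swallowed by the contraction becomes part of the new root (hence a terminal). Call the resulting Steiner arborescence in $H$ by $T_H$. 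By \cref{obs:shortDipath} applied in $H$, we have $V(T_H)\subseteq\ExtNei{p}(r)$, where $r$ now denotes the root of~$H$.

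Next, I would read off the two structural consequences of exhaustive application of the rules. Inapplicability of Rule~\ref{rr:rootNeighbors} means that $r$ has no out-arc to any terminal in~$H$. Inapplicability of Rule~\ref{rr:largeStarPathContraction} means that every Steiner vertex $s\in\ExtNei{p}(r)$ satisfies $|\ExtNei{0}(s)|<p/\varepsilon$; in particular this holds for every Steiner vertex of $T_H$.

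Finally I would run the counting. Consider any non-root terminal $t\in R$. Since $t\in V(T_H)$, the unique $r\to t$ directed path $P_t$ in $T_H$ exists. Its first arc cannot end in a terminal by the first observation above, so $P_t$ passes through at least one Steiner vertex; let $s(t)$ be the last Steiner vertex along $P_t$. Then the sub-path of $P_t$ from $s(t)$ to~$t$ contains only terminals after $s(t)$, so $t\in\ExtNei{0}(s(t))\setminus\{s(t)\}$. The map $t\mapsto s(t)$ sends each non-root terminal to one of at most $p$ Steiner vertices of $T_H$. For each such $s$, the set $\ExtNei{0}(s)\setminus\{s\}$ has size strictly less than $p/\varepsilon-1$, since $|\ExtNei{0}(s)|<p/\varepsilon$ and $s\in\ExtNei{0}(s)$. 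Adding the root itself,
\[
|R|\;\le\;1+p\left(\frac{p}{\varepsilon}-1\right)\;=\;1+\frac{p^2}{\varepsilon}-p\;<\;\frac{p^2}{\varepsilon},
\]
as required (for $p\ge 1$; the case $p=0$ is trivial since then the root already dominates all terminals after Rule~\ref{rr:rootNeighbors}).

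The main obstacle is not the counting itself but making sure the structural preconditions hold in $H$: namely that a Steiner arborescence with at most $p$ Steiner vertices survives the contractions (handled in the first step above), and that the ``last Steiner vertex'' on the path in $T_H$ is well defined, which requires exactly the fact that Rule~\ref{rr:rootNeighbors} rules out terminal out-neighbors of $r$.
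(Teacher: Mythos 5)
Your proposal is correct and follows essentially the same route as the paper's proof: argue that a solution with at most $p$ Steiner vertices survives the contractions, use inapplicability of Rule~\ref{rr:rootNeighbors} to rule out terminal out-neighbors of the root and inapplicability of Rule~\ref{rr:largeStarPathContraction} (via \cref{obs:shortDipath}) to bound each $\left|\ExtNei{0}(s)\right|$ by $p/\varepsilon$, and then charge every terminal to the last Steiner vertex on its path in the surviving arborescence. You are merely more explicit in the bookkeeping (the root, the $-1$ per Steiner vertex, the $p\ge 1$ caveat) than the paper, which states the same counting in one sentence.
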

\begin{proof}
Observe first that both our reduction rules use contractions in the underlying graph and thus if there was a solution $T^*_G$ in $G$ with at most $p$ Steiner vertices, then there is a solution $T^*_H$ in $H$ again containing at most $p$ Steiner vertices.

Since Reduction Rule~\ref{rr:rootNeighbors} is not applicable to $H$, we
conclude that $\ExtNei{0}(r) \cap R = \emptyset$.
As Reduction Rule~\ref{rr:largeStarPathContraction} is not applicable to $H$, it
holds that $\left|\ExtNei{0}(s) \cap R \right| < p / \varepsilon$ for
every Steiner vertex $s\in\ExtNei{p}(r)$.
Therefore, $|R| < p^2 / \varepsilon$, since any terminal in $H$ must be
in the 0-extended neighborhood of some Steiner vertex in $T^*_H$
and there are at most $p$ Steiner vertices in $T^*_H$.
\end{proof}

The last step of the algorithm (cf.\ proof of \cref{thm:unwDirST})
is to compute an optimum solution in the graph
$H$ obtained from the input graph $G$ after exhaustively applying Reduction Rules~\ref{rr:rootNeighbors} and
\ref{rr:largeStarPathContraction}. From the resulting arborescence in $H$, we obtain an arborescence in $G$
by running the solution lifting algorithms for each reduction rule applied (in the reverse order);
the existence and correctness of the solution lifting algorithms for our reduction rules
is provided by
\cref{lem:safenessForRootNeighbors,lem:safenessForLargeStarPathContraction}.
The algorithm is summarized in~\cref{alg:unweightedDirectedST}.

\begin{algorithm}[h!]
\setcounter{AlgoLine}{0}
\SetKwInOut{Input}{input}\SetKwInOut{Output}{output}

\Input{directed graph $G = (V, A)$, terminals $R\subseteq V$, root $r\in R$,
and integer $p$}
\Output{Steiner arborescence $T\subseteq G$, if $p$ is at most the nr.\ of
terminals in the optimum}

\BlankLine
\If(\tcc*[h]{no solution with at most $p$ Steiner vertices}){$R \setminus
\ExtNei{p}(r) \neq \emptyset$} {
  \Return ``no''\;
  }
\While{Reduction Rule~\ref{rr:rootNeighbors}
or~\ref{rr:largeStarPathContraction} is applicable}{
\If(\tcc*[h]{Reduction Rule~\ref{rr:rootNeighbors}}){there is an arc from $r$
to
$v\in R$} {
	Contract the arc~$(r,v)$, and declare the resulting vertex the new root.
}
\If(\tcc*[h]{Reduction Rule~\ref{rr:largeStarPathContraction}}){\mbox{there
exists $s\in V\setminus R$ with $s\in\ExtNei{p}(r)$ and
$\left|\ExtNei{0}(s)\right| \ge p / \epsilon$}} {
	Find an $r\to s$ path $P$ with at most $p$ Steiner vertices.
	Contract the subgraph of $G$ induced by $\ExtNei{0}(s)$ and $P$, and
declare the resulting vertex
the new root.
}
}

\If(\tcc*[h]{no solution with at most $p$ Steiner vertices}){$|R| > p^2 /
\varepsilon$} {
  \Return ``no''\;
}
Run the \FPT algorithm of~\cite{DBLP:conf/icalp/Nederlof09}; let $T$ be the
returned solution.\nllabel{algDir:FPTdirected}

In the reverse order of application of Reduction Rules~\ref{rr:rootNeighbors}
and~\ref{rr:largeStarPathContraction}:

\qquad Revert the contraction of the reduction rule.

\qquad Run the solution lifting algorithm for the reduction rule on $T$.

\qquad Store the resulting arborescence in $T$.

\Return $T$\;

\caption{Algorithm for solving \dirST.
As explained earlier, all steps except line~\ref{algDir:FPTdirected} can be implemented in
polynomial time.
}
\label{alg:unweightedDirectedST}
\end{algorithm}

\begin{proof}[Proof of \cref{thm:unwDirST}]
If neither Reduction Rule~\ref{rr:rootNeighbors} nor
\ref{rr:largeStarPathContraction} is applicable and the current number of
terminals exceeds the bound $p^2 / \varepsilon$ we can return ``no''
as it follows from \cref{lem:smallInstanceAfterRR} that no optimal solution
with at most $p$ Steiner vertices exists.
If this is not the case we return an optimal solution using the
algorithm of~\cite{DBLP:conf/icalp/Nederlof09}, which runs in time $2^{|R|}
\polyn$
where $R$ is the current set of terminals with size at most $p^2 / \varepsilon$.
As explained earlier both reduction rules can be implemented in polynomial time,
together with their solution lifting algorithms.
Thus the total running time is $2^{p^2/\eps} \polyn$.
The approximation guarantee and correctness of the obtained solution follow from
\cref{lem:safenessForRootNeighbors,lem:safenessForLargeStarPathContraction}.
\end{proof}

\section{The weighted directed Steiner tree problem}

Here, we prove that the standard reduction from the \domSet problem to the \dirST
problem (with arc weights) translates into inapproximability of the latter
problem. By a recent result of~\citet{dom-set}, there is no $f(b)$-approximation
algorithm for the \domSet problem, even when parameterizing by the size $b$ of the
optimum solution, unless $\W{1} = \FPT$.

\optprobA{\domSet}
{an undirected graph $G = (V,E)$.}
{the smallest \emph{dominating set} $U\subseteq V$ for which every $v\in V$
either is in $U$ or $v$ has a neighbor in $U$.}

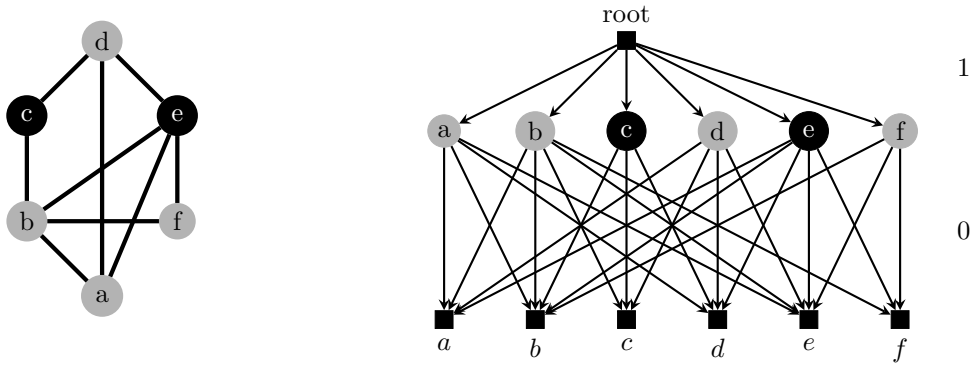
\begin{figure}[bt]
  \centering
  \usetikzlibrary{calc}

\begin{tikzpicture}
\tikzstyle{terminal}=[draw, , fill, black]
\tikzstyle{steiner}=[draw, black!30, circle , fill, text=black, inner sep=2pt]
\tikzstyle{steinerSel}=[draw, black, circle , fill, text=white, inner sep=3pt]
\tikzstyle{arrow}=[<-,thick,>=stealth]
\tikzstyle{vertex}=[draw, thick, circle, black!30, text = black, fill, inner sep=2pt]
\tikzstyle{vertexSmall}=[draw, thick, circle, black!30, text = black, fill, inner sep=3pt]
\tikzstyle{vertexSel}=[draw, thick, circle, inner sep=3pt, black, fill, text = white]
\tikzstyle{graphEdge}=[ultra thick]

\def\dist{1.2}

\begin{scope}[node distance=1.4cm, local bounding box=graph]
\node[vertexSmall] (v1) {a};
\node[vertex, above left of=v1] (v2) {b};
\node[vertexSel, above of=v2] (v3) {c};
\node[vertex, above right of=v3] (v4) {d};
\node[vertexSel, below right of=v4] (v5) {e};
\node[vertex, below of=v5] (v6) {f};

\foreach \u/\v in {1/2, 1/4, 1/5, 2/3, 2/5, 2/6, 3/4, 4/5, 5/6} {
  \draw[graphEdge] (v\u) -- (v\v);
}
\end{scope}

\begin{scope}[shift={($ (graph) + (4.5,-2) $)}]
\node[terminal, label={270:$a$}] (T1) {};
\foreach \x/\labx [remember=\x as \px (initially 1)] in {2/b, 3/c, 4/d, 5/e, 6/f} { 
  \node[terminal, label={270:$\labx$}] at ($(T\px) +(\dist,0)$) (T\x) {};
}

\node[steiner] at ($(T1) + (0,2.5)$) (S1) {a};
\foreach \i/\type/\letter [remember=\i as \pi (initially 1)] in {2//b, 3/Sel/c, 4//d, 5/Sel/e, 6//f} {
  \node[steiner\type] at ($(S\pi) + (\dist,0)$) (S\i) {\letter};
}

\foreach \u/\v in {1/1, 2/2, 3/3, 4/4, 5/5, 6/6, 1/2, 2/1, 1/4, 4/1, 1/5, 5/1, 2/3, 3/2, 2/5, 5/2, 2/6, 6/2, 3/4, 4/3, 4/5, 5/4, 5/6, 6/5} {
  \draw[arrow] (T\u) -- (S\v);
}

\node[terminal, label={90:root}] (root) at ($(S3) + (0,\dist)$) {};
\foreach \i in {1, ..., 6} {
  \draw[arrow] (S\i) -- (root);
}

\node at ($(S6) + (.7*\dist,0) + (0,.7*\dist)$)  (W1) {1};
\node at ($(W1) - (0,1.8*\dist)$) {0};

\end{scope}

\end{tikzpicture}
  \caption{An example for the reduction. A~graph $G$ with its dominating
set $U=\{c,e\}$ on the left. The corresponding instance of \dirST to the right.}
  \label{fig:domSetReduction}
\end{figure}

\begin{proof}[Proof of \cref{thm:dirST}]
We give a parameterized reduction from the \domSet problem parameterized by the
size of the solution $U$, which we denote by $b=|U|$.

For an overview of the reduction please refer to \cref{fig:domSetReduction}. Let
$G = (V,E)$ be a graph in which we are searching for the smallest dominating
set of size $b$ and let $n = |V|$ and $m = |E|$. We create an instance of \dirST
having $2n + 1$ vertices and $n + 2m$ arcs as follows. There are $n$ terminals,
each corresponding to a vertex in $V$, one auxiliary terminal (the root), and
$n$ Steiner vertices again corresponding to vertices in $V$. There are arcs of
two kinds. The first kind of arcs are of weight $1$ and connect the root to
each Steiner vertex, i.e., they are directed towards the Steiner vertices. The
second kind of arcs are of weight $0$ and connect the Steiner vertices with
the terminals, directed towards the terminals. There is an arc from each Steiner
vertex corresponding to a vertex $w\in V$ to every terminal corresponding to a
vertex $v\in V$ if $v = w$ or $v$ is a neighbor of $w$ in $G$.

Observe that there is a dominating set of size $b$ in $G$ if and only if there
is an arborescence connecting the root to all terminals of cost $b$. Note also
that this arborescence contains $b$ Steiner vertices. Thus we set the parameter
$p$ to value $b$.

Suppose that there is a parameterized $f(p)$-approximation algorithm for the
\dirST problem for parameter $p$ and a computable function $f$. Then, we
would obtain a parameterized $f(b)$-approximation algorithm for the \domSet
problem parameterized by the size $b$ of the solution. This would imply $\W{1}
= \FPT$ by~\cite{dom-set}.
\end{proof}

\section{Refuting a PSAKS for \unwDirST}
In this section, we prove that the \unwDirST problem does not admit a $(2 -
\varepsilon)$-approximate polynomial kernel for any constant $\varepsilon > 0$
unless $\NP \subseteq \coNPp$.
We use a framework for proving lower bounds on approximate polynomial kernels by~\citet{lokshtanov2017lossy} and present an $\alpha$-gap cross composition (for $\alpha = 2 - \varepsilon$).
For the composition, we need to define a polynomial equivalence.
\begin{dfn}
An equivalence relation $\equiv$ on $\Sigma^*$, where $\Sigma$ is a finite
alphabet, is called a \emph{polynomial equivalence relation} if
 \begin{enumerate}
  \item The equivalence of any $x, y \in \Sigma^*$ can be checked in time
polynomial in $|x| + |y|$.
  \item Any finite set $S \subseteq \Sigma^*$ has at most $(\max_{x \in S}
|x|)^{O(1)}$ equivalence classes.
 \end{enumerate}
\end{dfn}

Now we explain how the composition works.
Let $L \subseteq \Sigma^*$ be a language, and let $x_1,\dots,x_t \in \Sigma^*$
be strings belonging to the same class of some polynomial equivalence $\equiv$.
The composition, given $x_1,\dots x_t$, runs in time polynomial in $\sum_{i
=1}^t |x_i|$ and computes $c \in \R$ and an instance $(G,R,p)$ of the \unwDirST
problem parameterized by $p$  such that:
\begin{enumerate}
 \item If $x_i \in L$ for some $1 \leq i \leq t$, then $G$ contains a Steiner arborescence containing at most $c$ arcs.
 \item If $x_i \not \in L$ for all $1 \leq i \leq t$, then any Steiner arborescene of $G$ contains at least $\alpha \cdot c$ arcs.
 \item The parameter $p$ is bounded by a polynomial in $\log t + \max_{1 \leq i \leq t}|x_i|$.
\end{enumerate}
By the framework of \citet{lokshtanov2017lossy}, if $L$ is an $\NP$-hard
language, then the \unwDirST problem does not admit a polynomial-sized
$\alpha$\hy{}approximate kernel for parameter $p$, unless $\NP \subseteq
\coNPp$. We use the \setCover problem as the language~$L$.

\optprobA{\setCover} {A universe $U$, a set ${\cal P}$ of subsets of $U$, and a positive integer $b$.}
{A set ${\cal C} \subseteq {\cal P}$ such that $|{\cal C}| \leq b$ and $U =
\bigcup_{C
\in {\cal C}} C$.}

\smallskip

We call $b$ the \emph{budget}. Let ${\cal I}_1,\dots,{\cal I}_t$ be instances of the \setCover problem.
We define the polynomial equivalence $\equiv$ as follows.
Two \setCover instances $(U_1, {\cal P}_1, b_1)$ and $(U_2, {\cal P}_2, b_2)$ are equivalent in $\equiv$ if $|U_1| = |U_2| = n$, $|{\cal P}_1| = |{\cal P}_2| = m$ and $b_1 = b_2 = b$.
Thus, we can suppose that all instances ${\cal I}_1,\dots,{\cal I}_t$ are over the same universe $U$.
It is straightforward to verify that the relation $\equiv$ is a polynomial equivalence relation.

We can also suppose that $m$ is polynomial in $n$ and either each instance
${\cal I}_i$ has a set cover of size at most $b$ or each set cover has size at
least $\gamma b$ for arbitrary constant $\gamma$ (actually $\gamma$ can be
$O(\log n)$ but we do not need this here). By a result of
\citet{dinur2014setcover}, the \setCover problem is still $\NP$-hard in this
case.

The first step of the $\alpha$-gap cross composition is to convert each instance ${\cal I}_k$ to an instance $G_k$ of the \unwDirST problem.
The construction is similar to the reduction in the proof of~\cref{thm:dirST}.
Let ${\cal I}_k = (U, {\cal P}_k, b)$ be an instance of \textsc{Set Cover}. We
create a terminal root $r_k$ and for each $S^k_j \in {\cal P}_k$ we create a Steiner
vertex $s^k_j$.
We add a directed path of length $n$ from $r_k$ to each $s^k_j$.
Then we create a terminal $t^k_i$ for every $i \in U$ and create the incidence
graph of $U$ and~${\cal P}$, i.e., we add an arc $(s^k_j, t^k_i)$ if $i \in S^k_j$
where $S^k_j$ is the set in ${\cal P}$ corresponding to~$s^k_j$.

If ${\cal I}_k$ is a {\it yes}-instance (it has a set cover of size at most $b$), then $G_k$ has a Steiner arborescence with $bn + n = (b + 1)n$ arcs.
On the other hand, if ${\cal I}_k$ is a {\it no}-instance (each set cover has size at least $\gamma b$), then each Steiner arborescence of $G_k$ has more than $\gamma bn + n$ arcs.

Now we combine all $G_k$ into one instance $G$ of the \unwDirST problem.
First, we create a root $r$ of the instance $G$.
We connect $r$ and each vertex $r_k$ (the root of the instance $G_k$) by a directed path $P_k$ of length $d$ (the value of $d$ will be determined later).
Thus, the root $r$ has degree $t$.
Finally, we identify all terminals $t^k_i$ of all graphs $G_k$ corresponding to the
same element $i$ in $U$ into one terminal $t_i$, i.e., the graph $G$ has $n$ terminals
apart from the root.
See \cref{fig:gap_composition} for a sketch of the composition.

\begin{figure}[bt]
\centering
  \includegraphics[width=\textwidth]{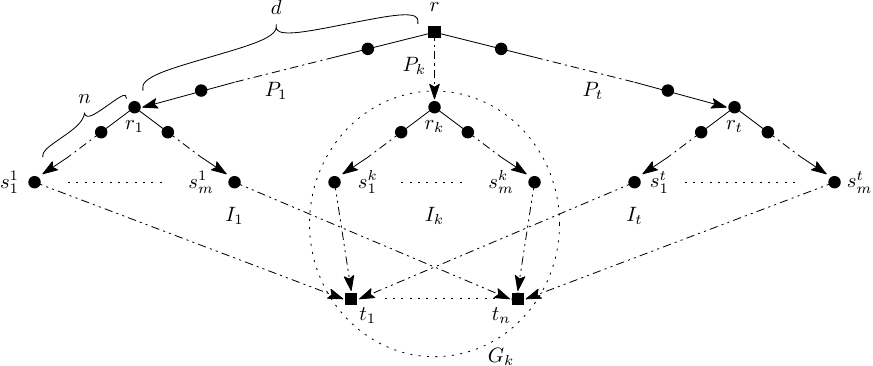}
  \caption{
  Sketch of the $(2-\varepsilon)$-gap cross composition.
  All arcs are oriented in ``top-down'' direction from the root $r$ to terminals $t_i$.
  The graph $I_k$ is an incident graph of the instance ${\cal I}_k = (U, {\cal
P}_k, b)$ of the \setCover problem.
  The graph $G_k$ is the graph $I_k$ with paths from the vertex $r_k$ to the vertices $s^k_1,\dots,s^k_m$.
  }
  \label{fig:gap_composition}
\end{figure}

\begin{lem}
\label{lem:yesInst}
If for some $k$, ${\cal I}_k$ is a {\it yes}-instance then $G$ has a Steiner arborescence with at most $d + (b+1)n$ arcs.
\end{lem}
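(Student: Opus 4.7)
My plan is to explicitly construct a Steiner arborescence of the claimed size from a set cover of ${\cal I}_k$, following the structure of the composition.

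\textbf{Construction.} Suppose ${\cal I}_k$ is a \emph{yes}-instance, so there exists ${\cal C} \subseteq {\cal P}_k$ with $|{\cal C}| \leq b$ and $\bigcup_{S^k_j \in {\cal C}} S^k_j = U$. I would build a Steiner arborescence $T \subseteq G$ rooted at $r$ as the union of the following three parts: (i) the path $P_k$ of length $d$ from $r$ to $r_k$; (ii) for each $S^k_j \in {\cal C}$, the directed path of length $n$ from $r_k$ to $s^k_j$ inside $G_k$; and (iii) for each terminal $t_i$ (with $i \in U$), a single arc $(s^k_j, t_i)$ where $S^k_j \in {\cal C}$ is some set containing $i$, which exists since ${\cal C}$ covers $U$ and hence the incidence arc $(s^k_j, t_i)$ is present in $G_k$.

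\textbf{Feasibility.} I need to verify that $T$ reaches every terminal of $G$ and is an arborescence rooted at $r$. The only terminals of $G$ besides $r$ are the vertices $t_1, \dots, t_n$ (the other $r_{k'}$ are internal vertices of $G$, not terminals; in any case $r_k$ is reached by $P_k$). Part (iii) supplies a directed $r_k$-to-$t_i$ path for every $i \in U$ by concatenating the length-$n$ path to the chosen $s^k_j$ from (ii) with the arc $(s^k_j, t_i)$, and part (i) prefixes an $r$-to-$r_k$ path. To see that $T$ is a tree (not merely a connected subgraph), note that the paths from $r_k$ to distinct $s^k_j$ are internally vertex-disjoint by construction, so each $s^k_j$ has in-degree one in $T$, and each $t_i$ is assigned exactly one incoming arc in part (iii); all other vertices of $T$ have in-degree one as interior vertices of the chosen paths.

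\textbf{Counting.} The path $P_k$ contributes $d$ arcs. Part (ii) contributes at most $bn$ arcs, since $|{\cal C}| \leq b$ and each length-$n$ path has $n$ arcs. Part (iii) contributes exactly $n$ arcs, one per terminal $t_i$. Summing gives $|A(T)| \leq d + bn + n = d + (b+1)n$, as claimed. There is no real obstacle here; the only minor subtlety is ensuring that the length-$n$ paths from $r_k$ to different sets $s^k_j$ are internally disjoint (which is how the construction is set up, otherwise the counting claim ``$G_k$ has a Steiner arborescence with $bn + n$ arcs'' stated just before the lemma would itself fail).
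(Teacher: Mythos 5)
Your proposal is correct and is essentially the same construction as the paper's own proof: take the path $P_k$ ($d$ arcs), the $\leq b$ length-$n$ paths from $r_k$ to the Steiner vertices corresponding to the cover ($\leq bn$ arcs), and one covering arc per terminal ($n$ arcs), giving $d+(b+1)n$. Your extra check that the result is an arborescence (in-degrees one, internally disjoint paths) is a harmless elaboration of what the paper leaves implicit.
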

\begin{proof}
Let ${\cal C}$ be a set cover of ${\cal I}_k$ of size at most $b$.
The arborescence $T$ contains the path $P_k$ from $r$ to $r_k$, thus it contains $d$ arcs.
Let $S$ be Steiner vertices in $G_k$ corresponding to the sets in ${\cal C}$.
We add to $T$ all the paths from $r_k$ to Steiner vertices in $S$;
as $|S|\le |{\cal C}| \le b$, these paths have at most  $bn$ arcs.
Since ${\cal C}$ is a set cover, there are $n$ arcs from Steiner vertices in $S$ to all terminals of $G$ and we add them to $T$.
Thus, $T$ connects the root $r$ of $G$ to all the terminals of $G$ and it has
$d + (b+1)n$ arcs.
\end{proof}

\begin{lem}
\label{lem:cheating}
Let $T$ be a Steiner arborescence of $G$.
Suppose $T$ contains two distinct paths $P_i$ and~$P_j$.
Then, $T$ has at least $2d + 3n$ arcs.
\end{lem}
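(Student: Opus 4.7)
My plan is to lower-bound $|T|$ by partitioning its arcs into three pairwise arc-disjoint groups: (a) the arcs of the two prescribed paths $P_i$ and $P_j$; (b) the arcs on the unique length-$n$ paths from each $r_k$ to the Steiner vertices of $G_k$ that appear in $T$; and (c) the arcs entering non-root terminals. By the construction of $G$ (the $P_k$'s, the length-$n$ paths, and the incidence arcs all live on disjoint arc sets) these three groups are indeed pairwise arc-disjoint, so their counts simply add.

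Group (a) contributes exactly $2d$ arcs. Group (c) contributes exactly $n$, one per non-root terminal, since each $t_\ell$ has in-degree $1$ in the arborescence $T$ and its in-neighbor is necessarily a Steiner vertex $s^k_a$ with $\ell\in S^k_a$. For group (b), whenever such an $s^k_a$ lies in $T$, the unique length-$n$ path from $r_k$ to $s^k_a$ in $G$ must lie in $T$ too: terminals are sinks in $G$ so there is no alternative $r_k\to s^k_a$ route, and root-to-vertex paths in an arborescence are unique. These length-$n$ paths share only $r_k$ within a single $G_k$ and are completely disjoint across different $G_k$'s, so writing $u_k$ for the number of Steiner vertices of $G_k$ appearing as the predecessor of some terminal in $T$, group (b) contributes exactly $\bigl(\sum_k u_k\bigr)\cdot n$ arcs.

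The key step is to argue $\sum_k u_k\ge 2$. Both $r_i$ and $r_j$ lie in $T$ because $P_i,P_j\subseteq T$. If each of the two subtrees of $T$ rooted at $r_i$ and $r_j$ reaches some non-root terminal, then $u_i\ge 1$ and $u_j\ge 1$, and we are done. Otherwise, say the subtree at $r_j$ contains no terminal; then all $n$ terminals are served through $r_i$ or through another $r_k$ with $P_k\subseteq T$. In the latter sub-case $P_k\subseteq T$ contributes another $d$ arcs to group (a) plus at least one extra length-$n$ path in group (b), strictly exceeding the bound. In the former sub-case all terminals are served via $r_i$ alone, which forces $u_i$ to be at least the minimum set-cover size of the instance ${\cal I}_i$; by the standard preprocessing assumption that no single set of a \setCover instance covers the entire universe, this minimum is at least $2$, so $u_i\ge 2$.

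Combining the three groups yields $|T|\ge 2d + 2n + n = 2d + 3n$, as required. The main obstacle is the corner case where one of $P_i,P_j$ is ``wasted'' in the sense that its endpoint has no terminal descendant in $T$; resolving it cleanly requires either the standard non-trivial-cover assumption on the composed \setCover instances or the observation that a third path $P_k$ would then have to be present, itself injecting an extra $d$ arcs that more than compensate.
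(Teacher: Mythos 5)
Your counting is essentially the paper's: $2d$ arcs for the two prescribed paths, one length-$n$ path inside each block that actually serves a terminal, and $n$ arcs entering the terminals, so in your main case ($u_i\ge 1$ and $u_j\ge 1$) your argument and the paper's are the same. Where you differ is that you explicitly treat the corner case in which one of $P_i,P_j$ is a dead end; the paper's two-line proof silently skips this (its assertion that $T$ contains $n$ arcs from $r_j$ to some Steiner vertex of $G_j$ is precisely the claim that the branch at $r_j$ is used). Your resolution of that case, however, imports hypotheses the paper does not state: the ``no single set covers the universe'' preprocessing assumption is nowhere made (a composed \emph{yes}-instance may contain a universal set), and your third-path sub-case gives $3d+2n\ge 2d+3n$ only when $d\ge n$, which holds for the eventual choice $d=n(\gamma b-2)$ but is not part of the lemma as stated, where $d$ is still unspecified. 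Within the paper's setup the cleaner way to dispose of the corner case is to note that the lemma is only invoked in \cref{crl:noInst}, where all ${\cal I}_k$ are \emph{no}-instances and hence every set cover has size at least $\gamma b\ge 2$, so $u_i\ge 2$ in your problematic sub-case; equivalently, one may restrict attention to inclusion-minimal arborescences, whose leaves are all terminals, so that both branches are used. With either of these observations your write-up is a correct, and in fact more careful, version of the paper's proof.
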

\begin{proof}
The paths $P_i$ and $P_j$ are edge disjoint and each contains $d$ arcs.
Further, $T$ contains at least $n$ arcs from $r_i$ to some Steiner vertex in $G_i$ and at least $n$ arcs from $r_j$ to some Steiner vertex in $G_j$.
Finally, we still need $n$ arcs to connect the terminals and we get
 \[
  |E(T)| \geq 2d + 3n.
 \]
\end{proof}

\begin{lem}
\label{lem:honest}
Let $T$ be a Steiner arborescence of $G$ such that $T$ contains only one path $P_k$.
If all instances ${\cal I}_1,\dots,{\cal I}_t$ are {\it no}-instances, then any Steiner arborescence of $G$ has at least $d + n(\gamma b + 1)$ arcs.
\end{lem}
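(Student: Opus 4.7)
The plan is to use the assumption that $T$ contains only the path $P_k$ to force $T$ to effectively solve the set cover instance $\mathcal{I}_k$, and then count arcs.

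First I would argue that under this assumption, the only Steiner vertices of the form $s^i_j$ used by $T$ belong to $G_k$. This comes from the directions of the arcs: every arc in $G$ is either (i) inside some path $P_i$ and oriented from $r$ toward $r_i$, (ii) on one of the length-$n$ paths from $r_k$ to $s^k_j$ and oriented away from $r_k$, or (iii) from a Steiner vertex $s^i_j$ to a terminal $t_i$. Terminals have no outgoing arcs, so to reach $s^i_j$ with $i \neq k$ the arborescence would have to pass through $r_i$, but the only $r \to r_i$ path in $G$ is $P_i$. Since $T$ contains only $P_k$ among the paths $P_1,\dots,P_t$, no vertex of $G_i$ for $i\neq k$ is reachable in $T$.

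Next I would identify the set cover. Let $S \subseteq \mathcal{P}_k$ be the collection of sets $S^k_j$ whose corresponding Steiner vertex $s^k_j$ appears in $T$. Every terminal $t_i$ of $G$ lies in $T$ (because $T$ is a Steiner arborescence) and its only possible incoming arc in $T$ comes from some $s^k_j$ with $i \in S^k_j$, by the previous paragraph. Hence $S$ covers every element of $U$, and because $\mathcal{I}_k$ is a \emph{no}-instance we get $|S| \geq \gamma b$.

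Finally I would count the arcs of $T$. The path $P_k$ contributes $d$ arcs. Each selected Steiner vertex $s^k_j \in S$ is reached from $r_k$ only through the dedicated length-$n$ path attached to it in the construction, and these paths are internally vertex-disjoint across different $j$, so they contribute at least $|S|\cdot n \geq \gamma b n$ arcs to $T$. Finally, each of the $n$ terminals $t_i$ has exactly one incoming arc in the arborescence, yielding $n$ additional arcs. Summing gives $|E(T)| \geq d + \gamma b n + n = d + n(\gamma b + 1)$, as required. The only subtle step is the reachability argument in the first paragraph; everything else is a straightforward accounting once we know $S$ is a set cover.
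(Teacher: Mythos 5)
Your proof is correct and takes essentially the same route as the paper: the paper removes $P_k$ from $T$, observes that what remains is a Steiner arborescence of $G_k$, and invokes the no-instance lower bound of more than $\gamma bn + n$ arcs for $G_k$, then adds the $d$ arcs of $P_k$. You simply make explicit the two ingredients the paper leaves implicit, namely the reachability argument showing $T$ cannot enter any $G_i$ with $i\neq k$, and the set-cover counting behind the $n(\gamma b+1)$ bound inside $G_k$.
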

\begin{proof}
Let $T'$ be an arborescence we get from $T$ when we remove the path $P_k$.
Since $V(P_k) \cap V(T') = \{r_k\}$, the arborescence $T'$ is a Steiner arborescence of $G_k$.
Thus, the arborescence $T'$ has at least $n(\gamma b + 1)$ arcs, because the
instance ${\cal I}_k$ is a {\it no}-instance. Adding the $d$ edges of $P_k$, we
obtain the claimed bound.
\end{proof}

Now we calculate the value of $d$.
We set $d$ large enough so that Steiner arborescences which contain more than
one path $P_k$ are bigger than Steiner arborescences which contain only one
such path.
Formally, by the above two lemmas we want
\[
2d + 3n \geq d + n(\gamma b + 1).
\]
Thus, we set $d = n (\gamma b - 2)$ and we get the following corollary of \cref{lem:cheating} and \cref{lem:honest}.

\begin{crl}
\label{crl:noInst}
If all instances ${\cal I}_1,\dots,{\cal I}_t$ are {\it no}-instances, then each Steiner arborescence of $G$ has at least $n(2\gamma b - 1)$ arcs.
\end{crl}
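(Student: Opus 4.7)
The plan is a straightforward case analysis driven by the number of root-to-$r_k$ paths contained in the arborescence. First I would observe that in the constructed graph $G$, the root $r$ has no neighbors other than the first vertex of each path $P_k$; thus any Steiner arborescence $T$ of $G$ must fully contain at least one of the paths $P_1, \ldots, P_t$ (each $P_k$ has only one possible orientation from $r$ to $r_k$, so once $T$ uses the first arc of $P_k$, it is forced to include all of $P_k$ on its way toward any terminal of $G_k$).

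Having established that at least one path $P_k$ is always used, I would split into two cases. In the first case, $T$ contains exactly one such path $P_k$. Then \cref{lem:honest} applies (using that every ${\cal I}_i$ is a no-instance), yielding
\[
|E(T)| \;\geq\; d + n(\gamma b + 1) \;=\; n(\gamma b - 2) + n(\gamma b + 1) \;=\; n(2\gamma b - 1),
\]
after substituting the chosen value $d = n(\gamma b - 2)$. In the second case, $T$ contains two distinct paths $P_i$ and $P_j$; here \cref{lem:cheating} gives
\[
|E(T)| \;\geq\; 2d + 3n \;=\; 2n(\gamma b - 2) + 3n \;=\; n(2\gamma b - 1).
\]
Both cases yield the same bound, which proves the corollary.

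There is essentially no obstacle: the choice $d = n(\gamma b - 2)$ was made precisely so that the bounds from \cref{lem:cheating} and \cref{lem:honest} match. The only subtlety worth spelling out is the initial observation that every Steiner arborescence must use at least one complete $P_k$, which is immediate from the construction since the paths $P_1, \ldots, P_t$ are the only way out of $r$ and each is internally disjoint from the rest of $G$ except at its endpoint $r_k$.
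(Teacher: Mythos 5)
Your proof is correct and follows exactly the paper's route: the corollary is obtained by combining \cref{lem:cheating} and \cref{lem:honest} after substituting the chosen value $d = n(\gamma b - 2)$, which was picked precisely to make the two bounds coincide at $n(2\gamma b - 1)$. Your added remark that every Steiner arborescence must contain at least one complete path $P_k$ is a useful explicit justification of the case split that the paper leaves implicit.
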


\begin{obs}
The graph $G$ has a Steiner arborescence $T$ with at most $d + n^2 + n$ Steiner vertices.
\end{obs}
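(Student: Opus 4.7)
The plan is to exhibit an explicit Steiner arborescence and count its Steiner vertices. Fix any index $k$ (say $k=1$) and assume without loss of generality that the universe $U$ is covered by $\mathcal{P}_k$ (otherwise $\mathcal{I}_k$ is trivially a no-instance with no relevant structure, but the construction below still gives an arborescence as long as such sets exist; any reasonable \setCover instance has this property).

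First I would take the directed path $P_k$ of length $d$ from the root $r$ to $r_k$. Since $r$ is the only terminal on $P_k$, this contributes exactly $d$ Steiner vertices (the $d-1$ internal vertices of $P_k$ plus $r_k$ itself, which is not a terminal in $G$). Next, for each element $i \in U$, I would fix one set $S^k_{j(i)} \in \mathcal{P}_k$ containing $i$ and add the length-$n$ path from $r_k$ to the Steiner vertex $s^k_{j(i)}$, together with the arc $(s^k_{j(i)}, t_i)$. The arc $(s^k_{j(i)}, t_i)$ exists by the construction of the incidence graph, and since we chose one set per element, the resulting subgraph spans all $n$ terminals $t_i$. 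It is straightforward to verify that the result is an arborescence rooted at $r$: each terminal $t_i$ has in-degree one in the constructed subgraph, and the $r_k$-to-$s^k_{j(i)}$ paths are internally vertex-disjoint for distinct $j(i)$ (they are the parallel paths created by the composition), so no vertex has in-degree greater than one.

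The bookkeeping for Steiner vertices is then direct. The path $P_k$ contributes $d$ Steiner vertices as above. Each of the at most $n$ distinct $r_k$-to-$s^k_{j(i)}$ paths contributes $n-1$ fresh internal Steiner vertices plus the endpoint $s^k_{j(i)}$, i.e., at most $n$ Steiner vertices per path, for a total of at most $n^2$ Steiner vertices across all chosen paths. Since $r_k$ has already been counted inside $P_k$ and the terminals $t_i$ are not Steiner, summing yields at most $d + n^2 \le d + n^2 + n$ Steiner vertices in $T$.

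I do not expect any real obstacle here; the argument is a one-shot construction. The only care needed is to ensure I do not double-count $r_k$ (shared between $P_k$ and each path leaving $r_k$) and to observe that picking one set per element of $U$ is always sufficient to cover $U$ in a Steiner-arborescence sense, which is exactly what keeps the number of $r_k$-to-Steiner paths used to at most $n$ rather than $m$.
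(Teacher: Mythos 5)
Your construction is exactly the paper's proof, just spelled out in more detail: take the path $P_k$ (contributing $d$ Steiner vertices) and the arborescence in $G_k$ arising from the trivial cover that picks one set per element, which costs at most $n$ Steiner vertices per chosen $r_k$-to-$s^k_j$ path, hence at most $n^2$ more. Your bookkeeping (not double-counting $r_k$, terminals not counted) is correct, so the bound $d+n^2\le d+n^2+n$ follows as in the paper.
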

\begin{proof}
We take a path $P_k$ from $r$ into an arbitrary vertex $r_k$ (with $d$ Steiner
vertices) and an arbitrary Steiner arborescence in $G_k$ (with at most $n^2 + n$
Steiner vertices---from a trivial set cover when each element is covered by its
own set).
\end{proof}

Thus, our parameter $p$ of $G$ (the number of Steiner vertices in the optimum)
is bounded by a polynomial in $n$, as $d = n (\gamma b - 2)$ and $b\leq n$.
If there is a {\it yes}-instance among ${\cal I}_1,\dots,{\cal I}_t$, then by \cref{lem:yesInst} we know that the optimal Steiner arborescence of $G$ has at most $d + (b+1)n = n\bigl((\gamma + 1)b - 1\bigr)$ arcs.
If there are {\it no}-instances among ${\cal I}_1,\dots,{\cal I}_t$ only, then
by \cref{crl:noInst} the optimal Steiner arborescence of $G$ has at least $n(2
\gamma b - 1)$ arcs.
This means that
\[
 \frac{n(2\gamma b - 1)}{n\bigl((\gamma + 1)b - 1\bigr)} \geq 2 - \varepsilon
\]
for $\gamma$ large enough.
Thus, for any constant $\varepsilon > 0$ we created a $(2 - \varepsilon)$-gap cross composition from the \setCover problem to the \unwDirST
problem parameterized by the number of Steiner vertices in the optimum.
This refutes the existence of polynomial-sized $(2 - \varepsilon)$-approximate
kernels for this problem, unless $\NP \subseteq \coNPp$, and proves
\cref{thm:no-PSAKS-unwDirST}.

\section{Conclusions and open problems}
Recently, it was shown that contracting stars not only leads to parameterized
approximation schemes for the \ST problem, as outlined in this paper, but also
behaves well in practical computations~\cite{paceImplementation,
StarContractionHeuristic}. In fact, this idea was used as a heuristic, which
significantly improves approximations of minimum spanning trees. The
implementation of this idea together with only a few additional heuristics was
awarded the 4th place (out of 24) in the PACE challenge 2018 in the very
competitive Track~C~\cite{paceReport}.

From our theoretical work, we leave the following open problems:
\begin{itemize}
 \item The runtimes of our approximation schemes may be improvable. In
particular, we conjecture that a linear dependence on our parameter $p$ should
suffice in the exponent of both algorithms in \cref{thm:ST,thm:unwDirST}. It 
would also be very interesting to obtain runtime lower
bounds for our approximation schemes under some reasonable complexity 
assumption.
 \item Given that we obtain a \PSAKS for the \ST problem, but not for the
\unwDirST problem (even though we show an \EPAS for each of them), one
remaining open question is what the best approximation ratio obtainable by a
polynomial-sized kernel is for the latter. Namely, is there a 
polynomial-sized $\alpha$-approximate kernel for \unwDirST for some constant 
$\alpha\geq 2$?
\item As mentioned in \cref{sec:related}, a parameterized approximation scheme
and a \PSAKS exist for the \textsc{Bidirected Steiner Network} problem with
planar optimum~\cite{chitnis2017parameterized} for parameter~$|R|$. The \PSAKS
uses a generalization of the \PSAKS for \ST with parameter~$|R|$ by
\citet{lokshtanov2017lossy}. Hence, it is natural to ask whether or not this is also
the case for our parameter~$p$, i.e., whether or not there is a parameterized approximation
scheme and/or a \PSAKS for \textsc{Bidirected Steiner Network} with planar
optimum when parameterized by $p$.
\end{itemize}

\paragraph*{Acknowledgments}
We would like to thank Michael Lampis and \'Edouard Bonnet for helpful
discussions on the problem.
Also, we thank Ji\v{r}\'{i} Sgall and Martin B\"{o}hm for finding a mistake in
a preliminary version of the proof of \cref{lem:UB-numOfNotChargeable}.

\bibliographystyle{hplainnat} %
\bibliography{../src/papers,../src/lit}
\end{document}